\crefname{equation}{equation}{equations}
\crefname{lemma}{lemma}{lemmata}
\crefname{claim}{claim}{claims}
\crefname{theorem}{theorem}{theorems}
\crefname{proposition}{proposition}{propositions}
\crefname{corollary}{corollary}{corollaries}
\crefname{claim}{claim}{claims}
\crefname{remark}{remark}{remarks}
\crefname{definition}{definition}{definitions}
\crefname{fact}{fact}{facts}
\crefname{question}{question}{questions}
\crefname{condition}{condition}{conditions}
\crefname{algorithm}{algorithm}{algorithms}
\crefname{assumption}{assumption}{assumptions}
\crefname{problem}{problem}{problems}
\newtheorem{theorem}{Theorem}[section]
\newtheorem{lemma}[theorem]{Lemma}
\newtheorem{corollary}[theorem]{Corollary}
\newtheorem{claim}[theorem]{Claim}
\newtheorem{definition}[theorem]{Definition}
\newtheorem{fact}[theorem]{Fact}
\theoremstyle{definition}
\newcommand{\cons}{\text{\cons}}
\newcommand{\eps}{\epsilon}
\newcommand{\poly}{\mathrm{poly}}
\def\R{\mathbb R}
\newcommand{\cA}{\mathcal{A}}
\newcommand{\cD}{\mathcal{D}}
\newcommand{\cL}{\mathcal{L}}
\newcommand{\cN}{\mathcal{N}}
\newcommand{\cO}{\mathcal{O}}
\newcommand{\Paren}[1]{\left(#1\right)}
\newcommand{\Brac}[1]{\left[#1\right]}
\newcommand{\abs}[1]{\lvert#1\rvert}
\newcommand{\Abs}[1]{\left\lvert#1\right\rvert}
\newcommand{\norm}[1]{\lVert#1\rVert}
\newcommand{\median}{\mathrm{median}}
\DeclareMathOperator*{\pr}{\mathbf{Pr}}
\DeclareMathOperator*{\E}{\mathbf{E}}
\newcommand{\tP}{\tilde P}
\let\vec\mathbf
\newcommand{\wh}{\widehat}
\def\colorful{0}
\newcommand{\new}[1]{{\color{red} #1}}
\newcommand{\new}[1]{{#1}}
\title{First Order Stochastic Optimization \\with Oblivious Noise}
\author{%
  Ilias Diakonikolas\\
  Department of Computer Sciences\\
  University of Wisconsin-Madison\\
  \texttt{ilias@cs.wisc.edu} \\
  \And
  Sushrut Karmalkar \\
  Department of Computer Sciences \\
  University of Wisconsin-Madison \\
  \texttt{skarmalkar@wisc.edu} \\
  \AND
  Jongho Park \\
  KRAFTON \\
  \texttt{jongho.park@krafton.com} \\
  \And
  Christos Tzamos \\
  Department of Informatics \\
  University of Athens \\
  \texttt{tzamos@wisc.edu} \\
}
\begin{document}

\maketitle

\begin{abstract}
We initiate the study of stochastic optimization with oblivious noise, 
broadly generalizing the standard heavy-tailed noise setup.
In our setting, in addition to random observation noise, 
the stochastic gradient 
may be subject to independent \emph{oblivious noise}, 
which may not have bounded moments and is not necessarily centered. 
Specifically, we assume access to a noisy oracle for the stochastic gradient of $f$ 
at $x$,  which returns a vector $\nabla f(\gamma, x) + \xi$, where $\gamma$ is 
the  bounded variance observation noise 
and $\xi$ is the oblivious noise that is independent of $\gamma$ and $x$. 
The only assumption we make on the oblivious noise $\xi$ 
is that $\pr[\xi = 0] \ge \alpha$ for some $\alpha \in (0, 1)$.
In this setting, it is not information-theoretically possible to recover a single solution 
close to the target when the fraction of inliers $\alpha$ is less than $1/2$. 
Our main result is an efficient {\em list-decodable} learner that recovers 
a small list of candidates, at least one of which is close to the true solution. 
On the other hand, if $\alpha = 1-\eps$, where $0< \eps< 1/2$ is sufficiently small
constant, the algorithm recovers a single solution.
Along the way, we develop a rejection-sampling-based algorithm to 
perform noisy location estimation, which may be of independent interest. 
\end{abstract}

\section{Introduction}

A major challenge in modern machine learning systems is to perform inference in the presence of outliers. 
Such problems appear in various contexts, such as analysis of real datasets with natural outliers, 
e.g., in biology~\citep{RP-Gen02, Pas-MG10, Li-Science08}, 
or neural network training,
where heavy-tailed behavior arises from stochastic gradient descent 
when the batch size is small and when the step size is 
large~\citep{hodgkinson2021multiplicative, gurbuzbalaban2021heavy}.
In particular, when optimizing various neural networks, 
there exists strong empirical evidence 
indicating that gradient noise frequently displays heavy-tailed behavior, 
often showing characteristics of unbounded variance. 
This phenomenon has been observed for
fully connected and convolutional neural networks~\citep{simsekli2019tail, gurbuzbalaban2021fractional} 
as well as attention models~\citep{zhang2020adaptive}. 
Hence, it is imperative to develop robust optimization methods for machine learning in terms of both performance and security.

In this paper, we study robust first-order stochastic optimization 
under heavy-tailed noise, which may not have any bounded moments.
Specifically, 
given access to a noisy gradient oracle for $f(x)$ (which we describe later),
the goal is to find a \textit{stationary point} of the objective function $f : \R^d \rightarrow \R$, 
$f(x) := \E_\gamma [f(\gamma, x)]$, 
where $\gamma$ is drawn with respect to an arbitrary distribution on $\R^d$. 

Previous work on $p$-heavy-tailed noise often considers the case where one has access to 
$\nabla f(\gamma, x)$ such that 
$\E_\gamma [\nabla f(\gamma, x)] = \nabla f(x)$ and 
$\E_\gamma[\norm{\nabla f(\gamma, x)-\nabla f(x)}^p] \leq \sigma^p$
for $p = 2$ \citep{nazin2019algorithms, gorbunov2020stochastic}, 
and recently for $p \in (1, 2]$~\citep{cutkosky2021high, sadiev2023high}.
In contrast, \cite{DiakonikolasKKLSS2018sever} investigates a more challenging noise model 
in which a small proportion of the sampled functions $f(\gamma, x)$ 
are arbitrary adversarial outliers. 
In their scenario, they have full access to the specific 
functions being sampled and can eventually eliminate all but the least harmful 
outlier functions. 
Their algorithm employs robust mean estimation techniques to 
estimate the gradient and filter out the outlier functions.
Note that their adversarial noise model requires that at least half of the samples are inliers. 

This work aims to relax the distributional assumption of heavy-tailed noise 
such that the fraction of outliers can approach one, and do not have any bounded moment constraints
while keeping optimization still computationally tractable.
Thus, we ask the following natural question:
\begin{center}
{\em 
What is the weakest noise model for optimization in which
efficient learning is possible,\\
while allowing for strong corruption of almost all samples?
}
\end{center}

In turn, we move beyond prior heavy-tailed noise models with bounded moments 
and define a new noise model for first-order optimization, 
inspired by the oblivious noise model studied in the context of regression~\citep{BhatiaJK15, d2020regress}.
We consider the setting in which the gradient oracle returns
noisy gradients whose mean may not be $\nabla f(x)$, and in fact, may not even exist. 
The only condition on our additive noise distribution of $\xi$ is that with probability at least $\alpha$,
oblivious noise takes the value zero. Without this, the optimization problem would become intractable.
Notably, oblivious noise can capture any tail behavior that arises from independent $p$-heavy-tailed noise.

We now formally define the oblivious noise oracle below.

\begin{definition}[Oblivious Noise Oracle]
\label{def:oblivious_noise_oracle}
We say that an oracle $\cO_{\alpha, \sigma, f}(x)$ is an oblivious-noise oracle,
if, when queried on $x$, the oracle returns $\nabla f(\gamma, x) + \xi$ 
where $\gamma$ is drawn from some arbitrary distribution $Q$ and $\xi$ is drawn from $D_\xi$ independent of $x$ and $\gamma$, satisfying  $\Pr_{\xi \sim D_\xi}[\xi = 0] \geq \alpha$. The distribution $Q$ satisfies $\E_{\gamma \sim Q}[\nabla f(\gamma, x)] = \nabla f(x)$ 
and $\E_{\gamma \sim Q}[\norm{\nabla f(\gamma, x) - \nabla f(x)}^2] \leq \sigma^2$.
\end{definition} 

Unfortunately, providing an approximate stationary point 
$\widehat x$ such that $\|\nabla f(\widehat x)\| \le \eps$,  
for some small $\eps > 0$, is information-theoretically impossible when $\alpha \le 1/2$. 
To see this, consider the case where $\Pr_\gamma [\gamma = 0] = 1$
and the function $f(0, x) = x^2$. 
If $\xi$ follows a uniform distribution over the set $\{-2, 0, 2\}$,
when we query the oracle for a gradient at $x$, we will observe a uniform distribution
over $\{2x-2, 2x, 2x+2\}$.
This scenario cannot be distinguished from a similar situation where the objective is to minimize 
$f(0, x) = (x-1)^2$, and $\xi$ follows a uniform distribution over $\{0, 2, 4\}$.

To address this challenge, we allow our algorithm to work in the \textit{list-decodable} setting, implying that it may return 
a small list of candidate solutions
such that one of its elements $\widehat x$ satisfies $\|\nabla f(\widehat x)\| \le \eps$.
We now define our main problem of interest.

\begin{definition}[List-Decodable Stochastic Optimization]\label{def:GNM_prob}
The problem of
List-Decodable Stochastic Optimization with oblivious noise is defined as follows:
For an $f$ that is $L$-smooth and has a global minimum, given access to $\cO_{\alpha, \sigma, f}(\cdot)$ as defined in \Cref{def:oblivious_noise_oracle}, 
the goal is to output a list $\cL$ of size $s$ 
such that $\min_{\widehat x \in \cL} \norm{ \nabla f(\widehat x) } \leq \eps$. 
\end{definition}

\subsection{Our Contributions}

As our main result, we demonstrate the equivalence between 
list-decodable stochastic optimization with oblivious noise
and list-decodable mean estimation (LDME), which we define below.

\begin{definition}[List-Decodable Mean Estimation]\label{def:LDME_alg}
Algorithm $\cA$ is an $(\alpha, \beta, s)$-LDME
algorithm for $\cD$ (a set of candidate inlier distributions) 
if with probability $1-\delta_{\cA}$,  
it returns a list $\cL$ of size $s$ 
such that 
$\min_{\hat \mu \in \cL} \norm{\hat \mu - \E_{x \sim D}[x]} \leq \beta$ for $D \in \cD$
when given $m_\cA$ samples $\{z_i + \xi_i \}_{i=1}^{m_\cA}$
for $z_i \sim D$ and $\xi_i \sim D_\xi$ where $\Pr_{\xi \sim D_\xi}[\xi = 0] \geq \alpha$.
If $1-\alpha$ is a sufficiently small constant less than $1/2$, then $s = 1$. 
\end{definition}

We define $\cD_\sigma$ to be the following set of distributions 
over $\R^d$: $\cD_\sigma := \{ D \mid \E_{D}[\norm{x - \E_D[x]}^2] \leq \sigma^2\}$.
We also use $\tilde O(\cdot)$ to hide all log factors in $d, 1/\alpha, 1/\eta, 1/\delta$, where
$\delta$ denotes the failure probability and $\eta$ is a multiplicative parameter we use in our algorithm.

Our first theorem shows that if there exists 
an efficient algorithm for the problem 
of list-decodable mean estimation
and an inexact learner (an optimization method using inexact gradients), 
there is an efficient algorithm for list-decodable stochastic optimization. 

\begin{theorem}[List-Decodable Stochastic Optimization $\rightarrow$ List-Decodable Mean Estimation]
\label{thm:LDSO_reduce_to_LDME_informal}
Suppose that for any $f$ having a global minimum and being $L$-smooth, the algorithm $\cA_G$,
given access to $g_x$ satisfying $\norm{g_x - \nabla f(x)} \leq O(\eta \sigma)$,
recovers $\hat x$ satisfying $\norm{\nabla f(\hat{x}) } \leq O(\eta \sigma)+\eps$ in time $T_G$.
Let $\cA_{ME}$ be an $(\alpha, O(\eta \sigma), s)$-LDME algorithm 
for $\cD_\sigma$.
Then, there exists an algorithm (\Cref{alg:main_algo}) which uses $\cA_{ME}$ and $\cA_G$,
makes \new{ $m = m_{\cA_{ME}} + \tilde O(T_G \cdot (O(1)/\alpha)^{2/\eta}/\eta^5)$ }
queries to $\cO_{\alpha, \sigma, f}$, 
\new {runs in time $T_G\cdot\poly(d, 1/\eta, (O(1)/\alpha)^{1/\eta}, \log(1/\delta \eta \alpha))$, }
and returns a list $\cL$ of $s$ candidates  satisfying
$\min_{x \in \cL} \norm{ \nabla f(x)} \leq O(\eta \sigma) + \eps$
with probability $1-\delta_{\cA_{ME}}$.
\end{theorem}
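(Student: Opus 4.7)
The plan is to reduce the problem to a single call of $\cA_{ME}$ at a fixed reference point, followed by $s$ parallel runs of $\cA_G$ (one per list element) each fed a purpose-built inexact gradient oracle. Concretely, I would first query $\cO_{\alpha,\sigma,f}$ many times at some fixed $x_0$ and run $\cA_{ME}$ on the samples, yielding a list $L_0=\{\hat g_1^{(0)},\ldots,\hat g_s^{(0)}\}$ in which some $\hat g_{i^*}^{(0)}$ satisfies $\|\hat g_{i^*}^{(0)}-\nabla f(x_0)\|\le O(\eta\sigma)$. Because the oblivious noise $\xi$ is independent of $x$ and identically distributed across queries, each $\hat g_i^{(0)}$ implicitly identifies a specific high-mass atom $c_i$ of $D_\xi$ through $\hat g_i^{(0)}\approx\nabla f(x_0)+c_i$; the correct branch corresponds to $c_{i^*}\approx 0$. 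I would also retain the empirical sample distribution at $x_0$ to serve as a noise template for the subsequent location estimation step.

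Next, for each $i\in[s]$ I would build an inexact gradient oracle $G^{(i)}$ via a rejection-sampling-based noisy location estimator. When $\cA_G$ queries $G^{(i)}$ at a new point $x$, the subroutine draws fresh samples from $\cO_{\alpha,\sigma,f}(x)$ and uses the $x_0$-template together with the anchor $\hat g_i^{(0)}$ to extract a sub-sample enriched in inliers whose oblivious-noise value is $c_i$; the mean of the sub-sample approximates $\nabla f(x)+c_i$ to within $O(\eta\sigma)$. Iterating the rejection step roughly $O(1/\eta)$ times boosts the effective inlier fraction from $\alpha$ toward $1-\eta$, producing the $(O(1)/\alpha)^{2/\eta}$ factor per query and, after a union bound over the $T_G$ queries $\cA_G$ makes, the overall sample budget claimed. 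On branch $i^*$ the oracle supplies estimates within $O(\eta\sigma)$ of $\nabla f(x)$ at every query, so by hypothesis $\cA_G$ returns $\hat x_{i^*}$ with $\|\nabla f(\hat x_{i^*})\|\le O(\eta\sigma)+\eps$. Other branches may produce arbitrary points (the corresponding shifted gradient field $\nabla f(x)+c_i$ need not come from a function with a global minimum), but this is harmless since the list has size $s$ and we only need one good element.

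The main obstacle is the analysis of the rejection-sampling-based noisy location estimator that implements each $G^{(i)}$. I would need to establish that rejecting against the finitely-sampled, inlier-smeared $x_0$-template still isolates the correct cluster at a query point $x$ whose target lies at an a priori unknown shift, and that $O(1/\eta)$ rounds of rejection suffice to drive the residual error down to $O(\eta\sigma)$ without exceeding the claimed $(O(1)/\alpha)^{2/\eta}/\eta^5$ sample complexity per call. The list-decodability of $\cA_{ME}$ is what makes this possible: it is precisely what disambiguates the distinct high-mass atoms of $D_\xi$ so that each branch can be pinned to a single atom. Once that noisy-location-estimation primitive is in hand, the rest of the reduction reduces to a union bound over the $T_G$ gradient queries, absorbing the per-call failure probabilities into the $\tilde O(\cdot)$ overhead and inheriting the success probability $1-\delta_{\cA_{ME}}$ from the single LDME call.
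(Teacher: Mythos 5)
Your high-level skeleton matches the paper's: one call to $\cA_{ME}$ at a reference point, then maintain the list across queries via a noisy-location-estimation primitive, and run $\cA_G$ along $s$ branches, only one of which needs to succeed. The gap is in the primitive you propose to implement the inexact oracle. First, the premise that each list element $\hat g_i^{(0)}$ "identifies a high-mass atom $c_i$ of $D_\xi$" is unfounded: the LDME guarantee only promises that \emph{one} element is close to $\nabla f(x_0)$; the others are arbitrary, and $D_\xi$ need not have any atom besides the $\alpha$-mass at $0$ (its remaining $1-\alpha$ mass can be continuous and heavy-tailed), so there is nothing to "pin each branch to." Second, and more seriously, even on the correct branch the estimator "extract a subsample enriched in inliers and take its mean" is biased by an amount the stated assumptions do not let you control: the observation noise $\nabla e(\gamma,x)$ has only bounded variance (not bounded support), and $\xi$ restricted to any window around $0$ need not have mean zero, so the mean of the enriched subsample can be off by $\Theta(\sigma\alpha^{-1/2})$ or worse. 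This is precisely the failure mode the paper isolates in \Cref{fact:median_sym_plus_anything}, which shows that such naive conditioning/median estimators err by $\Omega(\sigma\alpha^{-1/(2+t)})$.

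The paper's fix is structural and your sketch lacks it: the inexact oracle never estimates $\nabla f(x)$ from the samples at $x$ alone. It estimates the \emph{shift} $\nabla f(x)-\nabla f(0)$ by taking a \emph{difference} of conditional expectations of the two sample distributions (at $x$ and at $0$) restricted to a carefully chosen interval (found via \Cref{claim:tildeP} and \Cref{lem:small_term}), so that the unknown conditional bias of $\xi$ cancels between the two query points; the single resulting shift $v$ is then added to every list element at once. Your per-branch, single-distribution "enrich and average" estimator has no mechanism for this cancellation, and your claim that iterating rejection $O(1/\eta)$ times "boosts the effective inlier fraction toward $1-\eta$" does not correspond to a working argument (the paper's iteration instead shrinks the re-centering error geometrically by a factor of $\eta$ per round over $O(\log_{1/\eta}(1/\alpha))$ rounds). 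Since you explicitly defer the analysis of this primitive and the route you sketch for it is one the paper shows to be inconsistent, the proof as proposed does not go through.
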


To achieve this reduction, we develop an algorithm capable of determining 
the translation between two instances of a distribution by utilizing samples 
that are subject to additive noise. 
It is important to note that the additive noise can differ for each set of samples. 
As far as we know, this is the first guarantee of its kind for estimating the location of a distribution, 
considering the presence of noise, sample access, and the absence of density information.

\begin{theorem}[Noisy Location Estimation]\label{thm:loc-est_informal}
Let $\eta \in (0, 1)$ and let $D_\xi, D_z, D_{z'}$ be distributions such that 
$\pr_{\xi \sim D_\xi}[\xi = 0] \geq \alpha$ and 
$D_z, D_{z'}$ are possibly distinct mean zero distributions with variance bounded by $\sigma^2$.
Then there is an algorithm (\Cref{alg:loc_algo}) which, 
for unknown $t \in \R^d$,
takes $m = \tilde O((1/\eta^5)(O(1)/\alpha)^{2/\eta})$
samples $\{\xi_i + z_i + t\}_{i=1}^m$ and $\{\tilde \xi_i + z'_i\}_{i=1}^m$, where $\xi_i$ and $\tilde \xi_i$ are drawn independently from $D_\xi$ and $z_i$ and $z'_i$ are drawn from $D_z$ and $D_{z'}$ respectively,
runs in time $\poly(d, 1/\eta, (O(1)/\alpha)^{1/\eta}, \log(1/\delta \eta \alpha))$, 
and with probability $1-\delta$ recovers $t'$ such that $|t-t'| \le O(\eta \sigma)$.    
\end{theorem}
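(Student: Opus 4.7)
The plan is to turn the noisy location estimation problem into a refinement of a list-decodable mean estimate obtained from paired samples. The key observation is that for a pair $(U_i, V_i)$ drawn from the two sample sets,
\[
U_i - V_i = (\xi_i - \tilde\xi_i) + (z_i - z'_i) + t,
\]
and with probability at least $\alpha^2$ we have $\xi_i = \tilde\xi_i = 0$, in which case $U_i - V_i = z_i - z'_i + t$ has mean $t$ and variance at most $2\sigma^2$. Thus the distribution of $U - V$ contains an ``inlier'' component of mass at least $\alpha^2$ centered at $t$, while the rest is an arbitrary outlier distribution inherited from the essentially-unconstrained $D_\xi$.

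First I would localize $t$ coarsely by binning the paired differences $\{U_i - V_i\}$ into windows of width $\Theta(\sigma/\sqrt{\eta})$ and retaining any window carrying mass $\Omega(\alpha^2)$; Chebyshev on the inlier component guarantees that at least one surviving window is within $O(\sigma/\sqrt{\eta})$ of $t$, yielding a short list of candidates. Then, for each candidate I would run $\Theta(1/\eta)$ rounds of rejection-sampling refinement: in each round, (i) draw a fresh batch of paired samples, (ii) restrict to pairs whose difference lies in a window of width $(1 - \Theta(\eta))$ times the current width, centered on the current estimate, and (iii) update the estimate to the robust center of mass of these restricted differences. Because the inlier mass inside the window shrinks by roughly a factor of $\alpha$ per round, the fresh batch must grow by a factor of $1/\alpha$ per round, which accounts for the $(O(1)/\alpha)^{2/\eta}$ sample budget; the $1/\eta^5$ factor comes from standard $1/\eta^2$-style variance control per round together with a union bound over $\Theta(1/\eta)$ rounds.

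To produce a single $t'$ rather than a list, I would use the second sample set as a reference centered at $0$: at each refinement step I compare the local empirical distribution of $\{U_i - t'\}$ inside the shrinking window to that of $\{V_i\}$ inside the corresponding window around $0$, and discard candidates where the two disagree by more than what the difference $D_z - D_{z'}$ (mean zero, variance $\le 2\sigma^2$) can explain. Since $D_\xi$ is shared between the two sets, its contribution cancels in this localized comparison precisely when $t' \approx t$, which is what allows a unique $t'$ to be extracted.

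The main obstacle is that $D_\xi$ is essentially unconstrained: beyond the $\alpha$-mass at zero, it can place arbitrary mass arbitrarily close to $0$ (mimicking the inlier signal) or have a characteristic function that vanishes at small frequencies (ruling out Fourier deconvolution). The refinement therefore cannot use any moment or spectral property of $D_\xi$ directly; it must rely only on the point mass at zero and on the fact that $D_\xi$ is shared across the two sample sets. Proving that the estimate improves additively by $\Theta(\eta\sigma)$ per round against such an adversarial $D_\xi$ inside the current window, and that only candidates close to $t$ survive all $\Theta(1/\eta)$ rounds with high probability, is the technical heart of the argument.
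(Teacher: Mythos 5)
Your coarse step is essentially the paper's (the paper takes the median of the pairwise differences $U_i - V_i$, using that $\xi - \tilde\xi$ is symmetric with an $\alpha^2$ point mass at $0$, to get within $O(\sigma\,\mathrm{poly}(1/\alpha))$ of $t$), and the overall shape of iterative, window-based refinement is also right. But the refinement step as you describe it has a genuine gap, and it is exactly the step you flag as ``the technical heart.'' Restricting the paired differences to a window and taking a ``robust center of mass'' does not improve the estimate: inside any window of width $w$, the nonzero part of $\xi - \tilde\xi$ can carry mass far exceeding the $\alpha^2$ inlier mass and sit anywhere in the window, so no robust mean of the restricted samples is guaranteed to be closer to $t$ than $\Theta(w)$. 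If instead you take the plain conditional mean (hoping the symmetry of $\xi-\tilde\xi$ makes it unbiased), you hit the boundary problem: the adversary can place nearly all of the mass of $\xi$ in an interval of width $O(\sigma)$ straddling the window boundary, and the convolution with $z - z'$ (standard deviation $\Theta(\sigma)$, comparable to your per-round shrinkage) then moves an order-one fraction of the conditioned mass into or out of the window, biasing the conditional mean by $\Theta(w) \gg \eta\sigma$. A fixed geometric window schedule of $(1-\Theta(\eta))$ per round cannot rule this out. Your accounting of the sample complexity (``inlier mass shrinks by a factor of $\alpha$ per round'') is also not where the $(O(1)/\alpha)^{2/\eta}$ actually comes from.

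The paper's proof supplies precisely the two missing ingredients. First, it chooses the conditioning interval \emph{adaptively}: it defines an estimable upper bound $\tilde P(i)$ on the mass near the boundary of the $i$-th candidate interval plus the mass moved across it, and proves (\Cref{lem:small_term}, \Cref{claim:tildeP}) that some $k$ up to roughly $(O(1)/\alpha)^{1/\eta}$ satisfies $k\tilde P(k) \le \eta \sum_{j\le k}\tilde P(j)$; this exponential search range is the true source of the $(O(1)/\alpha)^{2/\eta}$ sample budget. Second, rather than centering paired differences, it conditions the two original sample sets $\xi+z+t$ and $\tilde\xi+z'$ on the \emph{same} good interval and subtracts the two conditional expectations, so that the $\E[\xi \mid \xi \in I]$ term cancels and only $\E[z]-\E[z'] \pm O(\eta A \sigma)$ remains (\Cref{claim:fine_estimate}); your comparison of local empirical distributions gestures at this cancellation but never turns it into an estimator. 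Finally, the statement is for $t\in\R^d$: the paper runs the one-dimensional routine coordinate-wise after a random rotation so that each coordinate has variance $O(\sigma^2\log d/d)$ and the per-coordinate errors sum to $O(\eta\sigma)$ in $\ell_2$; running a one-dimensional scheme in a fixed basis at scale $\sigma$ per coordinate would only give $O(\eta\sigma\sqrt d)$. This part is absent from your proposal.
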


The fact that there exist algorithms for list-decodable mean estimation with the inliers coming from $\cD_\sigma$ allows us to get concrete results for list-decodable stochastic optimization. 

Conversely, we show that if we have an algorithm for 
list-decodable stochastic optimization, then we also get an algorithm
for list-decodable mean-estimation. 
This in turn implies, via \Cref{fact:list_decoding_D_sigma}, 
that an exponential dependence on $1/\eta$ is necessary in the list-size
if we want to estimate the correct gradient
up to an error of $O(\eta \sigma)$ when the inliers are drawn from some distribution in $\cD_\sigma$. 

\begin{theorem}[List-Decodable Mean Estimation $\rightarrow$ List-Decodable Stochastic Optimization]
\label{thm:LDME_reduce_to_LDSO_informal}
Assume there is an algorithm for 
List-Decodable Stochastic Optimization with oblivious noise
that runs in time $T$ and makes $m$ queries to $\cO_{\alpha, \sigma, f}$,
and returns a list $\cL$ of size $s$ containing $\wh x$ satisfying $\norm{  f(\wh x) } \leq \eps$.
Then, there is an $(\alpha, \eps, s)$-LDME
algorithm for $\cD$
that runs in time $T$, queries $m$ samples, and returns a list of size $s$.
\end{theorem}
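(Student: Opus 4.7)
The plan is to package an arbitrary LDME instance as a specific LDSO instance in a way that makes a gradient query answerable using exactly one raw LDME sample, so that the promised optimization guarantee immediately yields the desired mean estimate. Let the LDME inliers be drawn from $D \in \cD_\sigma$ with unknown mean $\mu := \E_{z \sim D}[z]$, and let the observed samples be $y_i = z_i + \xi_i$ with $z_i \sim D$, $\xi_i \sim D_\xi$, and $\Pr[\xi_i = 0] \ge \alpha$. I will reduce to stochastic optimization of the quadratic
\begin{equation*}
f(x) \;=\; \tfrac{1}{2}\,\norm{x - \mu}^2,
\end{equation*}
which is $1$-smooth, has a unique global minimum at $\mu$, and has gradient $\nabla f(x) = x - \mu$. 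Crucially, the value of $\mu$ is only needed to define $f$ in the analysis; the reduction itself never touches it.

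Next I would implement the oblivious-noise gradient oracle $\cO_{\alpha,\sigma,f}$ on the fly. On a query $x$, the reduction draws the next unused LDME sample $y_i = z_i + \xi_i$ and returns $g(x) := x - y_i = (x - z_i) + (-\xi_i)$. Setting $\nabla f(\gamma_i, x) := x - z_i$ with $\gamma_i := z_i$ gives $\E_{\gamma_i}[\nabla f(\gamma_i, x)] = x - \mu = \nabla f(x)$ and $\E_{\gamma_i}[\norm{\nabla f(\gamma_i,x) - \nabla f(x)}^2] = \E[\norm{z_i - \mu}^2] \le \sigma^2$, matching the bounded-variance requirement of \Cref{def:oblivious_noise_oracle}. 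The residual $-\xi_i$ is independent of $x$ and of $\gamma_i$, and $\Pr[-\xi_i = 0] = \Pr[\xi_i = 0] \ge \alpha$, so it is a legitimate oblivious noise term. Because the LDME samples are i.i.d., successive oracle calls produce independent copies, as required by the stochastic oracle model.

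Finally, feeding this oracle to the assumed LDSO algorithm (with target accuracy $\eps$ and smoothness $L = 1$) produces a list $\cL$ of size $s$ containing some $\wh x$ with $\norm{\nabla f(\wh x)} = \norm{\wh x - \mu} \le \eps$. Returning $\cL$ as the LDME output then meets the definition of an $(\alpha, \eps, s)$-LDME algorithm on $\cD_\sigma$, while using exactly $m$ samples and time $T$, since each of the $m$ oracle calls is served by a single fresh LDME sample and no additional computation is performed outside the LDSO invocation. The regime where $1-\alpha$ is a sufficiently small constant less than $1/2$ transfers automatically: the LDSO guarantee degenerates to $s = 1$ in that regime, and so does the resulting LDME guarantee.

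The step that requires the most care is verifying that the construction is faithful to \Cref{def:oblivious_noise_oracle}: one must check independence of $\xi_i$ from both $x$ and $\gamma_i$ (immediate here because $\xi_i$ and $z_i$ are independent and the query $x$ only depends on past samples), and that the moment bound on $\nabla f(\gamma,x) - \nabla f(x)$ holds uniformly in $x$ (it does, since the deviation $\mu - z_i$ does not depend on $x$). Once these checks are in place, the reduction is essentially a one-line identification between mean estimation and the stationary point of a translated quadratic, so no additional technical machinery is needed.
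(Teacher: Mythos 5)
Your proposal is correct and follows essentially the same reduction as the paper: simulate the oblivious gradient oracle for a quadratic centered at the unknown mean by answering each query $x$ with $x$ shifted by one fresh LDME sample, then read the mean off the returned stationary point. The only cosmetic difference is the sign convention ($f(x)=\tfrac12\norm{x-\mu}^2$ versus the paper's $\tfrac12\norm{x+\mu}^2$), which lets you return the list directly rather than its negation.
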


If $\alpha = 1-\eps$, where $0< \eps< 1/2$ is at most a sufficiently small
constant, the above theorems hold for the same problems, but with the constraint
that the list is singleton. 

\subsection{Related Work}\label{sec:related_work}

Given the extensive robust optimization and estimation literature, we focus on the most relevant work.

\paragraph{Optimization with Heavy-tailed Noise}
There is a wealth of literature on both the theoretical and empirical
convergence behavior 
of stochastic gradient descent (SGD) for both convex and non-convex problems, 
under various assumptions on the stochastic gradient (see, e.g.,~\cite{hardt2016train, wang2021convergence} and references within).
However, the noisy gradients have shown to be problematic when training ML models~\citep{shen2019learning, zhang2020adaptive}, hence necessitating robust optimization algorithms.

From a theoretical point of view, several noise models have been proposed to account for inexact gradients. 
A line of work~\citep{d2008smooth, so2013nonasymptotic, devolder2014first, cohen2018acceleration} studies the effects of inexact gradients to optimization methods in terms of error accumulation.
For instance,~\cite{devolder2014first} demonstrates that, given an oracle that outputs an arbitrary perturbation of the true gradient,
the noise can be determined adversarially
to encode non-smooth problems.
Alternatively, many recent works~\citep{lan2012optimal, gorbunov2020stochastic, cutkosky2021high, mai2021stability, sadiev2023high} have studied
$p$-heavy-tailed noise,
an additive stochastic noise to the true gradient
where one has access to 
$\nabla f(\gamma, x)$ such that $\E_\gamma [\nabla f(\gamma, x)] = \nabla f(x)$ and $\E_\gamma[\norm{\nabla f(\gamma, x)-\nabla f(x)}^p] \leq \sigma^p$.
For instance,~\cite{sadiev2023high} propose and analyze a variant of clipped-SGD 
to provide convergence guarantees for when $p\in (1,2]$. However, these noisy oracles, whether deterministic or stochastic, assume bounded norm or moments on the noise, an assumption that is not present in the oblivious noise oracle.


\paragraph{Robust Estimation}
Our oblivious noise oracle for optimization is motivated by the recent work on regression under oblivious outliers.
In the case of linear regression, the oblivious noise model can be seen as the weakest possible noise model that allows almost all points to be arbitrarily corrupted, while still allowing for recovery of the true function with vanishing error~\citep{BhatiaJK15, suggala2019adaptive}. 
This also captures heavy-tailed noise that may not have any moments.
The setting has been studied for various problems, including  regression~\citep{pesme2020online, diakonikolas2023oblivious}, PCA~\citep{d2021consistent}, and sparse signal recovery~\citep{SoSOblivious21}.

On the other hand, there has been a flurry of work on robust estimation in regards to worst-case adversarial outliers (see, e.g.,~\cite{DKKLMS16, LaiRV16, CSV17, DiakonikolasKKLSS2018sever}). 
Robust mean estimation aims to develop an efficient mean estimator when $(1-\alpha)$-fraction of the samples is arbitrary. In contrast, list-decodable mean estimation generates a small list of candidates such that one of these candidates is a good estimator. While robust mean estimation becomes information-theoretically impossible when $\alpha \le 1/2$, the relaxation to output a list allows the problem to become tractable for any $\alpha \in (0,1]$~\citep{CSV17, diakonikolas2022list}.
See~\cite{DK19-survey, DK23-book} for in-depth treatments of the subject.

In the past, robust mean estimators have been used to perform robust
gradient estimation (see, e.g.,~\cite{CSV17, DiakonikolasKKLSS2018sever}),
which is similar to what we do in our paper.
However, these results assume access to the entire function set, which allows them to discard outlier functions.
In contrast, in our setting, we only have access to a noisy gradient oracle, so
at each step, we get a fresh sample set and, hence, a different set of outliers. This introduces further difficulties 
, which we resolve via location estimation.

We note a subtle difference in the standard list-decodable 
mean estimation setting and the setting considered in this work.
The standard list-decodable mean estimation setting draws 
an inlier with probability $\alpha$ and an outlier with the remaining probability. 
In contrast, our model gets samples of the kind $\xi + z$ where $z$ is drawn from the inlier distribution,
and $\Pr[\xi = 0] > \alpha$, and is arbitrary otherwise.
The algorithms for the mixture setting continue to work in our setting as well. Another difference is that the standard setting requires that the distribution have bounded variance \emph{in every direction}. On the other hand, in the optimization literature, the assumption is that the stochastic gradient has bounded expected squared \emph{norm}  from the expectation. 

\paragraph{Location estimation}

Location estimation has been extensively studied since the 1960s. 
Traditional approaches to location estimation have focused on 
achieving optimal estimators in the asymptotic regime. 
The asymptotic theory of location estimation is discussed in detail in~\citep{van2000asymptotic}. 

Recent research has attempted to develop the finite-sample theory of location estimation. 
These efforts aim to estimate the location of a Gaussian-smoothed distribution 
with a sample complexity that matches the optimal sample complexity \new{up to the sharp constant} 
(see \citep{gupta2022finite} and \citep{gupta2023high}).
However, these results assume prior knowledge of the 
likelihood function of the distribution up to translation (incidentally, this is the only setting where the optimal fisher-information rate is actually possible). 

Another closely related work initiates the study of robust location estimation for the case
where the underlying high-dimensional distribution is symmetric
and an $0 < \eps \ll 1/2$ fraction of the samples are adversarially corrupted \cite{novikov2023robust}.
This follows the line of work on robust mean estimation discussed above. 

We present a finite-sample guarantee for the setting with
noisy access to samples drawn from a distribution and its translation.
Our assumption on the distribution is that it places an $\alpha$ mass at some point, 
where $\alpha \in (0, 1)$ and, for instance, could be as small as $1/d^c$ for some constant $c$.
The noise is constrained to have mean zero and bounded variance,
but crucially the noise added to the samples coming from the distribution
and its translation might be drawn from \emph{different distributions}. 
To the best of our knowledge, this is the first result
that has noisy access, does not have prior knowledge of
the probability density of the distribution and achieves a finite-sample guarantee.

\subsection{Technical Overview}

For ease of exposition, we will make the simplifying assumption that the observation noise is bounded between $[-\sigma, \sigma]$.
Let $y$ and $y'$ be two distinct mean-zero noise distributions, both bounded within the range $[-\sigma, \sigma]$.
Define $\xi$ to be the oblivious noise drawn from $D_\xi$, 
satisfying $\pr[\xi = 0] \geq \alpha$.
Assume we have access to the distributions (i.e., we have infinite samples).

\paragraph{Stochastic Optimization reduces to Mean Estimation.}
In \Cref{thm:LDSO_reduce_to_LDME_informal}, we show how we
can leverage a list-decodable mean estimator to address the challenge 
of list-decodable stochastic optimization with oblivious noise (see \Cref{alg:main_algo}). 
The key idea is to recognize that we can generate a list of gradient estimates, and update this list such that 
one of the elements always closely approximates the true gradient in $\ell_2$ norm at the desired point. 

The algorithmic idea is as follows:
First, run a list-decodable mean estimation algorithm on the noisy gradients at $x_0=0$ 
to retrieve a list $\cL$ consisting of $s$ potential gradient candidates. 
This set of candidates contains at least one element which closely approximates $\nabla f(0)$.
    
One natural approach at this stage would be to perform a gradient descent step for each of the $s$ gradients, 
and run the list-decodable mean estimation algorithm again
to explore all potential paths that arise from these gradients. 
However, this naive approach would accumulate an exponentially large number of candidate solutions. 
We use a location-estimation algorithm to tackle this issue. 

We can express $ f(\gamma, x) = f(x) +  e(\gamma, x)$ where $\E_\gamma[\nabla e(\gamma, x)] = 0$
and $\E_\gamma [\norm{\nabla e(\gamma, x)}^2 ] < \sigma^2$. 
When we query $\cO_{\alpha, \sigma, f}(\cdot)$ at $x$, 
we obtain samples of the form $\nabla f(x) + \xi + \nabla e(\gamma, x)$,
i.e., samples from a translated copy of the oblivious noise distribution convolved with 
the distribution of $\nabla e(\gamma, x)$. We treat the distribution of $\nabla e(\gamma, x)$ as observation noise.
The translation between the distributions of our queries to $\cO_{\alpha, \sigma, f}(\cdot)$ at $x$ and $0$ correspond to $\nabla f(x) - \nabla f(0)$. 
To update the gradient, it is sufficient to recover this translation accurately. 
By doing so, we can adjust $\cL$ by translating each element while maintaining the accuracy of the estimate.

Finally, we run a first-order learner for stochastic optimization using these gradient estimates. 
We select and explore $s$ distinct paths, with each path corresponding to an element of $\cL$ obtained earlier. 
Since one of the paths always has approximately correct gradients, this path will converge to a stationary point 
in the time that it takes for the first-order learner to converge.

\paragraph{Noisy Location Estimation in 1-D.} 
The objective of noisy location estimation is to retrieve the translation 
between two instances of a distribution by utilizing samples that are subject to additive
mean-zero and bounded-variance noise. 

In \Cref{lem:1d-loc-est}, we establish that if \Cref{alg:loc_algo1d} 
is provided with a parameter $\eta \in (0, 1)$ and samples from the distributions of
$\xi + y$ and $\tilde \xi + y' + t$, 
it can recover the value of $t$ within an error of $O(\eta \sigma)$. 
Here $\xi$ and $\tilde \xi$ are independent draws from the same distribution.

Observe that in the absence of additive noise $y$ and $y'$, 
$t$ can be estimated exactly by simply taking the median of a sufficiently large number of pairwise differences 
between independent samples from $\xi$ and $\tilde \xi + t$. 
This is because the distribution of $\tilde \xi - \xi$ 
is symmetric at zero and $\pr[\tilde \xi - \xi=0] = \alpha^2$.
However, with unknown $y$ and $y'$, this estimator is no longer consistent
since the median of zero-mean and $\sigma$-variance random variables can be
as large as $O(\sigma)$ in magnitude.
In fact, \Cref{fact:median_sym_plus_anything} demonstrates that for the setting we consider
the median of $\xi + y$ can as far as $O(\sigma\alpha^{-1/(2+t)})$ far from the mean of $\xi$, for any positive constant $t>0$.
However, the median does recover a rough estimate $t'_r$ such that $|t-t'_r| \le O(\sigma/\sqrt{\alpha})$.

Consequently, our approach starts by using $t'_r$ as the estimate for $t$, 
and then iteratively improving the estimate by mitigating the heavy-tail influence of $\xi$. 
Note that if $\xi$ did not have heavy tails, then we could try to directly compute $t = \E[\xi + y' + t] - \E[\xi + y]$.
Unfortunately, due to $\xi$'s completely unconstrained tail, $\xi$ may not even possess a well-defined mean.
Nonetheless, if we are able to condition $\xi$ to a bounded interval, 
we can essentially perform the same calculation described above. 
In what follows, we describe one step of an iterative process to improve our estimate of $t$. 

\paragraph{Rejection Sampling.}
Suppose we have an initial rough estimate of $t$, given by $t'_r$ such that $|t'_r - t| < A\sigma$. 
We can then re-center the distributions to get $\xi + z$ and $\xi + z'$, 
where $z$ and $z'$ have means of magnitude at most $A\sigma$, and have variance that is $O(\sigma)$. 
\Cref{claim:fine_estimate} then shows that we can refine our estimate of $t$.
It does so by identifying an interval $ I = [-k\sigma,k\sigma]$ around $0$ such that the addition of either $z$ or $z'$ to $\xi$ 
does not significantly alter the distribution's mass within or outside of $I$. 
We show that since $z, z'$, and $\xi$ are independent, 
the $I$ that we choose satisfies 
$\E[\xi + z' \mid \xi + z' \in I]-\E[\xi + z \mid \xi + z \in I] = \E[z'] - \E[z] \pm \eta A \sigma$ for some $\eta < 1$.

To see that such an interval $I$ exists, we will show that for some $i$, there are
pairs of intervals $ (A\sigma i, (i+1)A\sigma]$ and $ [-(i+1)A\sigma , -iA\sigma )$ which 
contain a negligible mass of $\xi$. 
Since $z$ and $z'$ can move mass by at most $(A+1)\sigma$ with high probability, 
it is sufficient to condition on the interval around $0$ which is contained in this pair of intervals.
Since we do not have access to $\xi$, 
we instead search for pairs of intervals of length $O(A\sigma)$ which have negligible mass 
with respect to both $\xi + z$ and $\xi + z'$, which suffices. 

To do this, we will demonstrate an upper bound $\tP(i)$ on the mass crossing the $i^{th}$ pair of intervals described above. This will satisfy $\sum_i \tP(i) = C'$ for some constant $C'$. 

To show that this implies there is an interval of negligible mass, 
we aim to demonstrate the existence of a $k \in \mathbb N$ such that $k \tP(k) \leq \eta$. 
We will do this through a contradiction. 
Suppose this is not the case for all $i \in [0, k]$, then $\sum_{i=0}^k \tP(i) \geq \eta \sum_{i=0}^k (1/i)$. 
If $k \geq \exp(10 C'/\eta)$, we arrive at a contradiction because 
the right-hand side is at least $\eta \sum_{i=0}^{k} (1/i) \geq \eta \log(\exp(10 C/\eta)) > 10C'$, 
while the left-hand side is bounded above by $C$. 

\Cref{claim:tildeP} demonstrates, via a more involved and finer analysis, 
that for the bounded-variance setting where intervals very far away can contribute to mass crossing the point $A\sigma i$,
there exists an $k$ such that  $k \tP(k) \leq \eta \pr[\abs{\xi} < A \sigma k]$, 
and that taking the conditional mean restricted to the interval $[-kA\sigma, kA\sigma]$ allows us to improve our estimate of $t$. 
Here $\tP(k)$ denotes an upper bound on the total probability mass that crosses intervals described above. 

\paragraph{Extension to Higher Dimensions.}
In order to extend the algorithm to higher dimensions, 
we apply the one-dimensional algorithm coordinate-wise, but in a randomly chosen coordinate-basis. 
However, a challenge arises from the fact that the algorithm requires a good estimate of 
the standard deviation for each coordinate, which is not known in advance.
\Cref{lem:loc_est} uses the fact that representing the distributions in a randomly
rotated basis ensures that, with high probability, the inlier distribution
will project down to each coordinate with a variance of $O(\sigma\sqrt{\log(d)}/\sqrt{d})$
to extend \Cref{lem:1d-loc-est} to higher dimensions. 

\paragraph{Mean Estimation reduces to Stochastic Optimization.}
In \Cref{thm:LDME_reduce_to_LDSO_informal} we show that the problem of list-decodable mean 
estimation can be solved by using list-decodable stochastic optimization 
for the oblivious noise setting.
This establishes the opposite direction of the reduction 
to show the equivalence of list-decodable mean estimation
and list-decodable stochastic optimization.

The reduction uses samples from the list-decodable mean estimation problem 
to simulate responses from the oblivious oracle to queries at $x$.
Let the mean of the inlier distribution be $\mu$. 
If the first-order stochastic learner queries $x$, 
we return $x + s$ where $s$ is a sample drawn from the list-decodable mean-estimation problem. 
These correspond to possible responses to the queries when $f(x) = (1/2) \norm{x + \mu}^2$, where $\mu$ is the true mean. The first-order stochastic learner learns a $\wh \mu$ from a list $\cL$ such that $\norm{\nabla f(\wh \mu)} = \norm{\wh \mu + \mu} \le \eps$; then the final guarantee of the list-decodable mean estimator follows by returning $-\cL$ which contains $-\wh \mu$.

\section{Preliminaries}

\paragraph{Basic Notation} 
For a random variable $X$, we use $\E[X]$ for its expectation and $\pr[X \in E]$ for the probability of the random variable belonging to the set $E$.
We use $\cN(\mu,\sigma^2)$ to denote the Gaussian distribution 
with mean $\mu$ and variance matrix $\sigma^2$. 
When $D$ is a distribution, we use $X \sim D$ to denote that 
the random variable $X$ is distributed according to $D$. 
When $S$ is a set, we let $\E_{X \sim S}[\cdot]$ denote the expectation under the uniform distribution over $S$.
When clear from context, we denote the empirical expectation and probability by $\widehat \E$ and $\widehat \pr$. 
We denote $\|\cdot\|$ as the $\ell_2$-norm and assume
$f: \R^d \rightarrow \R$ is differentiable and $L$-smooth, i.e., $\|\nabla f(x) - \nabla f(x')\| \le L \|x-x'\|$ for all $x, x' \in \R^d$.

$\xi$ will always denote 
the oblivious noise drawn from a distribution $Q$
satisfying $\pr_{\xi \sim Q}[\xi = 0] \geq \alpha$.
$y, y'$ will be used to denote mean-zero and variance at-most $\sigma^2$ random variables. 
Also define $e(\gamma, x) := f(\gamma, x) - f(x)$ and the interval $\sigma (a, b] := (\sigma a, \sigma b]$.

\paragraph{Facts} We use these algorithmic facts in the following sections.
\new{We use a list-decodable robust mean estimation subroutine in a black-box manner.}
The proof for list-decodable mean estimation can be found in \Cref{sec:ldme},
\new{while such algorithms can be found in prior work, see, e.g., \cite{CSV17, diakonikolas2020list}.} 
We also define a $(\beta, \epsilon)$-inexact-learner for $f$. Let $\cD_\sigma$ represent a set of distributions over 
$\mathbb{R}^d$ defined as 
$\cD_\sigma := \{ D \mid \E_{D}[\|x - \E_D[x]\|^2] \leq \sigma^2 \}$.

\begin{fact}[List-decoding algorithm]
\label{fact:list_decoding_D_sigma}
There is an $(\alpha, \sigma \eta, \tilde O((1/\alpha)^{1/ \eta^2}))$-LDME algorithm for
the inlier distribution belonging to $\cD_\sigma$ which runs in time $\tilde O(d (1/\alpha)^{1/ \eta^2})$ 
and succeeds with probability $1-\delta$.
Conversely, any algorithm which returns a list,
one element of which makes an error of at most $O( \eta \sigma)$ in $\ell_2$ norm
to the true mean, must have a list whose size grows exponentially in $1/ \eta$.

If $1-\alpha$ is a sufficiently small constant
less than half, then the list size is $1$ to 
get an error of $O(\sqrt{1-\alpha}~\sigma)$.
\end{fact}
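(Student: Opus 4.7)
The plan is to establish the three parts of the fact separately: the algorithmic upper bound, the list-size lower bound, and the singleton-list regime when $1-\alpha$ is small.

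For the upper bound, the strategy is to invoke a standard list-decodable mean-estimation subroutine for inlier distributions with bounded covariance, as developed in \cite{CSV17, diakonikolas2020list}, and then boost the accuracy. A direct application to our setting, where all samples have the form $z_i + \xi_i$ with $\xi_i = 0$ on an $\alpha$-fraction, gives an initial list of size $\poly(1/\alpha)$ with per-candidate error $O(\sigma/\sqrt{\alpha})$. To obtain the tighter error $\eta\sigma$, I would iterate this subroutine in a scale-contraction manner: for each coarse candidate $\hat\mu$, restrict attention to samples within a ball of radius proportional to the current scale (a Markov/Chebyshev argument guarantees that a constant fraction of the true inliers falls in this ball), rescale, and re-invoke the base estimator. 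Each round contracts the radius by a constant factor while multiplying the list by $\poly(1/\alpha)$; performing $O(1/\eta^2)$ rounds yields the desired per-candidate error and total list size $(1/\alpha)^{O(1/\eta^2)}$. A union bound over all branches, combined with standard uniform-convergence estimates, gives the stated sample complexity and failure probability $\delta$.

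For the lower bound, the plan is an indistinguishability argument tailored to the oblivious-noise model. One exhibits a family of $\exp(\Omega(1/\eta))$ inlier distributions $D^{(j)} \in \cD_\sigma$ with pairwise $\eta\sigma$-separated means $\mu_j$, together with matching oblivious-noise distributions $D_\xi^{(j)}$ satisfying $\Pr[\xi=0] \geq \alpha$, such that all observed mixtures $D^{(j)} \ast D_\xi^{(j)}$ are identically distributed. The heavy-tailed nature of $\xi$ is precisely what allows this moment-matching construction at the fine scale $\eta\sigma$: the noise can absorb translations of the inlier mean by adjusting its tail while the point mass at $0$ is preserved, leaving the observed law unchanged. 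Since no algorithm, even information-theoretically, can distinguish these $\exp(\Omega(1/\eta))$ mixtures, every output list must simultaneously contain a candidate within $O(\eta\sigma)$ of each $\mu_j$, forcing the list size to grow exponentially in $1/\eta$.

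For the singleton-list regime, when $1-\alpha$ is a sufficiently small constant below $1/2$ the problem degenerates to classical robust mean estimation under adversarial corruption with bounded-covariance inliers. Standard filtering-based algorithms \cite{DKKLMS16, DiakonikolasKKLSS2018sever} then produce a single estimate with error $O(\sqrt{1-\alpha}\,\sigma)$, which is information-theoretically optimal. The main obstacle I expect is the iterative refinement in the upper bound: one must verify that at every scale of the contraction procedure a constant fraction of inliers remains inside the current ball so that the base subroutine can be re-invoked with essentially the same $\alpha$. Any degradation of the effective $\alpha$ under refinement directly inflates the exponent, and this is precisely where the $1/\eta^2$ (rather than $1/\eta$) arises and where the tightness against the $\exp(1/\eta)$ lower bound is delicate.
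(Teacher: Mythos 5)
Your lower-bound and singleton-list plans are essentially the paper's: the lower bound is exactly such an indistinguishability construction (the paper takes product distributions $D_s$, $s\in\{\pm1\}^{1/\eta}$, each coordinate equal to $s_i$ with probability $\eta$ and $0$ otherwise, with oblivious noise $-D_s$, so the observed law $D_s - D_s$ is independent of $s$ while the $2^{1/\eta}$ means are pairwise separated and $\Pr[\xi=0]\approx 1/e$), and the singleton case just invokes stability-based robust mean estimation. The algorithmic upper bound, however, has a genuine gap. Your scale-contraction scheme cannot drive the error below $\Theta(\sigma)$, let alone to $\eta\sigma$. The base list-decodable estimator's error is $\Theta(\sigma_{\mathrm{in}}/\sqrt{\alpha})$, where $\sigma_{\mathrm{in}}$ bounds the covariance of the inliers it is handed; restricting to a ball whose radius is comparable to the current error scale (necessarily at least $\sigma$) does not shrink the inliers' covariance, which was $\sigma^2$ to begin with and remains $\Theta(\sigma^2)$ after conditioning on a ball of radius $\geq\sigma$ (e.g., inliers uniform on a sphere of radius $\sigma$). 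So each refined invocation returns the same $\Omega(\sigma/\sqrt{\alpha})$ error and the recursion stalls at the first scale. The accounting also cannot cohere: constant-factor contraction over $O(1/\eta^2)$ rounds would yield error $\sigma\, 2^{-\Omega(1/\eta^2)}$, and reaching error $\eta\sigma$ needs only $O(\log(1/\eta\sqrt{\alpha}))$ such rounds, so the total list size would be $(1/\alpha)^{O(\log(1/\eta)+\log(1/\alpha))}$ --- polynomial in $1/\eta$ for constant $\alpha$, contradicting the very $\exp(\Omega(1/\eta))$ lower bound you prove two paragraphs later.

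The reason error $\eta\sigma\ll\sigma$ is achievable at all is averaging, not refinement: the empirical mean of $1/\eta^2$ genuine inliers is within $O(\eta\sigma)$ of the truth by Chebyshev. The paper's algorithm is correspondingly brute-force. Draw $\tilde O((1/\alpha)^{2/\eta^2}\log^2(1/\delta))$ batches of $1/\eta^2$ samples each; a fixed batch is all-inlier (every $\xi_i=0$) with probability $\alpha^{1/\eta^2}$, so with probability $1-\delta$ some batch is all-inlier and its mean has error $O(\eta\sigma)$. The returned list is simply the set of all batch means, which is where the $(1/\alpha)^{O(1/\eta^2)}$ list size comes from. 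You would need to replace your refinement step with an argument of this form.
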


We now define the notion of a robust-learner,
which is an algorithm which, given access to approximate gradients
is able to recover a point at which the gradient norm is small.

\begin{definition}[Robust Inexact Learner]
\label{def:robust_learner} 
Let $f$ be a $L_s$-smooth function with a global minimum and 
$\cO^{\mathrm{grad}}_{\beta, f}(\cdot)$ be an oracle which  when queried on $x$, returns a vector 
$g_x$ satisfying $\norm{g_x - \nabla f(x)} \leq \beta$.
We say that an algorithm $\cA_G$ is an $(\beta, \epsilon)$-inexact-learner for $f$ if, 
given access to $\cO^{\mathrm{grad}}_{\beta, f}$, the algorithm $\cA_G$ recovers $\hat x$
satisfying $\norm{\nabla f(\hat{x}) } \leq \beta + \eps$. 
We will assume $\cA_G$ runs in time $T_G$.
\end{definition}

Several algorithms for convex optimization rely on there being a robust gradient estimator. 
More recently, for smooth nonconvex optimization, the convergence result for SGD under inexact 
gradients due to \cite{stich2020analysis} doubles as a 
$(\beta, \eps)$-inexact-learner running in
$T_G = O({LF}/(\beta + \eps)^2)$  iterations, where $F = f(x_0) - \min_x f(x)$
and $x_0$ is the initial point.
This follows by an application of Theorem 4 from \cite{stich2020analysis} and by taking the point in the set of iterates that minimizes the gradient norm. 


\section{Location Estimation}

To estimate how the gradient changes between iterations, 
we will need to estimate the shift between a distribution and its translation. 
In this section, we give an algorithm which, given access to samples from a distribution and its noisy translation, 
returns an estimate of $t$ accurate up to an error of $O(\eta \sigma)$ in $\ell_2$-norm. 
We show a proof sketch here, and a more detailed proof of all the claims involved in \Cref{sec:loc_est}.

\begin{algorithm}
\caption{One-dimensional Location Estimation: Shift1D$(S_1, S_2, \eta, \sigma, \alpha)$}
\label{alg:loc_algo1d}
\SetAlgoLined
\textbf{Input:} 
Sample sets $S_1, S_2 \subset \R^d$ of size $m$, $\alpha, \eta \in (0, 1)$, $\sigma > 0$\\
	
1. Let $T = O(\log_{1/\eta}(1/\alpha))$. 
For $j \in \{1, 2\}$, partition $S_j$ into $T$ equal pieces, $S_j^{(i)}$ for $i \in [T]$.\\ 
2. $D = \{ a - b \mid a \in S_1^{(1)}, b \in S_2^{(1)}\}$. \\
3. $t'(1) := \mathrm{median}(D)$.\\ 
4. Set $A = O(1/\sqrt{\alpha})$.\\
5. Repeat steps 6 to 12, for $i$ going from $2$ to $T$:\\
6. $S_1^{(i)} := S_1^{(i)} - t'_r(i-1)$. \\
7. For $j \in \{1, 2\}$
\begin{align*} 
\hat P_{j} (i) &:= O(1) \pr_{x \sim S_j^{(i)}}[\abs{x} \in A\sigma(i-5, i+5)]\\
&\qquad + O(1) \sum_{j=1}^{i-1} (1/(i-j)^2)\pr_{x \sim S_j^{(i)}}[\abs{x}\in Aj\sigma + A\sigma [-4,5).]
\end{align*}
8. Let $\hat P(i) = \hat P_1(i) + \hat P_2(i)$.\\
9. Identify an integer $k \in [(1/\alpha \eta^2), (C/\alpha+1/(\alpha \eta^2)^\eta)^{1/\eta}]$ 
such that 
\[\hat P(k)
\leq \eta \sum_{j \in \{1, 2\}} \pr_{x \sim S_j^{(i)}} [\abs{x} \in A \sigma k] \pm O(\eta/i).\]
10. $t'(i) := t'(i-1) + \E_{z \sim S_1^{(i)}} [z \mid \abs{z}\leq A\sigma k] - \E_{z \sim S_2^{(i)}} [z \mid \abs{z}\leq A\sigma k]$.\\
11. $A := \eta A$.\\
12. Return $t'(T)$
\end{algorithm}

\begin{lemma}[One-dimensional location-estimation]\label{lem:1d-loc-est}
Let $m = (1/\eta^5) (O(1)/ \alpha)^{2/\eta} \log(1/\eta \alpha \delta)$, $\xi_i$ and $\tilde \xi_i$ be drawn from $D_\xi$, $y_i$ and $y'_i$ 
be drawn from distinct mean-zero distributions having variance bounded above by $\sigma$, and $t \in \R$ be an unknown translation.
There is an algorithm (\Cref{alg:loc_algo1d})  which,  given
samples $\{ \xi_i + y_i + t \}_{i=1}^m$ and $\{ \tilde \xi_i + y'_i \}_{i=1}^m$,
runs in time $\tilde \poly(1/\eta, (O(1)/\alpha)^{1/\eta}, \log(1/\delta\eta\alpha))$ and recovers $t'$ such that $\abs{t - t'} \leq ~O(\eta \sigma)$. 

\end{lemma}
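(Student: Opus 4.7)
The plan is to analyze \Cref{alg:loc_algo1d} as an iterative refinement procedure. The algorithm maintains a running estimate $t'(i)$ together with an upper bound $A\sigma$ on the current error $|t - t'(i)|$, and at each iteration it shrinks $A$ by a factor of $\eta$ via a carefully chosen rejection-sampling step. After $T = O(\log_{1/\eta}(1/\alpha))$ iterations the parameter $A$ drops from its initial value $O(1/\sqrt{\alpha})$ down to $O(\eta)$, which yields the claimed bound $|t - t'(T)| \leq O(\eta \sigma)$. The two loads to carry are (i) initializing $t'(1)$ with the claimed accuracy from pairwise-difference medians and (ii) showing that each rejection-sampling step in lines 6--10 contracts the error by a factor of $\eta$.

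For the base case, I would argue as follows. The set $D$ is a collection of values $(\xi_i - \tilde\xi_j) + (y_i - y'_j) + t$. The random variable $\xi_i - \tilde\xi_j$ is symmetric about $0$ with a point mass of at least $\alpha^2$ at $0$, and $(y_i - y'_j)$ is mean-zero with variance bounded by $2\sigma^2$. Invoking the referenced fact that the median of the sum of a symmetric heavy-tailed variable with an $\alpha^2$-atom at $0$ and an independent mean-zero bounded-variance variable is displaced from $0$ by at most $O(\sigma/\sqrt{\alpha^2}) = O(\sigma/\sqrt{\alpha})$, and applying a standard empirical-median concentration bound on $|D| = \Theta(m^2/T^2)$ pairs, I get $|t'(1) - t| \leq A\sigma$ for $A = O(1/\sqrt{\alpha})$ with high probability.

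For the inductive step, suppose $|t'(i-1) - t| \leq A\sigma$. After line 6 the set $S_1^{(i)}$ becomes samples of $\xi + z$ with $z = y + (t - t'(i-1))$ satisfying $|\E[z]|, |\E[z']| \leq A\sigma$ and $\Var(z), \Var(z') = O(\sigma^2)$, while $S_2^{(i)}$ is samples of $\xi + z'$, and crucially $\E[z] - \E[z'] = t - t'(i-1)$. The heart of the argument is to produce an integer $k$ such that the interval $I = [-kA\sigma, kA\sigma]$ is ``robust'' in the sense that conditioning on $\xi + z \in I$ and on $\xi + z' \in I$ barely distorts $\E[\xi \mid \xi \in I]$ differently, so that
\[
\E[\xi + z \mid \xi + z \in I] - \E[\xi + z' \mid \xi + z' \in I] = \E[z] - \E[z'] \pm \eta A \sigma.
\]
The obstacle here is that $\xi$ has no moment bounds, so mass can come from arbitrarily far away; this is exactly why $\hat P_j(i)$ in line 7 carries the tail sum $\sum_j (1/(i-j)^2)\Pr[|x| \in Aj\sigma + A\sigma[-4,5)]$ in addition to the local term. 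I would formalize the claim in the overview: the total ``mass crossing'' summed over all boundaries is bounded by an absolute constant $C'$, so a harmonic-series pigeonhole over $k \in [1/(\alpha\eta^2), (C/\alpha + 1/(\alpha \eta^2)^\eta)^{1/\eta}]$ yields some $k$ with $k \hat P(k) \leq \eta \Pr[|\xi| < A\sigma k]$, matching the search in line 9. Given such $k$, a short calculation that tracks how much $z, z'$ can shift mass into and out of $I$ gives the desired $\eta A\sigma$ contraction in line 10.

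The final step is to control sample complexity and failure probability. Each iteration needs to estimate both the probabilities $\hat P(k)$ and the conditional means to additive accuracy $O(\eta / k)$, and since $k$ can be as large as $(O(1)/\alpha)^{1/\eta}$ this forces $m/T = \tilde \Omega((1/\eta^4)(O(1)/\alpha)^{2/\eta})$ samples per batch by Chernoff and Chebyshev bounds; summing over $T = O(\log_{1/\eta}(1/\alpha))$ iterations and union-bounding gives the stated $m$. Extending this to higher dimensions is done separately in \Cref{lem:loc_est} by applying the one-dimensional routine coordinate-wise in a random basis, so here I only need the one-dimensional guarantee. The main obstacle of the proof is the crossing-mass argument (existence of good $k$ with quantitative control on $\hat P(k)$), since the absence of moments on $\xi$ makes the elementary argument in the overview insufficient and forces the finer, weighted bound encoded in $\hat P_j(i)$.
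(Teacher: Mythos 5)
Your proposal matches the paper's proof essentially step for step: a rough $O(\sigma/\sqrt{\alpha})$ estimate from the median of pairwise differences, followed by iterative $\eta$-contraction using conditional expectations over an interval $[-kA\sigma, kA\sigma]$ found by the harmonic-series pigeonhole on the weighted crossing-mass bound $\hat P$, with the same sample-complexity accounting. The only blemish is the arithmetic slip $O(\sigma/\sqrt{\alpha^2}) = O(\sigma/\sqrt{\alpha})$ (the atom of $\xi - \tilde\xi$ is $\alpha^2$, so the fact literally gives $O(\sigma/\alpha)$), but this is immaterial since the subsequent iterations absorb the weaker starting radius.
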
 

\begin{proof}
We first identify $t$ up to an error of $O(\sigma/\sqrt{\alpha})$ with the following claim. 

\begin{claim}[Rough Estimate]
\label{claim:rough_estimate}
There is an algorithm which, for $m = O((1/\alpha^4)~\log(1/\delta))$ given samples $\{ \xi_i + y_i + t \}_{i=1}^m$ and $\{ \tilde \xi_i + y'_i \}_{i=1}^m$ where $\xi_i$ and $\tilde \xi_i$ are both drawn from $D_\xi$, $y_i$ and $y'_i$ are drawn from distinct distributions with zero mean and variance bounded above by $\sigma$, and $t \in \R$ is an unknown translation,
returns $|t'_r-t| \leq O(\sigma \alpha^{-1/2})$.
\end{claim}

We use $t'_r$ to center the two distributions up to an error of $\sigma \alpha^{-1/2}$.
Let the centered distributions be given by 
$\xi + z$ and $\xi + z'$ where $z, z'$ are independent, 
have means that are at most $\sigma \alpha^{-1/2}$ in magnitude and have 
variance that is bounded by $4 \sigma^2$.

Let $A \Delta B$ denote the symmetric difference of the two events $A, B$. To improve our estimate further, 
we restrict our attention to an interval $I = \sigma (-i, i)$ 
such that  neither $z$ nor $z'$ moves more than a negligible mass of $\xi$ either into, or out of $I$, i.e. $\pr[(\xi + z \in I) ~\Delta ~(\xi \in I)] < O(\eta \pr[\xi \in I])$
and $\pr[(\xi + z' \in I) ~\Delta ~(\xi \in I)] < O(\eta \pr[\xi \in I])$. 

To detect such an interval (if it exists), 
we control the total mass contained in, 
and moved across the intervals $\sigma (-i-1, -i)$ and $\sigma (i, i+1)$ 
when $z$ or $z'$ are added to $\xi$. We denote this mass by  $P''(i)$.
 \Cref{claim:tildeP} shows that there is
 a function $\tP(\cdot)$ which we can
compute using our samples from the distributions of $\xi + z$ and $\xi + z'$,
which serves as a good upper bound on $P''(i)$. 

\begin{claim}\label{claim:tildeP}
There exists a function $\tP : \mathbb N \rightarrow \R^+$ which satisfies the following: \\
1. For all $i \in \mathbb N$, $\tP(i) \geq P(i, z) + P(i, z')$ which 
    can be computed using samples from $\xi + z$ and $\xi + z'$.\label{item:tPi_bigger_than_Pi}\\
2. There is a \new{$k \in [(1/\alpha \eta^2), (C/\alpha+1/(\alpha \eta^2)^\eta)^{1/\eta}]$}
    such that $k \tP(k) < \eta \sum_{j=0}^k \tP(k)$.\label{item:tPk_is_small}\\ 
3. $\sum_{j=0}^k \tP(k) = O(\pr[\abs{\xi + z} \leq A \sigma k] + \pr[\abs{\xi + z'} \leq A \sigma k])$.\\
4. With probability $1-\delta$, for \new{all $i < (O(1)/\alpha)^{1/\eta}/\eta$}, 
    $\tP(i)$ can be estimated to an accuracy of $O(\eta/i)$ 
    by using \new{$(O(1)/ \alpha)^{2/\eta} \log(1/\eta \alpha \delta) / \eta^5$ samples} from the distributions of 
    $\xi + z$ and $\xi + z'$. \label{item:tPK_is_computable}
\end{claim}

\Cref{claim:fine_estimate} shows that if such a $\tP(\cdot)$ exists, then 
the difference of the conditional expectations suffices to get a good estimate. 
This is because conditioning on
$\abs{\xi + z} < 10k\sigma$ satisfying conditions (2) and (3) above
is almost the same as conditioning on $\abs{\xi} < 10 k\sigma$.

\begin{claim}
\label{claim:fine_estimate}
Suppose $i > (\eta \alpha)^{-1}$ and $\tP(\cdot)$ satisfies the conclusions of \Cref{claim:tildeP}.
If $z, z'$ have $A/2~\sigma$
bounded means and $4 \sigma^2$ bounded variances for some $A$,
then
$\E[\xi + z \mid \abs{\xi + z} \leq A\sigma  i] - 
\E[\xi + z' \mid \abs{\xi + z'} \leq A\sigma i] 
= \E[z] - \E[z'] \pm O(A\eta \sigma)$
\end{claim}

The conclusion of the one-dimensional location estimation theorem then follows by putting these together, and iterating the above steps after re-centering the means.
The sample complexity is governed by the following considerations:

(1) We require the difference $\E[\xi + z \mid \abs{\xi + z} \leq \sigma \alpha^{-1/2} k] - 
\E[\xi + z' \mid \abs{\xi + z'} \leq \sigma \alpha^{-1/2} k]$ to concentrate around its mean
for some $i \leq \sigma (C/\alpha+1/(\alpha \eta^2)^\eta)^{1/\eta}$.
(2) We require $\pr[\xi + z \in \sigma (i, i+1)]$ to concentrate for all 
integer $i$ of magnitude at most $\sigma (C/\alpha+1/(\alpha \eta^2)^\eta)^{1/\eta}$.

An application of Hoeffding's inequality (\Cref{lem:hoeffding}) and 
a union bound over all the events prove that it suffices to ensure sample complexity to be
$m \geq \poly(1/\eta, (O(1)/\alpha)^{1/\eta}, \log(1/\delta \eta \alpha))$.

\end{proof}

\Cref{lem:loc_est} uses the algorithm for one dimension
to derive a higher-dimensional guarantee.
The idea is to perform a random rotation, 
which, with high probability, ensures that the variance
of the distribution is $O(\sigma \sqrt{\log(d)} /\sqrt{d})$ along
each coordinate and then apply the one-dimensional
algorithm coordinate-wise according to the random bases. We defer the proof to \Cref{sec:loc_est}. 
The algorithm referenced in this lemma
is named ``ShiftHighD", which we use later.

\begin{lemma}[Location Estimation]\label{lem:loc_est}
Let $D_{y_i}$ for $i \in \{1,2\}$ be the distributions of $\xi + z_i$ for $i \in \{1, 2\}$
where $\xi$ is drawn from $D_\xi$ and  $\pr_{\xi \sim D_\xi}[\xi = 0] \geq \alpha$ and $z_i \sim D_i$ are distributions over $\R^d$
satisfying $\E_{D_i}[x] = 0$ 
and $\E_{D_i}[\norm{x}^2] \leq \sigma^2$.
Let $v \in \R^d$ be an unknown shift, and $D_{y_2, v}$ denote the distribution of $y_2$ shifted by $v$. 

There is an algorithm (\Cref{alg:loc_algo}), 
which draws $\poly(1/\eta,(O(1)/\alpha)^{1/\eta}, \log(1/\delta\eta\alpha))$ samples each 
from \new{$D_{y_1}$ and $D_{y_2, v}$} 
runs in time $\poly(d, 1/\eta, (O(1)/\alpha)^{1/\eta}, \log(1/\delta \eta \alpha))$
and returns $v'$ satisfying 
$\norm{v' - v} \leq O(\eta \sigma)$ with probability $1-\delta$.
\end{lemma}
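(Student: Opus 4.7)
The plan is to reduce the $d$-dimensional location estimation problem to $d$ one-dimensional instances via a Haar-random rotation and then invoke \Cref{lem:1d-loc-est} coordinate-wise. Concretely, I would draw $U \in \R^{d\times d}$ from the Haar measure on the orthogonal group and apply $U$ to every sample, turning the samples from $D_{y_1}$ and $D_{y_2,v}$ into samples of $U\xi + Uz_1$ and $U\xi + Uz_2 + Uv$ respectively. Since $U^T$ is an isometry, recovering $Uv$ to error $O(\eta\sigma)$ is equivalent to recovering $v$ to the same error, so it suffices to estimate $(Uv)_j$ for each coordinate $j$ from the projected samples $(U\cdot)_j$.

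The main technical step is a variance bound after rotation: with probability at least $1-\delta/2$ over $U$, for every $j\in[d]$ and $i\in\{1,2\}$ one has $\Var[(Uz_i)_j]\leq \tilde\sigma^2 := O(\sigma^2\log(d/\delta)/d)$. Writing $\Sigma_i := \E[z_i z_i^T]$ with $\trace(\Sigma_i)\le\sigma^2$ and $\|\Sigma_i\|_{\op}\le\sigma^2$, the $j$th coordinate variance equals $u_j^T\Sigma_i u_j$, where $u_j$ is the $j$th row of $U$. Since $u_j$ is marginally uniform on the sphere, standard concentration for Lipschitz quadratic forms on the sphere (or equivalently a Hanson--Wright-type inequality) yields $\pr[u_j^T\Sigma_i u_j > C\sigma^2\log(d/\delta)/d]\le \delta/(4d)$, and a union bound over $j$ and $i$ concludes. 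The remaining hypotheses of \Cref{lem:1d-loc-est} hold by construction: $(Uz_i)_j$ has mean zero since $\E[z_i]=0$, and the scalar oblivious noise $(U\xi)_j$ still places mass at least $\alpha$ at $0$ because $\xi=0$ forces $(U\xi)_j=0$.

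With these inputs, for each coordinate $j$ I would invoke \Cref{alg:loc_algo1d} with parameters $\alpha,\eta,\tilde\sigma$ and per-call failure probability $\delta/(2d)$, obtaining $\hat v_j$ with $|\hat v_j - (Uv)_j|\le O(\eta\tilde\sigma)$ simultaneously for all $j$ by a union bound. Setting $v' := U^T\hat v$ and combining coordinate errors in quadrature yields
\[
\|v'-v\|^2 \;=\; \sum_{j=1}^d \bigl(\hat v_j - (Uv)_j\bigr)^2 \;\le\; d\cdot O\bigl(\eta^2\tilde\sigma^2\bigr) \;=\; O\bigl(\eta^2\sigma^2\log(d/\delta)\bigr).
\]
To reach the clean $O(\eta\sigma)$ bound, one either absorbs the $\sqrt{\log(d/\delta)}$ factor into the hidden constants or, equivalently, invokes the 1D routine with precision $\eta/\sqrt{C\log(d/\delta)}$; the sample complexity incurs at most polylogarithmic overhead in $d$, staying within $\poly(1/\eta,(O(1)/\alpha)^{1/\eta},\log(1/\delta\eta\alpha))$ under the paper's polylog conventions. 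Samples from the two distributions can be reused across coordinates.

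The main obstacle is the uniform-over-coordinates variance bound in the second paragraph, since rows of a Haar matrix are not mutually independent; the argument hinges on marginal sphere-uniformity of each row together with a sharp quadratic-form concentration and a union bound over $d$ coordinates. A secondary care point is aligning each 1D instance with the model of \Cref{lem:1d-loc-est}: the scalar oblivious noise must have mass at least $\alpha$ at zero, the additive noise must be mean-zero with variance at most $\tilde\sigma^2$, and the translation must be a deterministic shift of only the second distribution -- all of which follow from orthogonality of $U$ and the properties of $\xi$ and $z_i$ established above.
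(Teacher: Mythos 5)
Your overall strategy is the same as the paper's: randomly rotate so that the inlier variance spreads roughly evenly over coordinates, run the one-dimensional routine (\Cref{lem:1d-loc-est}) coordinate-wise on the shared sample sets, and rotate back. The coordinate-wise reduction itself is fine (the scalar oblivious noise retains mass $\alpha$ at zero, the projected $z_i$ are mean zero, and the shift is deterministic). The gap is in your probability/error accounting at the end. To make \emph{all} $d$ coordinates have small variance simultaneously with probability $1-\delta/2$ you are forced to take $\tilde\sigma^2 = O(\sigma^2\log(d/\delta)/d)$ --- and this log factor is not an artifact: for a rank-one $\Sigma_i$ some row of the rotation will typically pick up a $\Theta(\log d/d)$ fraction of the variance. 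The resulting final error $O(\eta\sigma\sqrt{\log(d/\delta)})$ cannot be ``absorbed into hidden constants,'' and your alternative fix of calling the 1D routine at precision $\eta/\sqrt{C\log(d/\delta)}$ does \emph{not} cost only polylogarithmic overhead: the 1D sample complexity scales as $(O(1)/\alpha)^{\Theta(1/\eta)}$, so the rescaling raises it to the power $\sqrt{C\log(d/\delta)}$, i.e.\ to $(O(1)/\alpha)^{\Theta(\sqrt{\log(d/\delta)}/\eta)}$. That introduces a quasi-polynomial dependence on $d$ into the sample complexity, whereas the lemma's sample bound has no dependence on $d$ at all.

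The paper sidesteps this by decoupling accuracy from confidence. A single run of the rotation-plus-coordinatewise procedure is only required to succeed with constant probability (say $99\%$), under which a per-coordinate variance bound of $O(\sigma^2/d)$ with no logarithmic inflation suffices, giving error $O(\eta\sigma)$ after summing in quadrature. The confidence is then boosted to $1-\delta$ by repeating the whole procedure $T=O(\log(1/\delta))$ times with fresh rotations and selecting, from the $T$ candidate shifts, one around which a $90\%$ fraction of the candidates cluster within radius $O(\eta\sigma)$ (steps 5--6 of \Cref{alg:loc_algo}). This multiplies the sample complexity by only $O(\log(1/\delta))$ rather than exponentiating it. To repair your argument you should adopt this amplification step (or otherwise justify a $d$-free way of getting all coordinates simultaneously below variance $O(\sigma^2/d)$ at confidence $1-\delta$, which the rank-one example shows is not possible for a single rotation).
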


\section{Algorithmic Results for Stochastic Optimization}

Our first main result demonstrates that if we have access to an algorithm
that recovers a list of size $s$ such that one of the means is
$O(\eta \sigma)$-close to the true mean in $\ell_2$ norm, 
then there is an algorithm which is able to recover $\hat x$
such that $\norm{\nabla f(\hat x)} \leq O(\eta \sigma) + \eps$ given access to gradients that are close to the true gradient up to an $\ell_2$ error of  $O(\eta \sigma)$. 

\subsection{Proof of \Cref{thm:LDSO_reduce_to_LDME_informal}: List-decodable Stochastic Optimization $\rightarrow$ LDME}

\begin{algorithm}[H]
\caption{Noisy Gradient Optimization: NoisyGradDesc$(\alpha, \tau, \delta, \cO, \cA_{G}, \cA_{ME})$}
\label{alg:main_algo}
\SetAlgoLined
\textbf{Input:} 
$\alpha$, $\eta$, $\delta$, 
$(\alpha, O(\eta\sigma), s)$-LDME algorithm $\cA_{ME}$ for 
$\cD_\sigma$,
$\cO_{\alpha, \sigma, f}(\cdot)$, 
$(O(\eta\sigma), \eps)$-learner $\cA_{G}$.\\

1. Let $m' = m_{\cA_{ME}}$\\ 
2. Query $\cO_{\alpha, \sigma, f}(0)$ to get samples $S^* \leftarrow \{ \nabla f(\gamma_i, 0) + \xi_i \mid i \in [m'] \}$.\\
3. Let $\cL_0 := \{ g_1, \dots, g_s\} \leftarrow \cA_{ME}(S)$.\\
4. Initialize starting points $x_i^{0} := g_i$ for $i \in [s]$. \\
5. For each $i \in [s]$, run $\cA_{G}$ with $x_i^{0}$ as the initial point and  
   InexactOracle$(x; \cO_{\alpha, \sigma, f}(\cdot), O(\eta\sigma), \cL_0)_i$ (the $i$-th element of the list) as gradient access to output $x_i^{final}$.\\
6. Return $\{ x_i^{final} \mid i \in [s]\}$.\\
\end{algorithm}

\begin{proof}
The key to our proof is to recognize that at every step, 
we effectively have an oracle for an inexact gradient of $f$ in the sense of
$\cO^{\mathrm{grad}}_{O(\eta \sigma), f}$ as defined in \Cref{def:robust_learner}. 

\begin{algorithm}[H]
\caption{Inexact Gradient Oracle: InexactOracle$(x; \cO_{\alpha, \sigma, f}, \tau, \cL_0)$}
\label{alg:inexact_oracle}
\SetAlgoLined
\textbf{input:} 
$x$,
oracle $\cO_{\alpha, \sigma, f}$, 
error $\tau$,
a list $\cL_0$ of candidates such that $\min_{g \in \cL} \norm{g - \nabla f(0)} \leq O(\eta \sigma)$. \\
1. Let $\eta := \sigma/\tau$ and $m' = \tilde O((1/\eta \alpha)^{3/\eta})$\\
2. Query $\cO_{\alpha, \sigma, f}(0)$ to get samples $S^* \leftarrow \{ \nabla f(\gamma_i, 0) + \xi_i \mid i \in [m'] \}$.\\
3. Query $\cO_{\alpha, \sigma, f}(x)$ to get samples $S \leftarrow \{ \nabla f(\gamma_i, x) + \tilde \xi_i \mid i \in [m'] \}$. \\
4. $v:= \text{ShiftHighD}(S,S^*, \eta, \sigma)$.\\
5. Return $\cL_0 + v$.\\
	
\end{algorithm}

\begin{claim}
Given an initial list $\cL_0 := \{ g_1, \dots, g_s \}$ 
such that there is some fixed $i\in[s]$ 
for which $\norm{g_i - \nabla f(0)} \leq O(\eta \sigma)$,
a point $x$ and
access to $\cO_{\alpha, \sigma, f}(\cdot) $,
\Cref{alg:inexact_oracle} returns a list 
$\cL := \{g'_1, \dots, g'_s\}$ such that 
$\norm{g'_i - \nabla f(x)} \leq O(\eta \sigma)$ for the same $i$. 
\end{claim}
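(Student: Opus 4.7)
The plan is to show that the two sample sets $S^*$ and $S$ obtained in \Cref{alg:inexact_oracle} fit exactly the hypotheses of the higher-dimensional location estimation result (\Cref{lem:loc_est}), with unknown shift equal to $\nabla f(x) - \nabla f(0)$. Once that is established, a single triangle inequality combining the list guarantee with the shift guarantee yields the claim.

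First I would decompose each oracle response. Writing $e(\gamma,x) := f(\gamma,x) - f(x)$, the oracle at a point $y$ returns $\nabla f(y) + \nabla e(\gamma, y) + \zeta$, where by \Cref{def:oblivious_noise_oracle} the variable $\zeta$ is drawn from $D_\xi$ independently of $\gamma$ (and of $y$), while $\nabla e(\gamma, y)$ has mean zero and $\E\|\nabla e(\gamma,y)\|^2 \le \sigma^2$. Hence $S^*$ consists of i.i.d.\ samples from the distribution of $\nabla f(0) + Z_1 + \xi$, and $S$ of i.i.d.\ samples from the distribution of $\nabla f(x) + Z_2 + \tilde\xi$, where $Z_1, Z_2$ are independent mean-zero random vectors with variance at most $\sigma^2$, and $\xi, \tilde\xi$ are independent draws from $D_\xi$.

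Next I would instantiate \Cref{lem:loc_est} with the inlier distributions being the laws of $Z_1$ and $Z_2$ (both in $\cD_\sigma$), the shared oblivious noise distribution $D_\xi$, and the unknown shift $v^\star := \nabla f(x) - \nabla f(0)$. Since $\cA_{\text{InexactOracle}}$ uses $m' = \tilde O((1/(\eta \alpha))^{3/\eta})$ samples, which exceeds the $\poly(1/\eta,(O(1)/\alpha)^{1/\eta},\log(1/\delta\eta\alpha))$ requirement of \Cref{lem:loc_est}, the call to ShiftHighD returns a vector $v$ satisfying
\[
\|v - v^\star\| = \|v - (\nabla f(x) - \nabla f(0))\| \le O(\eta \sigma)
\]
with the stated failure probability.

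Finally, for the fixed index $i$ with $\|g_i - \nabla f(0)\| \le O(\eta\sigma)$, the returned list contains $g'_i := g_i + v$, so
\[
g'_i - \nabla f(x) = \bigl(g_i - \nabla f(0)\bigr) + \bigl(v - (\nabla f(x) - \nabla f(0))\bigr),
\]
and the triangle inequality gives $\|g'_i - \nabla f(x)\| \le O(\eta\sigma) + O(\eta\sigma) = O(\eta\sigma)$, which is the desired conclusion. The only delicate point is verifying the independence structure needed to invoke \Cref{lem:loc_est}: the observation noises $Z_1,Z_2$ must be independent of the oblivious pieces $\xi,\tilde\xi$, and the two oblivious draws must be independent across the two oracle calls. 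Both follow from the definition of $\cO_{\alpha,\sigma,f}$, since each query draws a fresh $\gamma$ and a fresh $\xi$ independently, and queries at different points use independent randomness; no further work is required.
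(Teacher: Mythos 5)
Your proposal is correct and follows essentially the same route as the paper's proof: decompose the oracle responses via $e(\gamma,y) = f(\gamma,y) - f(y)$, observe that the two sample sets are a noisy distribution and its translate by $\nabla f(x) - \nabla f(0)$ fitting the hypotheses of \Cref{lem:loc_est}, and conclude with a triangle inequality on the fixed index $i$. Your explicit check of the independence structure is a slightly more careful rendering of what the paper asserts implicitly, but it is the same argument.
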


\begin{proof}
$\cO_{\alpha, \sigma, f} (y)$ returns samples of the kind 
$\{ \xi_i + e(\gamma, x)_i + \nabla f(x) \}_{i=1}^m$ and $\{ \tilde \xi_i + e(\gamma, 0)_i + \nabla f(0) \}_{j=1}^m$ when $y = x$ and $y = 0$ respectively, 
with $e(\gamma, y)_i$ being drawn from a distribution with mean $0$ and variance bounded by $\sigma^2$, and $\xi_i$ and $\tilde \xi_i$ being drawn from $D_\xi$ where $\pr_{\xi \sim D_\xi}[\xi = 0] \geq \alpha$. 
Hence, one can interpret the samples drawn as being in the setting of \Cref{lem:loc_est}
with the shift $v = \nabla f(0) - \nabla f(x)$. 

Let $S_1$ and $S_2$ be drawn from $\cO_{\alpha, \sigma, f} (y)$ with $y = 0$ and $y=x$.
Running \Cref{alg:loc_algo} on $S_1, S_2$, 
we recover $v$ satisfying $\norm{v -  (\nabla f(x) - \nabla f(0))} \leq O(\eta \sigma)$. 
A triangle inequality now tells us that if we set $g'_i := g_i + v$, we get
$\norm{g'_i - \nabla f(x)} 
= \norm{g'_i - g_i + g_i - \nabla f(0) + \nabla f(0) - \nabla f(x)} 
\leq O(\eta \sigma) + O(\eta \sigma) = O(\eta \sigma)$.
\end{proof}

$g'_i$ can be interpreted as the output of $O^{\text{grad}}_{O(\eta \sigma), f}(x)$, since $\norm{g'_i - \nabla f(x)} \leq O(\eta \sigma)$. The result then follows from the guarantees of $\cA_G$, 
since for at least one of the sequences of gradients, \Cref{alg:main_algo} replicates every step of $\cA_G$.
\end{proof}

Substituting the guarentees of \Cref{fact:list_decoding_D_sigma} for the list-decoding
algorithm in the above theorem then gives us the following corollary, the proof of which we defer to \Cref{sec:corr_LDME_and_fact}.

\begin{corollary}\label{corr:substitute_ldme}
Given access to oblivious noise oracle $\cO_{\alpha, \sigma, f}$ and 
a $(O(\eta \sigma), \eps)$-inexact-learner $\cA_{G}$ running in time $T_G$,
there exists an algorithm which takes 
$\poly((O(1)/\alpha)^{1/\eta^2}, \log(T_G/\delta\eta\alpha))$ samples,
runs in time $T_G \cdot\poly(d, (O(1)/\alpha)^{1/\eta^2}, \log(1/\eta \alpha \delta))$,
and with probability $1-\delta$ returns a list $\cL$ of size $\tilde O((1/\alpha)^{1/ \eta^2})$
such that $\min_{x\in \cL} \norm{\nabla f(x)} \le O(\eta \sigma) + \eps$.
Additionally, the exponential dependence on $1/\eta$ in the size of the list is necessary. 
\end{corollary}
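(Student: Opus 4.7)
The plan is to instantiate Theorem~\ref{thm:LDSO_reduce_to_LDME_informal} with $\cA_{ME}$ taken to be the list-decodable mean estimation routine guaranteed by Fact~\ref{fact:list_decoding_D_sigma}. First I would verify that the hypothesis of the theorem—namely that the ``inlier'' gradient distribution at $x=0$ lies in $\cD_\sigma$—is met: the distribution of $\nabla f(\gamma, 0)$ has mean $\nabla f(0)$ and squared-norm deviation at most $\sigma^2$ by Definition~\ref{def:oblivious_noise_oracle}, so it is indeed in $\cD_\sigma$. Consequently, Fact~\ref{fact:list_decoding_D_sigma} supplies an $(\alpha, O(\eta\sigma), s)$-LDME algorithm with $s = \tilde O((1/\alpha)^{1/\eta^2})$, running in time $\tilde O(d (1/\alpha)^{1/\eta^2})$ with matching sample complexity.

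Next I would substitute these values into the conclusion of Theorem~\ref{thm:LDSO_reduce_to_LDME_informal}. The list size $s = \tilde O((1/\alpha)^{1/\eta^2})$ and the gradient error bound $\min_{x \in \cL}\norm{\nabla f(x)} \le O(\eta\sigma) + \eps$ are inherited directly. The sample and time complexities combine $m_{\cA_{ME}}$ and $T_{\cA_{ME}}$ from the fact with the location-estimation overhead $\tilde O((O(1)/\alpha)^{2/\eta}/\eta^5)$ per call appearing in the theorem; since $1/\eta^2 \ge 2/\eta$ whenever $\eta \le 1/2$, the $(O(1)/\alpha)^{1/\eta^2}$ factor from the LDME dominates the $(O(1)/\alpha)^{2/\eta}$ contribution from the location estimation. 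This yields the stated $\poly((O(1)/\alpha)^{1/\eta^2}, \log(T_G/\delta\eta\alpha))$ sample complexity and $T_G \cdot \poly(d, (O(1)/\alpha)^{1/\eta^2}, \log(1/\eta\alpha\delta))$ runtime.

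For the lower-bound half—that exponential dependence on $1/\eta$ in the list size is necessary—the plan is to argue by contradiction using Theorem~\ref{thm:LDME_reduce_to_LDSO_informal}. Suppose some algorithm solved list-decodable stochastic optimization with oblivious noise, achieving gradient error $O(\eta\sigma)$ using a list of size sub-exponential in $1/\eta$. Feeding this algorithm through the reduction of Theorem~\ref{thm:LDME_reduce_to_LDSO_informal} (taking $f(x) = \tfrac12 \|x + \mu\|^2$ as in the technical overview) would then yield an $(\alpha, O(\eta\sigma), s)$-LDME algorithm on $\cD_\sigma$ with that same sub-exponential list size, directly contradicting the converse clause of Fact~\ref{fact:list_decoding_D_sigma}.

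The main obstacle is purely bookkeeping rather than substantive: the single invocation of $\cA_{ME}$ at $x=0$ and the $O(T_G)$ invocations of the location-estimation subroutine inside \Cref{alg:inexact_oracle} must union-bound to total failure probability at most $\delta$. Running each location-estimation call at confidence $\delta/T_G$ only adds a $\log(T_G/\delta)$ factor to the per-call sample complexity, which is precisely what produces the $\log(T_G/\delta\eta\alpha)$ term in the stated sample bound; no new technical ingredient beyond Theorem~\ref{thm:LDSO_reduce_to_LDME_informal} and Fact~\ref{fact:list_decoding_D_sigma} is needed.
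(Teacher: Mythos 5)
Your proposal is correct and matches the paper's own (one-line) proof: the corollary is obtained exactly by substituting the guarantees of \Cref{fact:list_decoding_D_sigma} for $\cA_{ME}$ in \Cref{thm:LDSO_reduce_to_LDME_informal}, with the necessity of the exponential list size following from the converse clause of that fact via \Cref{thm:LDME_reduce_to_LDSO_informal}. Your additional bookkeeping (verifying the inlier distribution lies in $\cD_\sigma$, comparing the $(O(1)/\alpha)^{1/\eta^2}$ and $(O(1)/\alpha)^{2/\eta}$ factors, and union-bounding the failure probability over the $O(T_G)$ location-estimation calls) is consistent with the stated bounds and simply makes explicit what the paper leaves implicit.
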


\subsection{Proof of \Cref{thm:LDME_reduce_to_LDSO_informal}: LDME $\rightarrow$ List-Decodable Stochastic Optimization}

In this subsection, we show the converse of the results from the previous subsection, i.e.
that list-decodable stochastic optimization can be used
to perform list-decodable mean estimation.

\begin{proof}

Assume there exists an algorithm $\cA$ that can
recover an $s$-sized list containing $\wh x$ 
such that $\norm{ \nabla f(\wh x) } \leq \eps$ when given access to $\cO_{\alpha, \sigma, f}^o(\cdot)$.
From the list-decodable mean estimation setting, denote $D$ to be the distribution of $\xi + z$ where $\E[z] = \mu$ and $\E[\norm{z - \mu}^2] < \sigma^2$,
and $\pr[\xi = 0] \geq \alpha$.

The goal of LDME is to recover $\mu$ from samples from $D$.
We will show that we can recover an $s$-sized list
that contains a $\wh \mu$ satisfying $\norm{\wh \mu - \mu} \leq \eps$. 

We do this by simulating the oracle $\cO_{\alpha, \sigma, (1/2)\norm{x + \mu}^2}(\cdot)$ for $\cA$.
To do this, whenever $\cA$ asks for a query at $x$, 
we return $x + p$ where we sample $p \sim D$ of the list-decodable setting. 
This effectively simulates the oracle for the objective function $f(x) = (1/2)\norm{x + \mu}^2$, 
where the oblivious noise $\xi$ is the same, 
and the observation noise is $(z-\mu)$. 
Hence $\cA$ will return a list $\cL$ of size $s$ containing $\wh x$ satisfying $\norm{ \wh x + \mu } \leq \eps$. 
Finally, return $-\cL$. This is the solution to list-decodable mean estimation 
because $-\cL$ contains $\wh \mu := -\wh x$, which satisfies $\norm{ \wh \mu - \mu } \leq \eps$. 
\end{proof}

\section{Conclusion}

In this paper, we initiate the study of stochastic optimization in the presence of oblivious noise, which extends the traditional heavy-tailed noise framework. In our setting, the stochastic gradient is additionally affected by independent oblivious noise that lacks bounded moments and may not be centered. We also design an algorithm for finite-sample noisy location estimation based on taking conditional expectations, which we believe is of independent interest.  
We note that while the exponential dependence on $1/\eta$ for the
size of the list is unavoidable, it is an open problem to show that this is the case for the problem of noisy location estimation.

\newpage

\bibliographystyle{abbrvnat}
\bibliography{allrefs}

\newpage
\appendix
\appendix

\section*{Supplementary Material}

\paragraph{Organization} 
In \Cref{app:basic_facts}, we state some elementary probabilistic facts. 
The next two sections focus on proving our lemma on noisy location estimation. 
In \Cref{app:useful_lemmas}, we prove some critical lemmas used in the proof, 
and in \Cref{sec:loc_est}, 
we present the complete version of our location estimation algorithm.

Moving forward, in \Cref{sec:ldme}, 
we introduce an algorithm and prove a hardness result for 
the specific version of list-decodable mean estimation we consider,
which differs from prior work. 
Finally, in \Cref{sec:corr_LDME_and_fact}, 
we state the final guarantees we can get for the problem of list-decodable 
stochastic optimization, incorporating our lemma from \Cref{sec:ldme}.

\section{Elementary Probability Facts}\label{app:basic_facts}

In this section, we recall some
elementary lemmas from probability theory.

\begin{lemma}[Hoeffding]\label{lem:hoeffding} 
Let $X_1, \dots X_n$ be independent random variables such that $X_i \in [a_i, b_i]$. Let $S_n := \frac{1}{n} \sum_{i=1}^{n} X_i$, then for all $t > 0$
\begin{align*}
    \pr[\abs{S_n - \E[S_n]} \geq t] \leq \exp\Paren{-\frac{2n^2 t^2}{\sum_{i=1}^n (b_i - a_i)^2}} \;.
\end{align*}
\end{lemma}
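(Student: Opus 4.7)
The plan is to apply the standard Chernoff-style moment generating function (MGF) argument combined with Hoeffding's lemma for the MGF of bounded random variables. First, I would set $Y_i := X_i - \E[X_i]$, so each $Y_i$ is mean-zero and lies in an interval of width $b_i - a_i$. By Markov's inequality applied to $e^{\lambda n (S_n - \E[S_n])}$ for $\lambda > 0$, we have
\[
\pr[S_n - \E[S_n] \ge t] = \pr\bigl[e^{\lambda \sum_i Y_i} \ge e^{\lambda n t}\bigr] \le e^{-\lambda n t} \prod_{i=1}^n \E[e^{\lambda Y_i}],
\]
where the factorization uses independence of the $X_i$.

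The key intermediate step is Hoeffding's lemma: if $Y$ is a mean-zero random variable supported in $[a, b]$, then $\E[e^{sY}] \le \exp(s^2 (b-a)^2 / 8)$. I would establish this by using convexity of $x \mapsto e^{sx}$ to write $e^{sY} \le \frac{b-Y}{b-a} e^{sa} + \frac{Y-a}{b-a} e^{sb}$, taking expectations (using $\E[Y]=0$) to obtain a function $\phi(s) := \log\bigl(\frac{b}{b-a} e^{sa} - \frac{a}{b-a} e^{sb}\bigr)$, and then showing $\phi''(s) \le (b-a)^2/4$ by rewriting $\phi''$ as a variance of a Bernoulli-like random variable bounded above by the variance of a uniform random variable on $[a,b]$. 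A second-order Taylor expansion around $s=0$ (where $\phi(0)=\phi'(0)=0$) then yields $\phi(s) \le s^2 (b-a)^2 / 8$.

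Plugging Hoeffding's lemma back into the Chernoff bound gives
\[
\pr[S_n - \E[S_n] \ge t] \le \exp\Bigl(-\lambda n t + \tfrac{\lambda^2}{8} \sum_{i=1}^n (b_i - a_i)^2\Bigr).
\]
Optimizing over $\lambda > 0$ by setting $\lambda^\star = 4 n t / \sum_i (b_i - a_i)^2$ yields the one-sided bound $\exp\bigl(-2 n^2 t^2 / \sum_i (b_i - a_i)^2\bigr)$. Applying the same argument to $-Y_i$ handles the lower tail, and a union bound combines the two sides to give the two-sided bound in the stated form (up to the usual factor-of-two constant).

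I do not anticipate a serious obstacle here: this is a textbook derivation, and the only slightly delicate piece is the variance-of-a-Bernoulli computation inside the proof of Hoeffding's lemma, which I would handle by the explicit substitution $u := b-a$, $p := -a/(b-a)$ and bounding $p(1-p) \le 1/4$.
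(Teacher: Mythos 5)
Your derivation is the standard Chernoff--Hoeffding argument and is correct; the paper states this lemma without proof as a textbook fact, so there is no alternative approach to compare against. Your parenthetical about the factor of two is apt: the two-sided bound as stated in the paper is missing the usual leading factor of $2$, which your union-bound step would produce, but this is immaterial to how the lemma is used.
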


\begin{lemma}[Multivariate Chebyshev]
\label{lem:multi_Chebyshev}
Let $X_1, \dots, X_m$ be independent random variables drawn from $D$ where $D$ is a distribution over $\R^d$ such that $\E_{X\sim D}[X] = 0$ and $\E_{X\sim D}[\norm{X}^2] \leq \sigma^2$.
Let $S_m := \frac{1}{m} \sum_{i=1}^{m} X_i$, then for all $t > 0$,
\begin{align*}
 \pr[\norm{S_m} \geq t] \leq
 \sigma^2 / m t^2  \;.
\end{align*}
\end{lemma}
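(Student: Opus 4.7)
The plan is to reduce the multivariate statement to a standard scalar Markov-type bound applied to the nonnegative random variable $\|S_m\|^2$. Concretely, I would first observe that $\pr[\|S_m\| \ge t] = \pr[\|S_m\|^2 \ge t^2]$, and then apply Markov's inequality to get $\pr[\|S_m\|^2 \ge t^2] \le \E[\|S_m\|^2]/t^2$. So everything comes down to computing (or upper bounding) $\E[\|S_m\|^2]$.

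Next, I would expand the squared norm of the average using bilinearity of the inner product:
\[
\E[\|S_m\|^2] = \frac{1}{m^2} \sum_{i=1}^m \E[\|X_i\|^2] + \frac{1}{m^2} \sum_{i \ne j} \E[\langle X_i, X_j\rangle].
\]
For the cross terms, independence of $X_i$ and $X_j$ together with $\E[X_i] = \E[X_j] = 0$ gives $\E[\langle X_i, X_j\rangle] = \langle \E[X_i], \E[X_j]\rangle = 0$ (one can justify this either coordinate-wise or by noting that $\langle \cdot,\cdot\rangle$ is bilinear and each $\E[X_i]$ vanishes). Thus only the diagonal terms survive, and since $\E[\|X_i\|^2] \le \sigma^2$ for each $i$ by hypothesis, we obtain $\E[\|S_m\|^2] \le \sigma^2/m$.

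Combining the two steps yields $\pr[\|S_m\| \ge t] \le \sigma^2/(m t^2)$, which is exactly the claimed bound. There is no real obstacle here: the only mild subtlety is justifying that the inner product $\langle X_i, X_j\rangle$ has expectation zero for independent centered vectors in $\mathbb{R}^d$, which follows by writing the inner product as $\sum_{k=1}^d X_{i,k} X_{j,k}$ and applying scalar independence coordinate-wise together with $\E[X_{i,k}] = 0$. The argument is essentially a direct vector-valued analogue of the usual scalar Chebyshev proof, and requires no additional structural assumptions beyond those already given.
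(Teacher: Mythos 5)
Your proof is correct. You apply Markov's inequality directly to the nonnegative scalar $\norm{S_m}^2$, expand $\E[\norm{S_m}^2]$ by bilinearity, and kill the cross terms $\E[\langle X_i, X_j\rangle]$ using independence and zero mean, leaving $\E[\norm{S_m}^2] \le \sigma^2/m$. The paper instead projects onto unit directions: for each fixed unit vector $v$ it applies the one-dimensional Chebyshev bound to $\frac{1}{m}\sum_i X_i\cdot v$ (using $\E[(X_i\cdot v)^2]\le \E[\norm{X_i}^2]$) and then argues that the bound transfers to the unit vector in the (random) direction of $S_m$. The two arguments compute the same second-moment quantity, but yours is the cleaner and more airtight route: the paper's final step of plugging in the random direction $v = S_m/\norm{S_m}$ into a bound proved for fixed $v$ is delicate, since a pointwise-in-$v$ tail bound does not automatically apply to a data-dependent direction, whereas your version sidesteps this entirely by never decomposing into directions. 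The only point worth spelling out in your write-up, which you already flag, is that $\E[\langle X_i, X_j\rangle]=\langle \E[X_i],\E[X_j]\rangle$ for independent vectors, justified coordinate-wise.
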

\begin{proof}
We first prove the following upper bound,
\begin{align*}
\pr\Brac{\forall v~\norm{v} = 1. \Abs{\frac{1}{m}\sum_{i \in [m]} X_i \cdot v} > t} 
&< \frac{\E[|\sum_{i \in [m]} X_i \cdot v|^2]}{m^2~t^2}
< \frac{\E[\sum_{i, j} (X_i \cdot v)(X_j \cdot v)]}{m^2~t^2}\\
&< \frac{\sum_{i, j} \E[(X_i \cdot v)(X_j \cdot v)]}{m^2~t^2}
< \frac{\sum_{i} \E[(X_i \cdot v)^2]}{m^2~t^2}
< \frac{\sum_{i} \E[\norm{X_i}^2]}{m^2~t^2}\\
&< \frac{m \sigma^2}{m^2~t^2} =\frac{\sigma^2}{m~t^2} \;.
\end{align*}
Since the inequality holds for all unit $v$, it also holds for the unit $v$ in the direction of $S_m$, completing the proof. 
\end{proof}

\begin{fact}[Inflation via conditional probability]
\label{fact:conditional_inflation}
Let $y$ be a random variable with mean $\mu$ and variance $\sigma^2$
and let $\xi$ be an arbitrary random variable independent of $y$, then
\begin{align*}
\pr[\xi \in (a, b)] 
&< (1+2/A^2) \pr[\xi + y \in (a + \mu - \sigma A, b + \mu + \sigma A) \\
&< (1+2/A^2) \pr[\xi + y \in (a - |\mu| - |\sigma A|, b + |\mu| + |\sigma A|)] \;.
\end{align*}
\end{fact}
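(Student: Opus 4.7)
The plan is to deduce the fact from Chebyshev's inequality applied to $y$, combined with independence of $\xi$ and $y$. First I would observe that by Chebyshev's inequality, $\Pr[|y-\mu| \ge \sigma A] \le 1/A^2$, so
\[
\Pr[y \in (\mu - \sigma A, \mu + \sigma A)] \ge 1 - 1/A^2.
\]
Next, I would note the deterministic containment: whenever $\xi \in (a,b)$ and $y \in (\mu-\sigma A, \mu+\sigma A)$ jointly occur, their sum $\xi + y$ must lie in $(a+\mu-\sigma A, \, b+\mu+\sigma A)$. Therefore the event $\{\xi + y \in (a+\mu-\sigma A, b+\mu+\sigma A)\}$ contains the event $\{\xi \in (a,b)\} \cap \{y \in (\mu-\sigma A, \mu+\sigma A)\}$.

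Using independence of $\xi$ and $y$, the probability of this intersection factorizes, giving
\[
\Pr[\xi + y \in (a+\mu-\sigma A, b+\mu+\sigma A)] \;\ge\; \Pr[\xi \in (a,b)] \cdot (1 - 1/A^2).
\]
Rearranging yields $\Pr[\xi \in (a,b)] \le \frac{1}{1-1/A^2} \Pr[\xi+y \in (a+\mu-\sigma A, b+\mu+\sigma A)]$. The first claimed inequality then follows from the elementary bound $\frac{1}{1-1/A^2} \le 1 + 2/A^2$, which holds for $A \ge \sqrt{2}$ (one can verify by cross-multiplying: it reduces to $A^4 \le A^4 + A^2 - 2$). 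For the second inequality, I would simply observe the interval containment $(a+\mu-\sigma A, b+\mu+\sigma A) \subseteq (a - |\mu| - |\sigma A|, b + |\mu| + |\sigma A|)$, which follows coordinatewise from $\mu \ge -|\mu|$, $\sigma A \ge -|\sigma A|$ and the triangle inequality on the right endpoint; enlarging the interval only increases the probability.

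There is no real obstacle here: the argument is a standard ``anti-concentration via convolution with a bounded-variance kernel'' calculation, and the only subtlety is tracking that the constant $1+2/A^2$ is valid in the regime where $A$ is at least a small absolute constant, which is the regime relevant to the rest of the paper (where $A$ is typically chosen large when this fact is invoked). The proof requires only Chebyshev, independence, and monotonicity of probability with respect to event inclusion, so the entire argument is a few lines.
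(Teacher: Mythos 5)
Your proof is correct and follows essentially the same route as the paper's: both rest on Chebyshev applied to $y$, the containment of $\{\xi \in (a,b)\}\cap\{|y-\mu|<\sigma A\}$ in the inflated event for $\xi+y$, and the bound $(1-1/A^2)^{-1}\le 1+2/A^2$; the paper merely phrases the independence step via conditional probability rather than via factorization of the intersection. Your explicit remark that the constant requires $A\ge\sqrt{2}$ is a small point the paper leaves implicit.
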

\begin{proof}
To do this, we inflate the intervals and use
conditional probabilities. 
\begin{align*}
\pr[\xi \in (a, b)] 
&\leq \pr[\xi + (y-\mu) \in (a  - \sigma A, b  + \sigma A) \mid \abs{y-\mu} < \sigma A] \\
& = \frac{\pr[\xi + y \in (a + \mu - \sigma A, b + \mu + \sigma A) \text{ and } \abs{y-\mu} < \sigma A]}{\pr[\abs{y-\mu} < \sigma A]}
\end{align*}
Noting that $\pr[\abs{y-\mu} < \sigma A]= 1-\pr[\abs{y-\mu} \geq \sigma A]$ and applying Chebyshev's inequality gives us,
\begin{align*}
\pr[\xi \in (a, b)] &< (1-1/A^2)^{-1} \pr[\xi + y \in (a + \mu - \sigma A, b + \mu + \sigma A) \text{ and } \abs{y-\mu} < \sigma A]\\
&< (1+2/A^2) \pr[\xi + y \in (a + \mu - \sigma A, b + \mu + \sigma A) ] \;.
\end{align*}
The second inequality in the theorem statement follows from observing that we are simply lengthening the interval. 

We will often use the second version for ease of analysis. 

\end{proof}

\section{Useful Lemmas for One-dimensional Location Estimation}
\label{app:useful_lemmas}

In this section, we present some helpful lemmas for 
the analysis of our algorithm on noisy one-dimensional location estimation.

To recap the setting: 
We can access samples from the distributions of $\xi + y$ and $\xi + y' + t$. 
Here, $\pr[\xi = 0] > \alpha$, $y$ and $y'$ are distributions with 
zero mean and bounded variance, 
and $t \in \mathbb{R}$ is an unknown translation. 
Our objective is to estimate the value of $t$.

\subsection{Useful Lemma for Rough Estimation}

Our algorithm for one-dimensional location estimation consists of two steps. 
In the first step, we obtain an initial estimate of the shift between 
the two distributions by computing pairwise differences of samples drawn from each distribution. 
This involves taking the median of the distribution of $x + y$, 
where $x$ is symmetric and 
$y$ has mean $0$ and bounded variance. 

The following lemma demonstrates that the median of this distribution 
is at most $O(\sigma \alpha^{-1/2})$, where $\sigma$ is the standard deviation of $y$. Furthermore, this guarantee cannot be improved. 

\begin{fact}[Median of Symmetric + Bounded-variance Distribution]
\label{fact:median_sym_plus_anything}
Let $x$ be a random variable symmetric about $0$ such that $\alpha \in (0, 1/2)$, $\pr[x = 0] \geq \alpha$. 
Let $y$ be a random variable with mean $0$ and variance $\sigma^2$.
If $S$ is a set of $O(\log(1/\delta)/\alpha^2 )$ samples drawn from the distribution of $x+y$, 
$|\mathrm{median}(S)|\leq O(\sigma \alpha^{-1/2})$. 

This guarantee is tight in the sense that there exist $c_1, c_2$ small enough positive constants, such that, for any choice of $0 < t < c_1$ and $0 < \alpha < c_2$,  there exist distributions $x$
and $y$ satisfying the above constraints, such that $\mathrm{median}(x+y) \geq \Omega(\sigma \alpha^{-1/(2+t)})$. 
\end{fact}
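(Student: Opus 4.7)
The claim has two halves, which I would address separately: a concentration upper bound on the empirical median, and a tightness construction for the population median.

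For the upper bound, I would set $M := C\sigma/\sqrt{\alpha}$ for a sufficiently large absolute constant $C$ (say $C \ge 2\sqrt{2}$) and argue that both tail probabilities $\pr[x+y > M]$ and $\pr[x+y < -M]$ sit strictly below $1/2$ by a margin of order $\alpha$. The key pointwise observation is: if $x+y > M$ and simultaneously $|y| \le M$, then $x \ge M - y \ge 0$, so in particular $x > 0$, an event of probability $(1-\alpha)/2$ by the symmetry of $x$. Combined with Chebyshev's inequality $\pr[|y| > M] \le \sigma^2/M^2 = 1/C^2 \le \alpha/8$, this gives $\pr[x+y > M] \le (1-\alpha)/2 + \alpha/8 \le 1/2 - 3\alpha/8$, and the same bound for the left tail follows by replacing $x$ with $-x$. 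A Hoeffding bound applied to the Bernoulli indicators $\mathbf{1}\{x_i+y_i > M\}$ (and likewise for the left tail) shows that with $n = \Theta(\log(1/\delta)/\alpha^2)$ samples, each empirical tail fraction deviates from its mean by at most $\alpha/8$ with probability $1-\delta$, so both stay strictly below $1/2$, pinning the empirical median inside $[-M, M]$.

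For the tightness direction I would exhibit an explicit construction whose distribution median has magnitude $\Theta(\sigma/\sqrt{\alpha})$, which is strictly stronger than the stated $\Omega(\sigma \alpha^{-1/(2+t)})$ for every small $t > 0$. Fix $p := \alpha$, $A := \sigma \sqrt{(1-p)/p}$, and $\delta := \sigma \sqrt{p/(1-p)}$, and let $y$ take values $A$ and $-\delta$ with probabilities $p$ and $1-p$; this is zero-mean with variance exactly $\sigma^2$. Let $x = 0$ with probability $\alpha$ and $x = \pm V$ with probability $(1-\alpha)/2$ each, where $V := A/2$; this is symmetric with $\pr[x = 0] = \alpha$, so both constraints are met. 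The proof then reduces to a short bookkeeping exercise over the six atoms of $x+y$, namely $\{-A/2-\delta,\; -\delta,\; A/2-\delta,\; A/2,\; A,\; 3A/2\}$. Computing their probabilities and summing in ascending order, one obtains $F_{x+y}(-\delta) = (1-\alpha^2)/2 < 1/2$ while $F_{x+y}(A/2-\delta) = 1 - \alpha > 1/2$, so the population median is exactly $A/2 - \delta = (1-3\alpha)\sigma/(2\sqrt{\alpha(1-\alpha)}) = \Theta(\sigma/\sqrt{\alpha})$.

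The main obstacle is the delicate calibration of $V$ relative to the atoms of $y$. Taking $V$ too large shifts the atom at $V - \delta$ past $0$, so that the cumulative already reaches $1/2$ at an earlier small-magnitude atom, collapsing the median back toward $0$; taking $V$ too small places $V - \delta$ below $0$ and the analogous collapse happens on the other side. The choice $V = A/2$ is the sweet spot that positions the atom at $A/2 - \delta$ so that its mass $(1-\alpha)^2/2$ is exactly what pushes the CDF across $1/2$ at a value of magnitude $\Theta(\sigma/\sqrt{\alpha})$; the inequality chain $(1-\alpha^2)/2 < 1/2 \le 1 - \alpha$ is the technical heart of the tightness construction, and verifying that no atom sneaks between $-\delta$ and $A/2-\delta$ (so the CDF really is flat on that interval) is the only nontrivial bookkeeping step.
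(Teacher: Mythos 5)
Your proof is correct, and the two halves deserve separate verdicts. The upper-bound half is essentially the paper's own argument: the paper routes the symmetry-plus-Chebyshev step through its conditional-inflation bound (\Cref{fact:conditional_inflation}) to get $\pr[x+y < -O(\sigma/\sqrt{\alpha})] < (1+\alpha)\pr[x<0] < 1/2$ and then applies Hoeffding to the tail indicators exactly as you do; your union-bound packaging $\pr[x+y>M]\le\pr[x>0]+\pr[|y|>M]$ is the same idea (minor typo: $\sigma^2/M^2=\alpha/C^2$, not $1/C^2$, but your stated choice $C\ge 2\sqrt{2}$ is consistent with the corrected computation). The tightness half is where you genuinely diverge, and your route is both simpler and stronger. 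The paper takes $y$ to be a centered power law with density proportional to $y^{-(3+t)}$ (so its variance is $O(1/t)$ and the construction degrades as $t\to 0$), places the atoms of $x$ at $0$ and $\pm(0.01\,\alpha^{-1/(2+t)}+C_t)$, and shows $\pr[x+y>\Omega(\alpha^{-1/(2+t)})]>1/2$, which yields exactly the stated $\Omega(\sigma\alpha^{-1/(2+t)})$ and nothing more. Your discrete construction --- $y$ equal to $A=\sigma\sqrt{(1-\alpha)/\alpha}$ with probability $\alpha$ and $-\delta=-\sigma\sqrt{\alpha/(1-\alpha)}$ otherwise, with $x\in\{0,\pm A/2\}$ --- checks out: the mean and variance normalizations are exact, the six atoms order as claimed for $\alpha<1/3$, and the CDF chain $(1-\alpha^2)/2<1/2\le 1-\alpha$ pins the unique population median at $A/2-\delta=\frac{(1-3\alpha)\sigma}{2\sqrt{\alpha(1-\alpha)}}=\Theta(\sigma\alpha^{-1/2})$. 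Since $\alpha^{-1/2}\ge\alpha^{-1/(2+t)}$ for all $t>0$, this implies the stated claim uniformly in $t$, and in fact shows the first half of the fact is tight up to an absolute constant with no $t$-slack, which strictly improves on what the paper proves.
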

\begin{proof}
We show that $\pr[x+y < -O(\sigma/\sqrt{\alpha})] < 0.5$ and $\pr[x+y > O(\sigma/\sqrt{\alpha})] < 0.5$,
as a result, $|\median(x+y)| < O(\sigma/\sqrt{\alpha})$.
We will later transfer this guarantee to the uniform distribution over the samples.

Applying \Cref{fact:conditional_inflation} to the variables $x+y$ and $-y$ and the interval $(-\infty, -O(\sigma/\sqrt{\sigma}))$, 
we see that $\pr[x+y < -O(\sigma/\sqrt{\alpha})] < (1+\alpha)\pr[x < 0]$. Since $\pr[x = 0] \geq \alpha$, we see that $\pr[x < 0] \leq 1/2 - \alpha$, 

and so
$\pr[x+y < -O(\sigma/\sqrt{\alpha})]< (1+\alpha)(0.5 - \alpha) 
= 0.5 -\alpha + 0.5 \alpha - \alpha^2 = 0.5 - 0.5 \alpha - \alpha^2 < 0.5$. 

The upper bound follows similarly. 

Since $\pr[x = 0]\geq \alpha$ and $\pr[\abs{y} < O(\sigma/\sqrt{\alpha})] \geq 1-\alpha$, we see 
$\pr[|x+y| < O(\sigma/\sqrt{\alpha})] \geq \alpha /2$.
Hoeffding's inequality (\Cref{lem:hoeffding}) now implies that the empirical median also satisfies the 
above upper bound with probability $1-\delta$ as long as the number of samples is greater than $(O(1)/\alpha^2)\log(1/\delta)$. 

To see that our upper bound is essentially tight, 
consider the distribution $D$, of $y = y' - C_t$ where $t$ is a small constant between $(0, 1)$, $C_t = (2+t)/(1+t)$ and 
$y'$ has the density $(2+t)/y^{3+t}$ in the range $[1, \infty)$. 
The distribution of $y$ has variance $O(1/t)$ and mean zero. 

\begin{figure}
    \centering
    \includegraphics[width=0.9\textwidth]{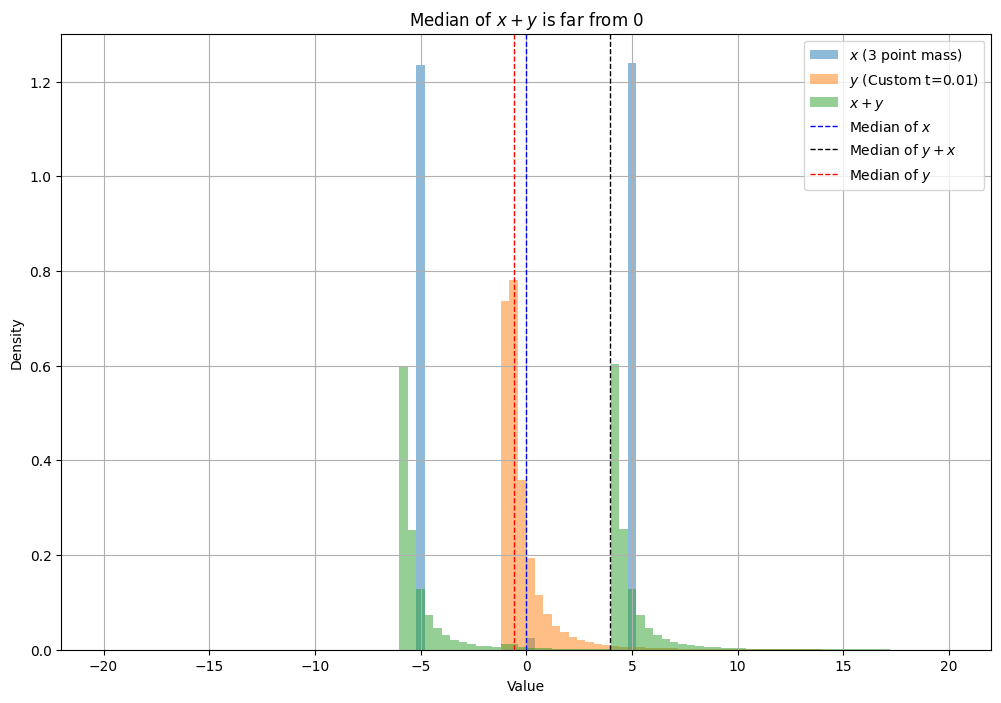}
    \caption{An illustration showing why the median of \(x + y\) can deviate significantly, even if \(x\) is symmetric and \(y\) has a standard deviation of \(O(1)\). Here, \(x\) (shown in blue) is symmetric around the origin, rarely \(0\), and typically \(\pm 5\). Conversely, \(y\) (shown in orange) is highly skewed with mean \(0\) and low variance. While both \(x\) and \(y\) have medians near \(0\), the median of \(x + y\) (shown in green) is near \(5\).}
    \label{fig:enter-label}
\end{figure}

Let $x$ be a symmetric distribution taking the value
$0$ with probability $\alpha$ and the values $\pm (\alpha^{-1/(2+t)} (0.01) +C_t)$ 
with probability $(1-\alpha)/2$.

We show that the median of the distribution of $x+y$ where $y$ is drawn from $D$, is larger than $\Omega(\alpha^{-1/(2+t)})$.
We will show this by demonstrating that $\pr[x+y > \Omega(\alpha^{-1/(2+t)})] > 1/2$.

We will need the following claim:

\begin{claim}
$\pr[y > T] = (T+C_t)^{-(2+t)}.$
\end{claim}
\begin{proof}
$\pr[y > T] = \pr[y' > T+C_t] = \int_{T+C_t}^\infty (2+t)/y^{3+t} ~dy = \frac{1}{(T+C_t)^{2+t}}.$
\end{proof}
We now see,

\begin{align*}
&\pr[x+y > (0.01) \alpha^{-1/(2+t)} - C_t] \\
&= \pr[x+y' > (0.01) \alpha^{-1/(2+t)}] \\
&= 0.5 (1-\alpha) \pr[y' > (0.02) \alpha^{-1/(2+t)} + C_t] +
 0.5 (1-\alpha) \pr[y' >  - C_t] +
\alpha \pr[y' > (0.01) \alpha^{-1/(2+t)}].
\end{align*}
Since $y'$ always takes positive values, the second term is equal to $0.5(1-\alpha)$. Note that for any choice of $T > 0$, $\pr[y' > T] = 1/T^{2+t}$. Computing the probabilities then gives us, 

\begin{align*}
&\pr[x+y > 0.01 \cdot \alpha^{-1/(2+t)} - C_t] \\
&=0.5 (1-\alpha) \pr[y' > 0.02 \cdot  \alpha^{-1/(2+t)} + C_t] +
 0.5 (1-\alpha) +
\alpha (1/(0.01 \cdot  \alpha^{-1/(2+t)})^{(2+t)})\\
&=0.5 (1-\alpha) (0.02 \cdot  \alpha^{-1/(2+t)} + C_t)^{-(2+t)} +
 0.5 (1-\alpha) +
\alpha (1/(0.01 \cdot  \alpha^{-1/(2+t)})^{(2+t)})\\
&=0.5 (1-\alpha) (0.02 \cdot \alpha^{-1/(2+t)} + C_t)^{-(2+t)} +
 0.5 (1-\alpha) +
\alpha^2/(0.01)^{2+t}.
\end{align*}
Since $t$ is a constant and $C_t < 2$, we may choose $\alpha$ to be small enough so that $0.001 \alpha^{-1/(2+t)} > 2 > C_t$. 
Substituting this, we see, 
\begin{align*}
&\pr[x+y > (0.01)  \alpha^{-1/(2+t)} - C_t] \\
& > \alpha ~0.5 (1-\alpha)/(0.021)^{2+t} +  0.5 (1-\alpha) + \alpha^2/(0.01)^{2+t}\\
&> 0.5 + 0.5 \alpha (47^{2+t}(1-\alpha) - 1) + \alpha^2 (100)^{2+t} \\
&> 0.5. 
\end{align*}

Hence, for any choice of $t \in (0, 1)$ and small enough $\alpha$, there is a pair of distributions $x, y$ satisfying our constraints such that the median of $x+y$ is at least $\Omega(\alpha^{-1/(2+t)})$. 
\end{proof}

\subsection{Useful Lemma for Finer Estimation}

In the second step of our location-estimation lemma, 
we refine the estimate of $t$. 
To do this, we first re-center the distributions based on our rough estimate,
so that the shift after re-centering is bounded. 
Then, we identify an interval $I$ centered around $0$ such that, 
when conditioning on $\xi + z$ falling within this interval, 
the expected value of $\xi + z$ remains the same 
as when conditioning on $\xi$ falling within the same interval.
This expectation will help us get an improved estimate, 
which we use to get an improved re-centering of our original
distributions, and repeat the process. 

To identify such an interval, we search for a pair of 
bounded-length intervals equidistant from the origin 
(for e.g. $(-10\sigma, -5\sigma)$ and $(5\sigma, 10\sigma)$) 
that contain very little probability mass. 
By doing so, when $z$ is added to $\xi$, 
the amount of probability mass shifted into the interval $(-5\sigma, 5\sigma)$ by $z$ remains small.

In this subsection, we prove \Cref{lem:small_term}, 
which states that any positive sequence which has a finite sum
must eventually have one small element.
The lemma also gives a concrete upper bound on the index of the element of the sequence
satisfying this property. 

\begin{lemma}
\label{lem:small_term}
Suppose $a_i \geq 0$ for all $i$ and $\sum_{i=0}^\infty a_i < C$ for some constant $C$.
Also, suppose we have $\eta \in (0, 1)$ and $L \in \R$ such that $L\geq 1$.
Then there is an integer $i$ such that $ L \le i \le (C/a_0 + L^\eta)^{1/\eta}$ 
\new{and} $i a_i < \eta \sum_{j=0}^i a_j$.
\end{lemma}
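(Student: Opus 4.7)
The plan is to argue by contradiction: suppose that for every integer $i$ with $L \le i \le M := (C/a_0 + L^\eta)^{1/\eta}$ the inequality $i a_i \ge \eta S_i$ holds, where $S_i := \sum_{j=0}^i a_j$. The goal is to derive $S_M > C$, which contradicts $\sum_{i \ge 0} a_i < C$.

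The key observation is that, since $a_i = S_i - S_{i-1}$, the inequality $i a_i \ge \eta S_i$ is equivalent to the multiplicative recurrence $S_i/S_{i-1} \ge i/(i-\eta)$. Iterating this from $i = \lceil L \rceil$ up to $\lfloor M \rfloor$, and using $S_{L-1} \ge a_0$ (which follows from $a_j \ge 0$), yields
$$S_M \;\ge\; a_0 \prod_{j=\lceil L\rceil}^{\lfloor M \rfloor} \frac{j}{j-\eta}.$$
The crux is to lower-bound this product by approximately $(M/L)^\eta$. Taking logarithms and applying $-\log(1-x) \ge x$ with $x = \eta/j$, together with the harmonic-integral estimate $\sum_{j=L}^{M} 1/j \ge \log((M+1)/L)$, gives $\prod_{j=L}^M j/(j-\eta) \ge ((M+1)/L)^\eta$.

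With this in hand, substituting $M^\eta = C/a_0 + L^\eta$ yields $S_M \ge a_0 \cdot (C/a_0 + L^\eta)/L^\eta = C/L^\eta + a_0$; the remaining slack (the ``+1'' in $(M+1)/L$, the integer ceiling, and the gap $S_{L-1} - a_0$ whenever some $a_j$ with $1 \le j < L$ is nonzero) is what promotes this to the strict inequality $S_M > C$ needed for the contradiction.

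The main obstacle is calibrating the product lower bound precisely enough to match the stated exponent $(C/a_0 + L^\eta)^{1/\eta}$: the crude harmonic bound $\prod \ge (M/L)^\eta$ is sharp when $L = 1$ but leaves a multiplicative $L^\eta/(L^\eta - 1)$ gap for $L > 1$. To handle arbitrary $L \ge 1$, I would sharpen the product bound by rewriting it as $\Gamma(M{+}1)\Gamma(L{-}\eta)/(\Gamma(L)\Gamma(M{+}1{-}\eta))$ and invoking Gautschi-type inequalities for ratios of Gamma functions, or alternatively use a midpoint-rule integral lower bound on $\sum_{j=L}^M 1/j$ to recover an additive $L^\eta$ term inside the exponent; either refinement closes the gap between the asymptotic $(M/L)^\eta$ and the precise $M^\eta \ge C/a_0 + L^\eta$ needed here.
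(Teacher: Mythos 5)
Your reduction of $i a_i \ge \eta S_i$ to the multiplicative recurrence $S_i/S_{i-1}\ge i/(i-\eta)$ and the resulting telescoping product is essentially the paper's own argument in different clothing (the paper proves $a_k \ge \eta a_0 (k-1)!/(k-\eta)!$ by induction, which is the same product), and you have correctly located the crux. But the gap you flag at the end is not a calibration issue that Gautschi-type or midpoint-rule refinements can close: the product $\prod_{j=L}^{M} \frac{j}{j-\eta}$ is genuinely of order $(M/L)^{\eta}$ (it is at least $((M+1)/L)^\eta$ by your harmonic estimate, and at most an absolute constant times $(M/L)^\eta$ once $L\ge 2$, via $-\log(1-\eta/j)\le \eta/(j-1)$), so telescoping only from $j=L$ can never produce more than roughly $a_0 (M/L)^\eta = C/L^\eta + a_0$, whereas the contradiction needs $>C$. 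In fact the statement, read literally with the negation assumed only on $[L,U]$, is false. Take $\eta = 1/2$, $C=1$, $L=100$, $a_0 = 1/2$, $a_i = 0$ for $1\le i\le 99$ and for $i>144$, and for $100\le i\le 144$ set $S_i := S_{i-1}\cdot \frac{i}{i-1/2}$ and $a_i := S_i - S_{i-1}>0$. Then $U=(C/a_0+L^\eta)^{1/\eta}=144$ and $i a_i = \eta S_i$ for every $i\in[L,U]$, so the conclusion fails there, yet $\sum_i a_i = \frac12\prod_{j=100}^{144}\frac{j}{j-1/2} \le \frac12 e^{45/199} < 0.63 < C$. So no argument from the stated hypotheses alone can succeed.

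What makes the paper's write-up go through is that its proof quietly assumes the contradiction hypothesis $i a_i \ge \eta\sum_{j=0}^{i}a_j$ for \emph{all} integers $i\in[1,U]$, not merely for $i \in [L, U]$ --- a strictly stronger assumption than the negation of the lemma's conclusion. Under that stronger hypothesis your telescoping closes immediately and more cleanly than the paper's induction: $S_M \ge a_0\prod_{j=1}^{M}\frac{j}{j-\eta} \ge a_0 M^\eta = C + a_0 L^\eta > C$, where $\prod_{j=1}^M \frac{j}{j-\eta}\ge M^\eta$ is exactly your harmonic-sum bound with $L=1$, and no Gamma-function machinery is needed. So your method is sound; the defect lies in the mismatch between the lemma's statement and the hypothesis its proof actually uses, and the honest conclusion of your attempt should be that the recurrence must be assumed to hold on all of $[1,U]$ (equivalently, the lemma only guarantees a good index somewhere in $[1,U]$, not in $[L,U]$) rather than that a sharper product estimate is available.
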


Consider a partition of the reals into length $L$ intervals. 
\new{Here, $L$ simply denotes the length of the intervals and the lower bound to $i$, not the smoothness parameter.}

In our proof, we will use \Cref{lem:small_term} on a sequence $\{ a_i \}_{i=1}^\infty$, 
where $a_i$ corresponds to an upper bound on the mass of $\xi$ contained in the 
$i$-th intervals equidistant from the origin on either side, 
and the mass that crosses them
(i.e., the mass of $\xi$ that is moved either inside or out of the 
interval when $z$ is added to it). 

We need the following calculation to prove \Cref{lem:small_term}.

Notation: For integer $i\geq 1$ and $\eta \in (0, 1)$, 
define $(i-\eta)! := \Pi_{j=1}^{i} (j-\eta)$. 

\begin{fact}\label{fact:inductive_fact}
Let
$A_{k} := 1 + 
\sum_{t=1}^{k-1} \frac{\eta (t-1)!}{(t-\eta)!}$. Then, for $k \geq 2$, $A_k = (k-1)!/(k-1-\eta)!$.
\end{fact}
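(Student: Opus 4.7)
The plan is to prove the identity by induction on $k \geq 2$, exploiting the natural telescoping recurrence $A_{k+1} = A_k + \frac{\eta (k-1)!}{(k-\eta)!}$ that is built into the definition of $A_k$.

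For the base case $k = 2$, I would directly compute $A_2 = 1 + \frac{\eta \cdot 0!}{(1-\eta)!}$. Since the paper's convention gives $(1-\eta)! = 1-\eta$ and $0! = 1$, this equals $1 + \frac{\eta}{1-\eta} = \frac{1}{1-\eta}$, which matches the claimed closed form $1!/(1-\eta)!$.

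For the inductive step, assume $A_k = (k-1)!/(k-1-\eta)!$. Then from the definition,
\begin{align*}
A_{k+1} \;=\; A_k + \frac{\eta (k-1)!}{(k-\eta)!} \;=\; \frac{(k-1)!}{(k-1-\eta)!} + \frac{\eta (k-1)!}{(k-\eta)!}.
\end{align*}
The key observation is that the falling-factorial definition satisfies $(k-\eta)! = (k-\eta)(k-1-\eta)!$, so both terms share a common denominator $(k-\eta)!$. Combining them gives
\begin{align*}
A_{k+1} \;=\; \frac{(k-1)!(k-\eta) + \eta (k-1)!}{(k-\eta)!} \;=\; \frac{k \cdot (k-1)!}{(k-\eta)!} \;=\; \frac{k!}{(k-\eta)!},
\end{align*}
which is precisely the claimed formula at index $k+1$ (with $(k+1)-1 = k$ and $(k+1)-1-\eta = k-\eta$), completing the induction.

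There is no real obstacle here; the only thing to be careful about is the nonstandard factorial notation $(i-\eta)!$, but the single identity $(k-\eta)! = (k-\eta)(k-1-\eta)!$ is enough to make the inductive step collapse in one algebraic step. The $\eta$ contributed by the new summand combines with the $(k-\eta)$ factor from the denominator conversion to produce exactly the $k$ needed to advance $(k-1)!$ to $k!$.
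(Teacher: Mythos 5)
Your proof is correct and follows essentially the same route as the paper: induction on $k$ with the same base case $A_2 = 1/(1-\eta)$ and the same key identity $(k-\eta)! = (k-\eta)(k-1-\eta)!$ in the inductive step (the paper factors out $(k-1)!/(k-1-\eta)!$ where you combine over a common denominator, but this is the same computation). Nothing is missing.
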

\begin{proof}
We prove this by induction.
By definition, our hypothesis holds for $A_2$ because $A_2 = 1 + \eta /(1-\eta) = 1/(1-\eta) = (2-1)!/(2-1-\eta)!$.
Suppose it holds for all $2 \leq t \leq k$. We then show that it holds for $t = k+1$.
\begin{align*}
    A_{k+1} &= 1 + 
\sum_{t=1}^{k} \frac{\eta (t-1)!}{(t-\eta)!} = A_k +  \frac{\eta ~(k-1)!}{(k-\eta)!}\\
    &= \frac{(k-1)!}{(k-1-\eta)!} + \frac{\eta ~(k-1)!}{(k-\eta)!}
    = \frac{(k-1)!}{(k-1-\eta)!} \Paren{1 + \frac{\eta}{k-\eta}}\\
    &= \frac{(k-1)!}{(k-1-\eta)!} \frac{k}{k-\eta}
    = \frac{k!}{(k-\eta)!}.
\end{align*}
\end{proof}

\emph{Proof of \Cref{lem:small_term}}
Let $U = (C/a_0 + L^\eta)^{1/\eta}$ and suppose towards a contradiction 
that there is no such $i$ that satisfies the lemma.
That means,
we will assume that  $i a_i \geq \eta \sum_{j=0}^i a_j$ for all integers $i \in [1, U]$;
\new{or equivalently, we assume $a_i \geq \frac{\eta}{i - \eta} \sum_{j=0}^{i-1} a_j$.}
We then show that this implies $i^{1-\eta} a_i \geq \eta a_0$ for all $i$ in the range.

Consider the inductive hypothesis on $t$ given by
$a_{t} \geq \eta \frac{(t-1)!}{(t-\eta)!} \cdot a_0$.
The base case when $t = 1$ is true since $a_1 \geq \eta a_0/(1-\eta)$ by our assumption.
Suppose the inductive hypothesis holds for integers $t \in [1, k-1]$. We show this for $t = k$ below.
\new{By starting with our assumption then applying the inductive hypothesis, we have that}
\begin{align*}
 a_{k} &\geq \frac{\eta}{k-\eta} \sum_{t=0}^{k-1} a_{t}\\
 &\geq \frac{a_0 \eta}{k-\eta} \Paren{1 + \sum_{t=1}^{k-1} \frac{\eta (t-1)!}{(t-\eta)!}}\\
 &= a_0 \eta ~\frac{(k-1)!}{(k-\eta)!}  \;.
\end{align*}
The final equality follows from \Cref{fact:inductive_fact} which states that 
$(k-1)!/(k-1-\eta)!= 1 + 
\sum_{t=1}^{k-1} \frac{\eta (t-1)!}{(t-\eta)!}$.
Simplifying this further, we see that
since $(i-\eta) \geq i \exp(-\eta/i)$ for all $i \in [1, k]$,
\begin{align*}
a_{k} 
&\geq a_0 \eta ~\frac{(k-1)!}{(k-\eta)!}  \\
&\geq a_0 \eta ~\frac{(k-1)!}{k! \exp(-\eta/k)} \\
&\geq a_0 \eta ~(1/k)~(1/\exp(-\eta (\sum_{i=1}^k 1/i))) \\
&\geq a_0 \eta ~(1/k) ~(1/\exp(-\eta \log(k)/20)) \\
&\geq (a_0/2) \eta ~(1/k^{1-\eta/20}) \;.
\end{align*}
Finally, observe that 
\begin{align*}
    C > \sum_{i = L}^U a_i 
    &> a_0 \eta \sum_{i = L}^U (1/i^{1-\eta/20})\\
    &> a_0 \eta \int_{L}^U (1/x^{1-\eta})~dx\\
    &= a_0 (U^{\eta} - L^{\eta}).
\end{align*} 

\new{
By definition, $U = (C/a_0 + L^\eta)^{1/\eta}$ so the above inequality cannot be true and thus we have arrived at a contradiction. Then it follows that 
there exists an $i$ such that $1 \leq L \le i \le (C/a_0 + L^\eta)^{1/\eta}$ 
\new{and} $i a_i < \eta \sum_{j=0}^i a_j$.
}

\hfill \qedsymbol{}

\section{Noisy Location Estimation}
\label{sec:loc_est}
In this section, 
we state and prove the guarantees of our algorithms for noisy location estimation 
(\Cref{lem:1d-loc-est} and \Cref{lem:loc_est}). 

\subsection{One-dimensional Noisy Location Estimation}
Throughout the technical summary 
and some parts of the proof, 
we make the assumption that the variables $y$ and $y'$ were bounded. 
Extending this assumption to bounded-variance distributions requires significant effort.

Our algorithm for one-dimensional noisy location estimation 
(\Cref{alg:loc_algo1d}) can be thought of as a two-step process. 
The first step involves a rough initial estimation algorithm, 
while the second step employs an iterative algorithm that 
progressively refines the estimate by a factor of $\eta$ in each iteration.

In \Cref{alg:loc_algo1d} , we introduce the definition of $\hat P$ (the empirical estimate of $\tP(\cdot)$), 
which is an upper bound on the probability mentioned earlier. 
This probability can be calculated using samples from $\xi + z$ and $\xi + z'$.

\begin{algorithm}[H]
\caption{One-dimensional Location Estimation: Shift1D$(S_1, S_2, \eta, \sigma, \alpha)$}
\SetAlgoLined
\textbf{Input:} 
Sample sets $S_1, S_2 \subset \R^d$ of size $m$, $\alpha, \eta \in (0, 1)$, $\sigma > 0$\\
	
1. Let $T = O(\log_{1/\eta}(1/\alpha))$. 
For $j \in \{1, 2\}$, partition $S_j$ into $T$ equal pieces, $S_j^{(i)}$ for $i \in [T]$.\\ 
2. $D = \{ a - b \mid a \in S_1^{(1)}, b \in S_2^{(1)}\}$. \\
3. $t'(1) := \mathrm{median}(D)$.\\ 
4. Set $A = O(1/\sqrt{\alpha})$.\\
5. Repeat steps 6 to 12, for $i$ going from $2$ to $T$:\\
6. $S_1^{(i)} := S_1^{(i)} - t'_r(i-1)$. \\
7. For $j \in \{1, 2\}$
\begin{align*} 
\hat P_{j} (i) &:= O(1) \pr_{x \sim S_j^{(i)}}[\abs{x} \in A\sigma(i-5, i+5)]\\
&\qquad + O(1) \sum_{j=1}^{i-1} (1/(i-j)^2)\pr_{x \sim S_j^{(i)}}[\abs{x}\in Aj\sigma + A\sigma [-4,5).]
\end{align*}
8. Let $\hat P(i) = \hat P_1(i) + \hat P_2(i)$.\\
9. Identify an integer $k \in [(1/\alpha \eta^2), (C/\alpha+1/(\alpha \eta^2)^\eta)^{1/\eta}]$ 
such that 
\[\hat P(k)
\leq \eta \sum_{j \in \{1, 2\}} \pr_{x \sim S_j^{(i)}} [\abs{x} \in A \sigma k] \pm O(\eta/i).\]
10. $t'(i) := t'(i-1) + \E_{z \sim S_1^{(i)}} [z \mid \abs{z}\leq A\sigma k] - \E_{z \sim S_2^{(i)}} [z \mid \abs{z}\leq A\sigma k]$.\\
11. $A := \eta A$.\\
12. Return $t'(T)$
\end{algorithm}

\begin{lemma}[One-dimensional location-estimation]

There is an algorithm (\Cref{alg:loc_algo1d})  which, given  $(1/\eta^5) (O(1)/ \alpha)^{2/\eta} \log(1/\eta \alpha \delta)$ 
samples $\{ \xi_i + y_i + t \}_{i=1}^m$ and $\{ \tilde \xi_i + y'_i \}_{i=1}^m$ where $\xi_i$ and $\tilde \xi_i$ are both drawn from $D_\xi$, $y_i$ and $y'_i$ 
are drawn from distinct distributions with mean-zero with variance bounded above by $\sigma$

and $t \in \R$ is an unknown translation, 
runs in time $\tilde \poly(1/\eta, (O(1)/\alpha)^{1/\eta}, \log(1/\delta\eta\alpha))$ and recovers $t'$ such that $\abs{t - t'} \leq ~O(\eta \sigma)$. 

\end{lemma}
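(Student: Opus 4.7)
The plan is to prove \Cref{lem:1d-loc-est} by analyzing the two phases of \Cref{alg:loc_algo1d} separately and then arguing inductively that the total error after $T = O(\log_{1/\eta}(1/\alpha))$ refinement iterations is $O(\eta\sigma)$. The invariant I would maintain is that after iteration $i$, the running estimate $t'(i)$ satisfies $|t - t'(i)| \le A_i \sigma$, where $A_i := A \cdot \eta^{i-1}$ and $A = O(1/\sqrt{\alpha})$ at the start. Since $A_T = O(\eta)$ for $T = O(\log_{1/\eta}(1/\alpha))$, reaching step $T$ delivers the claimed accuracy.

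For the base case I would directly apply \Cref{claim:rough_estimate} to the batch $S_j^{(1)}$ using the pairwise-difference median estimator on line~3. The key observation behind that claim is that $\tilde\xi - \xi$ is symmetric about zero with a point mass $\alpha^2$ at the origin, so \Cref{fact:median_sym_plus_anything} bounds the median of the observed distribution $(\tilde\xi - \xi) + (y' - y) + t$ at distance at most $O(\sigma/\sqrt{\alpha})$ from $t$. This yields $|t'(1) - t| \le A_1 \sigma$ with probability $1-\delta/T$.

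For the inductive refinement step, assume $|t'(i-1) - t| \le A_{i-1} \sigma$. After the shift on line~6, the two batches look like independent samples from $\xi + z$ and $\xi + z'$, where $z := y + (t - t'(i-1))$ and $z' := y'$ are independent variables with means of magnitude at most $A_{i-1} \sigma$ and variance bounded by $4\sigma^2$. I then invoke \Cref{claim:tildeP} to obtain an integer $k$ in the stated range such that $k \hat P(k) \le \eta \sum_{j=0}^k \hat P(j)$; existence of such a $k$ follows from \Cref{lem:small_term} together with item~3 of \Cref{claim:tildeP}. Conditioning on $|\xi + z| \le A_{i-1}\sigma k$ (and the analogous event for $z'$) effectively removes the heavy tail of $\xi$, so \Cref{claim:fine_estimate} yields
\[
\E[\xi + z \mid |\xi+z| \le A_{i-1}\sigma k] - \E[\xi + z' \mid |\xi + z'| \le A_{i-1}\sigma k]
= \E[z] - \E[z'] \pm O(A_{i-1} \eta \sigma).
\]
Since $\E[z] - \E[z'] = t - t'(i-1)$, the update in line~10 gives $|t - t'(i)| \le O(A_{i-1}\eta\sigma) = A_i \sigma$, closing the induction.

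The principal obstacle is moving from the population-level guarantees of \Cref{claim:tildeP} and \Cref{claim:fine_estimate} to the empirical quantities $\hat P$ and $\E_{z \sim S_j^{(i)}}[z \mid |z| \le A\sigma k]$ that the algorithm actually computes. For this I would apply Hoeffding's inequality (\Cref{lem:hoeffding}) to each probability $\pr_{x \sim S_j^{(i)}}[|x| \in A\sigma(a, b)]$ appearing in the definition of $\hat P$ and to each truncated empirical mean; after truncation these averages are bounded by $A\sigma k$, so Hoeffding applies cleanly. Taking a union bound over all $O((O(1)/\alpha)^{1/\eta})$ relevant intervals and over all $T$ iterations, a per-batch sample size of $\poly(1/\eta, (O(1)/\alpha)^{1/\eta}, \log(1/\delta\eta\alpha))$ is enough to estimate each $\hat P(i)$ to accuracy $O(\eta/i)$ and each truncated conditional mean to $O(\eta\sigma)$, which in turn preserves the invariant at every iteration and yields the final error $O(\eta\sigma)$.
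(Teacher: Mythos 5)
Your proposal is correct and follows essentially the same route as the paper's proof: a median-of-pairwise-differences rough estimate via \Cref{claim:rough_estimate}, followed by iterative refinement using \Cref{claim:tildeP} and \Cref{claim:fine_estimate} to take truncated conditional means, with the error shrinking by a factor of $\eta$ per round over $O(\log_{1/\eta}(1/\alpha))$ rounds and empirical concentration handled by Hoeffding plus union bounds. Your explicit inductive invariant $|t - t'(i)| \le A_i\sigma$ is just a cleaner packaging of the same argument.
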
 

\begin{proof}
Our proof is based on the following claims: 

\begin{claim}[Rough Estimate]

There is an algorithm which, given $m = O((1/\alpha^4)~\log(1/\delta))$ samples
of the kind $\xi + y + t$ and $\xi + y'$, 
where $t \in \R$ is an unknown translation,
returns $t'_r$ satisfying $\abs{t'_r - t} < O(\sigma \alpha^{-1/2})$. 

\end{claim}

\begin{claim}[Fine Estimate]

Let $A >  \eta$, where $\eta$ is a constant, and suppose $z, z'$ have means bounded from above by $A \sigma/2$ and variances at most ${\sigma^2}$ 

and suppose $\alpha \in (0, 1)$ and $\eta \in (0, 1/2)$.
Then in $\poly((O(1)/\alpha \eta)^{1/\eta}, \log(1/\delta \eta \alpha))$ 
samples and $\poly((O(1)/\alpha \eta)^{1/\eta}, \log(1/\delta \eta \alpha))$ time, 
it is possible to recover $k \in [1/\eta^2 \alpha, (O(1)/\eta^2 \alpha)^{O(1/\eta)}]$ such that
$$\widehat{\E} [\xi + z \mid \abs{\xi + z} \leq  {A\sigma k}] - 
\widehat{\E} [\xi + z' \mid \abs{\xi + z'} \leq A \sigma k] 
= \E[z] - \E[z'] \pm \eta ~(A \sigma).$$
\end{claim}
Using \Cref{claim:rough_estimate}, we first identify a rough estimate $t'_r$ satisfying
$|t'_r - t| < O(\sigma \alpha^{-1/2})$. 
This allows us to re-center $y'$. 
Let the re-centered distribution be denoted by $z' = y'$ and $z = y+t-t_r'$. 
Then $z$ and $z'$ are such that $\E[z]$ and $\E[z']$ are 
both at most $O(\sigma \alpha^{-1/2})$ in magnitude, and have variance at most $\sigma^2$. 

\Cref{claim:fine_estimate} then allows us to estimate  $t'_f$ such that 
$|(\E[z] - \E[z']) - t'_f|
= |t - t_r' - t_f'| 
\leq \eta ~O(\sigma \alpha^{-1/2})$.

Setting $t' = t'_r + t'_f$, 
we see that our estimate $t'$ is now $\eta$ times closer to $t$ compared to $t'_r$.

To refine this estimate further, 
we can obtain fresh samples and re-center using $t'$ instead of $t'_r$.
Repeating this process $O(\log_{1/\eta} (1/\alpha)) = O(\log_{\eta} (\alpha))$ times 
is sufficient to obtain an estimate that incurs an error of 
$\eta \cdot \eta^{\log_\eta(\alpha^{1/2})} \cdot O(\sigma \alpha^{-1/2}) \leq O(\eta \sigma)$.

This results in a runtime and sample complexity 
that is only $O(\log_{1/\eta} (1/\alpha))$ times the
runtime and sample complexity required by \Cref{claim:fine_estimate}.
This amounts to the final runtime and sample complexity being 
$\poly((O(1)/\alpha \eta)^{1/\eta}, \log(1/\delta \eta \alpha))$. 

We now prove \Cref{claim:rough_estimate} and \Cref{claim:fine_estimate}. 

\Cref{claim:rough_estimate} shows that the median of the distribution of
pairwise differences of $\xi + y + t$ and $\xi + y'$ estimates the mean
up to an error of $\sigma \alpha^{-1/2}$. 

\emph{Proof of \Cref{claim:rough_estimate}}~
Let $\tilde{\xi}$ be a random variable with the same distribution as $\xi$ and independently drawn. 
We have independent samples from the distributions of $\xi + y + t$ and $\xi + y'$. 
Applying  \Cref{fact:median_sym_plus_anything} to these distributions, 
we see that if we have at least $O(1/\alpha^4) \log(1/\delta)$ samples from 
the distribution of $(\xi - \tilde{\xi}) + (y - y') + t$, 
with probability $1-\delta$, these samples will have a median of $t \pm O(\sigma/\sqrt{\alpha})$.
\hfill \qedsymbol{}

\emph{Proof of \Cref{claim:fine_estimate}}~

We identify a $k$ satisfying,
$\E[\xi + z \mid \abs{\xi + z} \leq A \sigma k] 
= \E[\xi + z \mid \abs{\xi} \leq  A \sigma k ] \pm O(A\eta \sigma)
= \E[\xi \mid \abs{\xi} \leq A \sigma k] + \E[z] \pm O(A\eta \sigma),
$
and similarly for $z'$. The theorem follows by taking the difference of these equations. 

Before we proceed, we will need the following definitions: 
let $P(i, z)$ be defined as follows:

\begin{align*}
P(i,z) &:= \pr[\abs{\xi} \in A\sigma (i-1, i+1)]\\
&+\pr[\abs{\xi} < Ai\sigma, \abs{\xi + z} > Ai\sigma] +\pr[\abs{\xi} > Ai\sigma, \abs{\xi + z} < Ai\sigma] \;.
\end{align*}

This will help us bound the final error terms that arise in the calculation. 
We will need the following upper bound on $P(i, z) + P(i, z')$. 

\begin{claim}\label{claim:tildeP}
There exists a function $\tP : \mathbb N \rightarrow \R^+$ satisfying: 
\begin{enumerate}
    \item For all $i \in \mathbb N$, $\tP(i) \geq P(i, z) + P(i, z')$ which 
    can be computed using samples from $\xi + z$ and $\xi + z'$.\label{item:tPi_bigger_than_Pi}
    
    \item There is a \new{$k \in [(1/\alpha \eta^2), (C/\alpha+1/(\alpha \eta^2)^\eta)^{1/\eta}]$}
    such that $k \tP(k) < \eta \sum_{j=0}^k \tP(k)$.\label{item:tPk_is_small}

    \item $\sum_{j=0}^k \tP(k) = O(\pr[\abs{\xi + z} \leq A \sigma k] + \pr[\abs{\xi + z'} \leq A \sigma k])$.

    \label{item:tPk_is_smaller_than_interval}
    
    \item With probability $1-\delta$, for \new{all $i < (O(1)/\alpha)^{1/\eta}/\eta$}, 
    $\tP(i)$ can be estimated to an accuracy of $O(\eta/i)$ 
    by using \new{$(O(1)/ \alpha)^{2/\eta} \log(1/\eta \alpha \delta) / \eta^5$ samples} from 
    $\xi + z$ and $\xi + z'$. \label{item:tPK_is_computable}
\end{enumerate}
\end{claim}
We defer the proof of \Cref{claim:tildeP} to \Cref{sec:tildeP_proof}, 
and continue with our proof showing that  
$\E[\xi + z \mid \abs{\xi + z} \leq A \sigma k]  \approx \E[\xi + z \mid \abs{\xi} \leq A \sigma k]$ 
for $k$ satisfying the conclusions of \Cref{claim:tildeP}.

To this end, we will need the following claim:
\begin{claim}\label{claim:approx_errors}
In the setting of \Cref{lem:1d-loc-est}, 
\begin{align}
\label{eqn:probability_bound}
\abs{\pr[\abs{\xi} \leq A \sigma k] - \pr[\abs{\xi + z} \leq A \sigma k]} 
= O(\eta/k)~(\pr[\abs{\xi + z} \leq A \sigma k]) \;.
\end{align}
and 
\begin{align}
\label{eqn:expectation_bound}
\E[(\xi + z)~\mathbf{1}(\abs{\xi + z} \leq A \sigma k)]
=\E[(\xi + z)~\mathbf{1}(\abs{\xi} \leq A \sigma k)] \pm O(A \sigma \eta \alpha + A \sigma \eta \sum_{j=1}^k \tP(j)).
\end{align}
\end{claim}
\begin{proof}
Consider the case where $f(\xi, z)$ is either $1$ or $\xi + z$, since $\E[f(\xi, z)~\mathbf{1}(\abs{\xi} \leq \sigma i)] = \E[f(\xi, z)~\mathbf{1}(\abs{\xi} \leq \sigma i) (\mathbf{1}(\abs{\xi + z} \leq \sigma i) + \mathbf{1}(\abs{\xi + z} > \sigma i))]$, observe that 
\begin{align*}
 &\abs{\E[f(\xi, z)~\mathbf{1}(\abs{\xi} \leq \sigma i)]
 - \E[f(\xi, z) ~\mathbf{1}(\abs{\xi + z} \leq \sigma i)]} \\
 &\leq 
 \abs{\E[f(\xi, z)~\mathbf{1}(\abs{\xi + z} > \sigma i) ~\mathbf{1}(\abs{\xi} \leq \sigma i)]} 
 + \abs{\E[f(\xi, z) ~\mathbf{1}(\abs{\xi + z} \leq \sigma i) ~\mathbf{1}(\abs{\xi} > \sigma i)]}.
\end{align*}
We would like to bound the right hand side above for both choices of $f$. 
Setting $f(\xi, z) := 1$ and considering the case where $i = k$ satisfies the conclusions of 
\Cref{claim:tildeP}, the right hand side (which we plan to get an upper bound for) can be seen to be $\pr[\abs{\xi + z} \leq A \sigma k, \abs{\xi} > A \sigma k]+\pr[\abs{\xi + z} > A \sigma k, \abs{\xi} \leq A \sigma k]$. We can bound this sum in terms of $\tP(k)$, as a consequence of the definition of $P(i,z)$.

Furthermore, $\tP(k)$ itself is upper bounded by 
$O(\eta/k)(\pr[\abs{\xi + z} \leq A \sigma k] + \pr[\abs{\xi + z'} \leq A \sigma k])$
as per \Cref{item:tPk_is_small} and \Cref{item:tPk_is_smaller_than_interval}.
Putting these facts together, we have that
\begin{align*}
&\abs{\pr[\abs{\xi} \leq A \sigma k] - \pr[\abs{\xi + z} \leq A \sigma k]} \\
&= O(\eta/k)  (\pr[\abs{\xi + z} \leq A \sigma k] + \pr[\abs{\xi + z'} \leq A \sigma k]).
\end{align*}
A similar claim holds for the distribution over $z'$. 
An application of the triangle inequality now implies  
\begin{align*}
&\abs{\pr[\abs{\xi + z'} \leq A \sigma k] - \pr[\abs{\xi + z} \leq A \sigma k]} \\
&= O(\eta/k) ~(\pr[\abs{\xi + z} \leq A \sigma k] 
+ \pr[\abs{\xi + z'} \leq A \sigma k]).
\end{align*}
For $t,v \in \R$, $|t-v| < \tau (t+v)$ 
implies $(1-\tau)/(1+\tau) < t/v < (1+\tau)/(1-\tau)$. 
When $\tau \in (0, 1/2]$, this implies $t = \Theta(v)$. 
Applying this to the equation above, we can conclude that 
$\pr[\abs{\xi + z'} \leq A \sigma k] = \Theta(\pr[\abs{\xi + z} \leq A \sigma k])$. 
Putting everything together, this gives us the first inequality ni the lemma.
\begin{align}
\abs{\pr[\abs{\xi} \leq A \sigma k] - \pr[\abs{\xi + z} \leq A \sigma k]} 
= O(\eta/k)~(\pr[\abs{\xi + z} \leq A \sigma k]) \;.
\end{align}
To prove the second inequality, observe that in the situation that $f(\xi, z) := \xi + z$, we need to control the error terms:
$\E[(\xi + z)~\mathbf{1}(\abs{\xi + z} \leq A \sigma k)~\mathbf{1}(\abs{\xi} > A \sigma k)]$ and 
$\E[(\xi + z)~\mathbf{1}(\abs{\xi + z} > A \sigma k) ~\mathbf{1}(\abs{\xi} \leq A \sigma k)]$.

Observe that 
$(\xi + z)~\mathbf{1}(\abs{\xi + z} \leq A \sigma k)~\mathbf{1}(\abs{\xi} > A \sigma k)$ 
has a nonzero value with probability at most 
$\E[\mathbf{1}(\abs{\xi + z} \leq A \sigma k)~\mathbf{1}(\abs{\xi} > A \sigma k)] < \tP(k)$.
Also, the magnitude of $(\xi + z)$ in this event is at most $A \sigma k$.
Putting these together, we get that
\[\abs{ \E[(\xi + z)~\mathbf{1}(\abs{\xi + z} \leq A \sigma k)~\mathbf{1}(\abs{\xi} > A \sigma k)]} 
< A \sigma k \tP(k) 
< O(A \sigma \eta)~ \sum_{j=1}^k \tP(j).\] 

Unfortunately, we cannot use the same argument to bound 
$\E[(\xi + z)~\mathbf{1}(\abs{\xi + z} > A \sigma k) ~\mathbf{1}(\abs{\xi} \leq A \sigma k)]$, since $\abs{\xi + z}$ is no longer bounded by $A \sigma k$ in this event. 
However, we can express the sum $\xi + z$ as follows: 
$\xi + z = \xi + z \mathbf{1}(\abs{z} > A \sigma k) + z \mathbf{1}(\abs{z} \leq A \sigma k)$.
This allows us to get the following bound:
\begin{align*}
&\E[(\xi + z)~\mathbf{1}(\abs{\xi + z} > A \sigma k) ~\mathbf{1}(\abs{\xi} \leq A \sigma k)] \\
&< 2 A \sigma k \tP(k) + \E[z ~\mathbf{1}(\abs{z} > A \sigma k)~\mathbf{1}(\abs{\xi + z} > A \sigma k) ~\mathbf{1}(\abs{\xi} \leq A \sigma k)]\\
&< 2 A \sigma k \tP(k) + \E[z ~\mathbf{1}(\abs{z} > A \sigma k)]
\end{align*}
Let $\mu = \E[z]$. An application of Cauchy-schwartz, followed by Chebyshev's inequality shows that $\E[z ~\mathbf{1}(\abs{z} > A \sigma k)] \leq O(1)~\sqrt{\sigma^2 + \mu^2} \cdot \sqrt{\sigma^2} / \sqrt{(A\sigma k -\mu)^2}$. Since $|\mu| \leq A\sigma/2$ and $k \gg 1$, this implies $\E[z ~\mathbf{1}(\abs{z} > A \sigma k)] < O(\sigma/Ak) < O(\sigma/\eta k)$, since $1/A < 1/\eta$. 
\begin{align*}
&\E[(\xi + z)~\mathbf{1}(\abs{\xi + z} > A \sigma k) ~\mathbf{1}(\abs{\xi} \leq A \sigma k)] \\
&< 2 A \sigma k \tP(k) + O(\sigma/\eta k) \\
&<  O(A\eta \sigma \sum_{j=1}^k \tP(j)) + O(A\eta \sigma \alpha),
\end{align*}
where the final inequality follows by choosing $k \geq 1/(\eta^2 \alpha)$.

Putting everything together, we see 
\begin{align}
\E[(\xi + z)~\mathbf{1}(\abs{\xi + z} \leq A \sigma k)]
=\E[(\xi + z)~\mathbf{1}(\abs{\xi} \leq A \sigma k)] \pm O(A \sigma \eta \alpha + A \sigma \eta \sum_{j=1}^k \tP(j)).
\end{align}
\end{proof}
To compute the conditional probability, we use
\Cref{claim:approx_errors} to get, 

\begin{align*}
\E[(\xi + z) \mid \abs{\xi} \leq A \sigma k] &= \frac{\E[(\xi + z)~\mathbf{1}(\abs{\xi + z} \leq A \sigma k)] \pm O(A \sigma \eta \alpha + A \sigma \eta \sum_{j=1}^k \tP(j))}{\pr[\abs{\xi} \leq A \sigma k]}\\
& = \frac{\E[(\xi + z)~\mathbf{1}(\abs{\xi + z} \leq A \sigma k)] \pm O(A \sigma \eta \alpha + A \sigma \eta \sum_{j=1}^k \tP(j))}{(1+\Theta(\eta/k))~\pr[\abs{\xi + z} \leq A \sigma k]}\\
& = (1-\Theta(\eta/k))~\E[(\xi + z) \mid \abs{\xi + z} \leq A \sigma k]  \\
& \qquad \pm O(1)~\frac{A \sigma \eta \alpha + A \sigma \eta \pr[\abs{\xi + z} \leq A \sigma k]}{\pr[\abs{\xi + z} \leq A \sigma k]}\\
& = \E[(\xi + z) \mid \abs{\xi + z} \leq A \sigma k]  \pm O(A \eta \sigma) \;,
\end{align*}
where the second inequality is a consequence of \Cref{item:tPk_is_smaller_than_interval}, and the last
is due to the fact that $\pr[\abs{\xi + z} \leq A \sigma k] \geq \alpha/2$ whenever $k > 2$, 
which follows from an application of \Cref{fact:conditional_inflation} while noting 
the fact that $\pr[\xi = 0] \geq \alpha$. 

Taking a difference for the above calculations for $z$ and $z'$, we see that,
\[
\E[(\xi + z) \mid \abs{\xi + z} \leq A \sigma k]  - \E[(\xi + z') \mid \abs{\xi + z'} \leq A \sigma k]  = 
\E[z]-\E[z'] \pm O(A \eta \sigma).
\]
Consider this final error, and let $O(A \eta \sigma) < C A \eta \sigma$  for some constant $C$. 
Repeating the above argument initially setting $\eta = \eta' / C$, 
where $C$ is the constant gives
us the guarantee we need.

Finally, we estimate the runtime and sample complexity of our algorithm. 
The main bottleneck in our algorithm is the repeated estimation of $\tP(i)$ and estimation of $\E[(\xi + z) \mid |\xi + z| \leq A\sigma k]$. 

According to \Cref{item:tPK_is_computable}, each time we estimate $\tP(i)$ to the desired accuracy, 
we draw $\poly(1/\eta, (O(1)/\alpha)^{1/\eta}, \log(1/\delta \eta \alpha))$ samples.

An application of Hoeffding's inequality (\Cref{lem:hoeffding}) then allows us to estimate the conditional expectation $\E[(\xi + z) \mid |\xi + z| \leq A\sigma k]$ to an accuracy of $\eta A \sigma$ by drawing $\poly(1/\eta,(O(1)/\alpha)^{1/\eta}, \log(1/\delta \eta \alpha))$ samples as well. The exponential dependence here comes from the exponential upper bound on $k$. 

\hfill \qedsymbol{}

\subsection{Proof of \Cref{claim:tildeP}}
\label{sec:tildeP_proof}

In this section, we prove the existence 
of $\tP(\cdot)$, an upper bound on $P(i, z) + P(i, z')$
which we can estimate using samples from $\xi + z$ and $\xi + z'$.

\emph{Proof of \Cref{claim:tildeP}}~

\emph{Proof of \Cref{item:tPi_bigger_than_Pi}}:

Recall the definition of $P(i, z)$.
\begin{align*}
P(i,z) &:= \pr[\abs{\xi} \in A\sigma (i-1, i+1)]\\
&+\pr[\abs{\xi} < Ai\sigma, \abs{\xi + z} > Ai\sigma] +\pr[\abs{\xi} > Ai\sigma, \abs{\xi + z} < Ai\sigma] \;.
\end{align*}
For \Cref{item:tPi_bigger_than_Pi} to hold, 
we need to define $\tP(i)$ to be an upper bound on $P(i, z) + P(i, z')$ 
which can be computed using samples from $\xi + z$ and $\xi + z'$. 
To this end, we bound $P(i, z)$ as follows. 
First, note that we can adjust the endpoints of the intervals to get
\begin{align*}
P(i,z) &< 3 \pr[\abs{\xi} \in A\sigma (i-1, i+1)]\\
&+\pr[\abs{\xi} < A(i-1)\sigma, \abs{\xi + z} > Ai\sigma] +\pr[\abs{\xi} > A(i+1)\sigma, \abs{\xi + z} < Ai\sigma] \;.
\end{align*}
Then, we partition the ranges in the definition above into intervals of length $A \sigma$ to get: 
\begin{align*}
P(i,z)
&< 3 \pr[\abs{\xi} \in A\sigma (i-1, i+1)]\\
&\qquad + \sum_{j=1}^{i-2} \pr[\abs{\xi}\in Aj\sigma + [0, A\sigma), \abs{\xi + z} > Ai\sigma] \\
&\qquad +\sum_{j=1}^{i-1} \pr[\abs{\xi} > A(i+1)\sigma, \abs{\xi + z} \in Aj\sigma + [0, A\sigma)].
\end{align*}

Next, an application of the triangle inequality to $\abs{\xi}\in Aj\sigma + [0, A\sigma)$ and $\abs{\xi + z} > Ai\sigma$
implies that $\abs{z} \geq A(i-j-1) \sigma$.
Similarly, the same kind of argument when $\abs{\xi} > A(i+1)\sigma$ and $\abs{\xi + z} \in Aj\sigma + [0, A\sigma)$
demonstrates that $\abs{-z} = \abs{\xi + z - \xi} \geq A(i-j)\sigma$. 
We then use \Cref{fact:conditional_inflation} to move from $\abs{\xi + z}$ to $\abs{\xi}$ in the third term.
\begin{align*}
P(i, z) &< 3 \pr[\abs{\xi} \in A\sigma (i-1, i+1)]\\
&\qquad + \sum_{j=1}^{i-2} \pr[\abs{\xi}\in Aj\sigma + A\sigma[0, 1), \abs{z} \geq (i-j-1)A\sigma] \\ 
&\qquad +O(1) \sum_{j=1}^{i-1} \pr[\abs{\xi} \in Aj\sigma + A\sigma [-2, 3), \abs{z} \geq (i-j)A\sigma] \;.
\end{align*}

Since $A$ is at least a constant and $|\E[z]|$ is some constant factor of $\sigma$ less than $A\sigma$, an application of Chebyshev's inequality to $z$ and
using the independence of $z$ and $\xi$, gives that
\begin{align*}
P(i, z) &< 3 \pr[\abs{\xi} \in A\sigma (i-1, i+1)]\\
&\qquad + O(1) \sum_{j=1}^{i-2} (1/(i-j-1)^2)\pr[\abs{\xi}\in Aj\sigma + A\sigma[0, 1)] \\
&\qquad + O(1) \sum_{j=1}^{i-1} (1/(i-j)^2)\pr[\abs{\xi} \in Aj\sigma + A\sigma[-2, 3)] \;.
\end{align*}
Another application of \Cref{fact:conditional_inflation} applied to $(\xi + z) - z$ then gives us
\begin{align*}
P(i, z) &< 3 \pr[\abs{\xi+z} \in A\sigma (i-5, i+5)]\\
&\qquad + O(1) \sum_{j=1}^{i-2} (1/(i-j-1)^2)\pr[\abs{\xi + z}\in Aj\sigma + A\sigma [-2, 3)] \\
&\qquad +O(1) \sum_{j=1}^{i-1} (1/(i-j)^2)\pr[\abs{\xi + z} \in Aj\sigma + A\sigma [-4, 5)].
\end{align*}
Finally, extending all intervals so that they match, 
and observing that 
$\sum_{j=1}^{i-2} (1/(i-j-1)^2) \pr[\abs{\xi + z}\in Aj\sigma + A\sigma [-2, 3)] 
\le \sum_{j=1}^{i-1} (1/(i-j)^2) \pr[\abs{\xi + z} 
< Aj\sigma + A\sigma [-4, 5)]$,
we get
\begin{align*}
P(i, z) &< O(1) \pr[\abs{\xi+z} \in A\sigma (i-5, i+5)]\\
&\qquad + O(1) \sum_{j=1}^{i-1} (1/(i-j)^2)\pr[\abs{\xi + z}\in Aj\sigma + A\sigma [-4, 5)].
\end{align*}

We now let $\tP(i, z)$ denote the final upper bound on $P(i, z)$. 
The value of having $\tP(i, z)$ is that it can be computed using samples from $\xi + z$. 
\begin{align*}
\tP(i, z) &:= O(1) \pr[\abs{\xi+z} \in A\sigma (i-5, i+5)]\\
&\qquad + O(1) \sum_{j=1}^{i-1} (1/(i-j)^2)\pr[\abs{\xi + z}\in Aj\sigma + A\sigma [-4, 5)] \;.
\end{align*}

We defined $\tP(i) = \tP(i, z) + \tP(i, z')$.

\emph{Proof of \Cref{item:tPk_is_small}:}

First observe that $\sum_{i=0}^\infty \tP(i) < C$ for some constant $C$. 
Recall that $\tP(i) = \tP(i, z) + \tP(i, z')$, and 
\begin{align*}
\tP(i, z) &:= O(1) \pr[\abs{\xi+z} \in A\sigma (i-5, i+5)]\\
&\qquad + O(1) \sum_{j=1}^{i-1} (1/(i-j)^2)\pr[\abs{\xi + z}\in Aj\sigma + A\sigma [-4, 5)] \;.
\end{align*}
It is clear that when summed over $i \in [0, \infty]$, the first term is bounded by a constant,
since every interval will get over-counted at most 10 times. 
To see that the second term can be bounded, observe that  
\begin{align*}
&\sum_{i=1}^{\infty} \sum_{j=1}^{i-1} (1/(i-j)^2)\pr[\abs{\xi + z} \in Aj\sigma + A\sigma[-4,5)] \\
&< \sum_{j=1}^{\infty} \sum_{i=1}^{\infty} (1/(i-j)^2)\pr[\abs{\xi + z} \in Aj\sigma + A\sigma[-4,5)]\\
&< \sum_{j=1}^{\infty} \pr[\abs{\xi + z} \in Aj\sigma + A\sigma[-4,5)] \sum_{i=1}^{\infty} (1/(i-j)^2) = O(1) \;.
\end{align*}

The first inequality follows by extending the limits of summation.

The final inequality follows from the fact that the total probability 
is at most $1$, every interval of size $\sigma$ gets over-counted at most 
finitely many times, and the fact that $\sum_{k=1}^\infty 1/k^2 = O(1)$.

\Cref{item:tPk_is_small} now follows from the fact that $\tP(i)$, \new{$i \geq 0$ 
is a non-negative sequence that sums to a finite quantity, and $\tP(0) \geq \alpha/2$,
since the interval $\tP(0)$} upper bounds is contains at least a constant 
fraction of the mass of $\xi$ at $0$ that is moved by $z, z'$, and $\pr[\xi = 0]\geq \alpha$. 

Applying \Cref{lem:small_term}, we get our result.

\emph{Proof of \Cref{item:tPk_is_smaller_than_interval}:}

Let $k$ be such that \Cref{item:tPk_is_small} holds,
i.e. $k \tP(k) < \eta \sum_{j=1}^k \tP(j)$, 
then the goal is to show 
$\sum_{j=1}^k \tP(k) 
= O(\pr[\abs{\xi + z} < A \sigma k] + \pr[ \abs{\xi + z'} < A \sigma k])$.

We first consider the sum over $i$, of $\tP(i, z)$. It is easy to see that this is
\begin{align*}
\sum_{i=1}^k \tP(i, z) 
&= O(1) \pr[\abs{\xi+ z} \leq A \sigma (k+5)]\\
&\qquad + O(1) \sum_{i=1}^k \sum_{j=1}^{i-1} (1/(i-j)^2)\pr[\abs{\xi + z}\in Aj\sigma + A\sigma [-4,5)] \;.
\end{align*}
The first term on the RHS is almost what we want. 
We now show how to bound the second term, 
\begin{align*}
&\sum_{i=1}^k \sum_{j=1}^{i-1} (1/(i-j)^2)\pr[\abs{\xi + z}\in Aj\sigma + A\sigma [-4,5)]\\
&< \sum_{j=1}^{k-1} \sum_{i=0; i \neq j}^{k} (1/(i-j)^2)\pr[\abs{\xi + z}\in Aj\sigma + A\sigma [-4,5)]\\
&= \sum_{j=1}^{k-1} \pr[\abs{\xi + z}\in Aj\sigma + A\sigma [-4,5)]~\sum_{i=0; i \neq j}^{k} (1/(i-j)^2)\\
&< O(1)~\sum_{j=1}^{k-1} \pr[\abs{\xi + z}\in Aj\sigma + A\sigma [-4,5)]\\
&< O(1)~\pr[\abs{\xi + z} < A(k+5)\sigma] \;.
\end{align*}

The first inequality holds since any pair of $(i, j)$ that 
has a nonzero term in the first sum will also occur in the second sum, and all terms are non-negative.

The second equality is just pulling the common $j$ term out.

The third inequality follows from the fact that $\sum_{i=1}^\infty 1/i^2 = O(1)$. 

The fourth inequality follows from the fact that each $\sigma$-length interval 
is overcounted at most a constant number of times.

This allows us to bound $\sum_{i=1}^k \tP(i, z)$ by $O(\pr[\abs{\xi + z} < A \sigma (k+5)])$ overall. 
Similarly for $\sum_{i=1}^k \tP(i, z')$, we can obtain a bound of $O(\pr[\abs{\xi + z'} < A \sigma (k+5)])$.
Putting these together, we see
$\sum_{i=1}^k \tP(i) \leq O(\pr[\abs{\xi + z} < A \sigma (k+5)] + \pr[\abs{\xi + z'} < A \sigma (k+5)])$.
Finally, to get the upper bound claimed in \Cref{item:tPk_is_smaller_than_interval},
observe that,
\begin{align*}
&\pr[\abs{\xi + z} < A \sigma (k+5)] + \pr[\abs{\xi + z'} < A \sigma (k+5)]\\
&= \pr[\abs{\xi + z} < A \sigma (k-4)] + \pr[\abs{\xi + z'} < A \sigma (k-4)] \\
&\qquad + \pr[\abs{\xi + z} \in  A \sigma (k-4, k+5)] + \pr[\abs{\xi + z'} \in  A \sigma (k-4, k+5)]\\
&\leq \pr[\abs{\xi + z} < A \sigma (k-4)] + \pr[\abs{\xi + z'} < A \sigma (k-4)] \\
&\qquad + \tP(k)\\
&\leq \pr[\abs{\xi + z} < A \sigma (k-4)] + \pr[\abs{\xi + z'} < A \sigma (k-4)] \\
&\qquad + O(\eta/k) (\pr[\abs{\xi + z} < A \sigma (k+5)] + \pr[\abs{\xi + z'} < A \sigma (k+5)]) \;.
\end{align*}
Where the last inequality holds since \Cref{item:tPk_is_small} holds. Rearranging the inequality and by scaling $\eta$ such that that $O(\eta/k) \leq 1/2$, we see that
$\pr[\abs{\xi + z} < A \sigma (k+5)] + \pr[\abs{\xi + z'} < A \sigma (k+5)] = O(\pr[\abs{\xi + z} < A \sigma (k-4)] + \pr[\abs{\xi + z'} < A \sigma (k-4)]) = O(\pr[\abs{\xi + z} < A \sigma k] + \pr[\abs{\xi + z'} < A \sigma k])$, 
completing our proof of \Cref{item:tPk_is_smaller_than_interval}.

\emph{Proof of \Cref{item:tPK_is_computable}:}

Finally, to see \Cref{item:tPK_is_computable} holds, 
observe that $0 < \tP(i) < O(1)$. 
\new{ Let $B = (1/\eta)(O(1)/\alpha)^{1/\eta}$ }
denote the maximum index before which we can find a $k$ such that $k \tP(k) \leq \eta \sum_{i=0}^k \tP(i)$. Now, 
To estimate $\tP(i)$ empirically, we partition the interval 
$(-BA\sigma, BA\sigma)$ 
into $B$ intervals of length $A\sigma$ each, 
and estimate the probability of $\xi + z$ falling in each interval.
If we estimate each of these probabilities to an accuracy of $\eta/(100~B)$,
we can estimate $\tP(i)$ to an accuracy of $O(\eta/i)$.

An application of Hoeffding's inequality (\Cref{lem:hoeffding})
tells us that each estimate will require $O(B^2/\eta^2 \log(1/\delta))$ samples.
Taking a union bound over all these intervals, 
we see that we will require $O(B^2/\eta^2 \log(B/\delta))$ samples.

Finally, another union bound over each $i \in [0, B]$ implies that we
will need $O(B^2/\eta^2 \log(B^2/\delta))$ samples. 
Substituting the value of $B$ back in, we see that this amounts to
requiring 
\new{ $(1/\eta^5) (O(1)/ \alpha)^{2/\eta} \log(1/\eta \alpha \delta)$ samples. }

Estimating $\tP(i)$ will take time polynomial in the number of samples,
and so we take time 
\new {$\poly((O(1)/ \alpha)^{1/\eta}, 1/\eta, \log(1/\eta \alpha \delta))$. }
\end{proof}

\subsection{High-dimensional Noisy Location Estimation}

In this section, we explain how to use our one-dimensional 
location estimation algorithm to get an algorithm for noisy location estimation in $d$ dimensions.

The algorithm performs one-dimensional location estimation coordinate-wise, 
after a random rotation.

We need to perform such a rotation to ensure that every coordinate has a known variance bound of $\sigma / \sqrt{d}$.

\begin{algorithm}[H]
\caption{High-dimensional Location Estimation: ShiftHighD$(S_1, S_2, \eta, \sigma, \alpha)$}
\label{alg:loc_algo}
\SetAlgoLined
\textbf{input:} 
Sample sets $S_1, S_2 \subset \R^d$ of size $m$, $\eta \in (0, 1)$, $\sigma > 0$, $\alpha$\\

1. Sample $R_{i, j}$ i.i.d. from the uniform distribution over $\{\pm 1/\sqrt{d}\}$ for $i, j \in [d]$\\
2. Represent $S_1$ and $S_2$ in the basis given by the rows of $R$: $r_1, \dots, r_d$.\\
3. \For{$i \in [d]$}{
$v'_i := \text{Shift1D}(S_1 \cdot e_i, S_2 \cdot e_i, \eta, O(\sigma/\sqrt{d}), \alpha)$
}
4. Change the representation of $v'$ back to the standard basis.\\
5. \emph{Probability Amplification:}~Repeat steps 1-4, $T := O(\log(1/\delta))$ times to get $C := \{v'_1, \dots, v'_T\}$\\
6. Find a ball of radius $O(\eta \sigma)$ centered at one of the $v'_i$ containing $> 90\%$ of $C$. If such a vector exists, 
set $v'$ to be this vector. 
Otherwise set $v'$ to be an arbitrary element of $C$.\\
5. Return $v'$. 
\end{algorithm}

\begin{lemma}[Location Estimation]
Let $D_{y_i}$ for $i \in \{1,2\}$ be the distributions of $\xi + z_i$ for $i \in \{1, 2\}$
where $\xi$ is drawn from $D_\xi$ and  $\pr_{\xi \sim D_\xi}[\xi = 0] \geq \alpha$ and $z_i \sim D_i$ are distributions over $\R^d$
satisfying $\E_{D_i}[x] = 0$ 
and $\E_{D_i}[\norm{x}^2] \leq \sigma^2$.
Let $v \in \R^d$ be an unknown shift, and $D_{y_2, v}$ denote the distribution of $y_2$ shifted by $v$. 

There is an algorithm (\Cref{alg:loc_algo}), 
which draws $\poly(1/\eta,(O(1)/\alpha)^{1/\eta}, \log(1/\delta\eta\alpha))$ samples each 
from \new{$D_{y_1}$ and $D_{y_2, v}$} 
runs in time $\poly(d, 1/\eta, (O(1)/\alpha)^{1/\eta}, \log(1/\delta \eta \alpha))$
and returns $v'$ satisfying 
$\norm{v' - v} \leq O(\eta \sigma)$ with probability $1-\delta$.
\end{lemma}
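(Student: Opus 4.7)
The plan is to lift the one-dimensional guarantee of \Cref{lem:1d-loc-est} to $d$ dimensions by running it coordinatewise in a randomly chosen near-orthonormal basis given by a Rademacher matrix $R \in \{\pm 1/\sqrt{d}\}^{d \times d}$. The random rotation is necessary because \Cref{lem:1d-loc-est} requires a known variance bound on the per-coordinate inlier noise, whereas the hypothesis only supplies the trace-type bound $\E_{D_i}[\|x\|^2] \le \sigma^2$; in the worst basis, all the variance could be concentrated along a single axis and the 1D algorithm would give nothing useful on other coordinates. Preservation of the oblivious-noise hypothesis is free: since $R \cdot 0 = 0$, the rotated noise $R\xi$ still satisfies $\Pr[R\xi = 0] \ge \alpha$, so along every coordinate of the rotated samples we are exactly in the setting of \Cref{lem:1d-loc-est}.

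The key geometric step is a concentration estimate for the per-coordinate rotated variance. For a fixed row $r_j$ of $R$ and $M_i := \mathrm{Cov}(z_i)$, the variance of the $j$-th rotated coordinate is the quadratic form $r_j^{\top} M_i\, r_j$, with expectation $\mathrm{tr}(M_i)/d \le \sigma^2/d$ since $\E_R[r_j r_j^{\top}] = (1/d) I$. The Hanson--Wright inequality for Rademacher quadratic forms, combined with $\|M_i\|_F,\|M_i\|_{\mathrm{op}} \le \sigma^2$ and a union bound over $j \in [d]$ and $i \in \{1,2\}$, upgrades this to a uniform bound $r_j^{\top} M_i\, r_j = O(\sigma^2 \log(d)/d)$ with constant probability, say $9/10$.

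Conditioned on this good event, invoke \Cref{lem:1d-loc-est} on each of the $d$ rotated coordinate projections of $S_1$ and $S_2$, passing $\widetilde{O}(\sigma/\sqrt{d})$ as the per-coordinate variance bound and $\eta' = \Theta(\eta/\sqrt{\log d})$ as the accuracy parameter. Each coordinate is then recovered within $O(\eta\sigma/\sqrt{d})$ of the corresponding coordinate of $Rv$ with probability at least $1 - \delta/(4d)$; a union bound plus combination via the $\ell_2$ norm gives $\|(Rv)' - Rv\|_2 \le O(\eta \sigma)$. Standard results on Rademacher matrices show that $R$ is an approximate isometry with $\|R\|_{\mathrm{op}},\|R^{-1}\|_{\mathrm{op}} = O(1)$ with high probability, so inverting the change of basis in step 4 only inflates the error by a constant factor. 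Finally, the amplification loop in steps 5--6 boosts the constant per-trial success probability to $1 - \delta$: among $O(\log(1/\delta))$ independent trials, a Chernoff bound ensures a strict majority of outputs lie within $O(\eta\sigma)$ of $v$, so a ball of radius $O(\eta\sigma)$ around one of them captures more than $90\%$ of $C$, and the triangle inequality then places its center within $O(\eta\sigma)$ of $v$.

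The main obstacle is the concentration step under the weak trace-only hypothesis $\E[\|z_i\|^2] \le \sigma^2$, because neither $\|M_i\|_F$ nor $\|M_i\|_{\mathrm{op}}$ is controlled better than $\sigma^2$, so the Hanson--Wright tail is only $d^{-O(1)}$ per coordinate rather than exponentially small in $d$. This is the source of the $\log(d)$ factor that must be absorbed into the $\widetilde{O}$-notation or into a mild rescaling of $\eta$. An alternative, if a sharper inequality is undesirable, is to fall back on a Markov-type bound on $\E_R\bigl[\sum_j r_j^{\top} M_i r_j\bigr] = \mathrm{tr}(M_i)$, deducing that only a constant fraction of coordinates can have variance much larger than $\sigma^2/d$; the amplification loop then still succeeds because only a constant fraction of the independent trials need to produce a correct output.
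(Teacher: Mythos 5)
Your proposal follows essentially the same route as the paper: a random Rademacher change of basis $R\in\{\pm 1/\sqrt d\}^{d\times d}$ (which preserves $\pr[R\xi=0]\ge\alpha$ since $R\cdot 0=0$), a per-coordinate variance bound of roughly $\sigma^2/d$ for the rotated inlier noise, a coordinatewise invocation of \Cref{lem:1d-loc-est} combined in $\ell_2$, and the same $O(\log(1/\delta))$-fold amplification with a $90\%$-heavy ball. The only substantive difference is the concentration sub-step: the paper bounds the cross terms of $r_j^{\top}M_i r_j$ by Chebyshev using pairwise independence of $R_{j,p}R_{j,q}$ and claims $O(\sigma^2/d)$ per coordinate, whereas you invoke Hanson--Wright and accept an extra $\log d$, compensated by running the 1D routine at accuracy $\eta'=\Theta(\eta/\sqrt{\log d})$. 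Your version is consistent with the $O(\sigma\sqrt{\log d}/\sqrt d)$ per-coordinate bound stated in the paper's technical overview and costs only polylogarithmic overhead, so it is a perfectly valid (arguably more careful) instantiation of the same argument. One caution: your proposed fallback via Markov --- tolerating a constant fraction of high-variance coordinates per trial --- does not actually combine with the amplification step, since a trial that errs on even a few coordinates can be off by much more than $O(\eta\sigma)$ in $\ell_2$, and every trial would have some bad coordinates; stick with the high-probability uniform bound over all $d$ coordinates within each trial.
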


\begin{proof}
Consider a matrix $R$ whose entries $R_{i, j}$ are 
independently drawn from the uniform distribution over ${\pm 1/\sqrt{d} }$.
and whose diagonals are $1/\sqrt{d}$. 

Our goal is to show that with probability at least $99\%$, 
the standard deviation of each coordinate of $Rz$ is bounded by $O(\sigma /\sqrt{d})$, 
i.e., the standard deviation of $Rz \cdot e_i$ is at most $O(\sigma /\sqrt{d})$ for all integer $i$ in $[d]$. 

We can then amplify this probability to ensure that the 
algorithm fails with a probability that is exponentially small.

To see this, observe that $Rz \cdot e_i = r_i \cdot z$, and so $\E_z[r_i \cdot z] = 0$.
\begin{align*}
\E_z[(r_i \cdot z)^2] 
&= \sum_{p \in [d], q \in [d]} R_{i, p} R_{i, q} \E[ z_p z_q ]\\
&= \sum_{i = 1}^d \E[z_i^2]/d + 2\sum_{p,q \in [d], p < q} R_{i, p} R_{i, q} \E[ z_p z_q ]\\
&\leq  (\sigma^2/d) + 2\sum_{p,q \in [d], p < q} R_{i, p} R_{i, q} \E[ z_p z_q ] \;.
\end{align*}
We now bound the second term with probability $99\%$ via applying Chebyshev's inequality. 
Observe that $\E[ z_p z_q ] \leq \sqrt{ \E[ z_p^2 ] \E[ z_q^2 ] } \leq \sigma^2$. 
Since $R_{i, p}$ and $R_{i, q}$ are drawn independently and $p \neq q$, 
we see that the variables $R_{i, p}R_{i, q}$ and $R_{i, l}R_{i, m}$  
pairwise independent for pairs $(p, q) \neq (l, m)$, this implies
$\pr[|\sum_{p,q \in [d], p < q} R_{i, p} R_{i, q} \E[ z_p z_q ]| > T] \leq \frac{O(\sigma^4)}{d ~d^2 T^2}$.
By choosing $T = O(\sigma^2/d)$, we see that the right-hand side above is at most $0.001/d$. 

A union bound over all the coordinates then tells us that 
with probability $99\%$, the variance of each coordinate 
is at most $O(\sigma^2/d)$. 

Then, for each coordinate $i$, we can identify $v'_i = v_i \pm O(\eta \sigma/\sqrt{d})$
through an application of \Cref{lem:1d-loc-est}.
Putting these together with probability at least $99\%$, 
we find $v'$ satisfying $\norm{ v' - v }^2 \leq O(\eta^2 \sigma^2)$.

Changing between these basis representations maintains 
the quality of our estimate since the new basis contains unit vectors nearly orthogonal to each other. 
With high probability, the inner products between these are around $O(1/\sqrt{d})$ for every pairwise comparison,
so $R$ approximates a random rotation.

\emph{Probability Amplification:} 
The current guarantee ensures that we obtain a good candidate 
with a constant probability of success. 
However, for the final algorithmic guarantee, 
we need a higher probability of success. 
To achieve this, we modify the algorithm as follows:
\begin{enumerate}
    \item Run the algorithm $T$ times, 
    each time returning a candidate $v'_i$ that is, 
    with probability $99\%$, within $O(\eta \sigma)$ distance from the true solution.
    \item Construct a list of candidates $C = \{v'_1, \dots, v'_T\}$.
    \item Identify a ball of radius $O(\sigma \eta)$ 
    centered at one of the $v'_i$ 
    that contains at least $90\%$ of the remaining points.
    \item Return the corresponding $v'_i$ as the final output.
    \item If no such $v'_i$ exists, return any vector from $C$.
\end{enumerate}
Let $E$ denote the event that a point is within $O(\eta \sigma)$ to the true solution.

This will succeed with probability $1-\exp(-T)$. 
To see why, observe the chance that we recover 
$(2/3)T$ vectors outside the event $E$ is less than 
$(0.01)^{2/3~T} \binom{T}{2/3 T} < (0.047)^{T} \binom{T}{T/2} < (0.047)^{T}(2^T/\sqrt{T}) < (0.095)^T$. 
\end{proof}

\section{List-Decodable Mean Estimation}
\label{sec:ldme}
Let $\cD_\sigma$ represent a set of distributions over 
$\mathbb{R}^d$ defined as 
$\cD_\sigma := \{ D \mid \E_{D}[|x - \E_D[x]|^2] \leq \sigma^2 \}$.
This section presents an algorithm for list-decodable mean estimation 
when the inlier distribution is in the class $\cD_\sigma$. 
In our setting, we receive samples from the distribution of $\xi + z$, 
where $\pr[\xi = 0] > \alpha$, where $\alpha$ can be close to $0$.
Our objective is to estimate the mean with a high degree of precision. 

Note that the guarantees provided by prior work do not directly apply to our setting.
Prior work examines a more aggressive setting where arbitrary outliers are drawn 
with a probability of $1-\alpha$. These outliers might not
have the additive structure we have.

Recall the definition of an $(\alpha, \beta, s)$-LDME algorithm: 

\begin{definition}[Algorithm for List-Decodable Mean Estimation]
Algorithm $\cA$ is an $(\alpha, \beta, s)$-LDME
algorithm for $\cD$ (a set of candidate inlier distributions) 
if with probability $1-\delta_{\cA}$,  
it returns a list $\cL$ of size $s$ 
such that 
$\min_{\hat \mu \in \cL} \norm{\hat \mu - \E_{x \sim D}[x]} \leq \beta$ for $D \in \cD$
when given $m_\cA$ samples of the kind $z + \xi $ 
for $z \sim D$ and $\pr[\xi = 0] \geq \alpha$.
If $1-\alpha$ is a sufficiently small constant less than $1/2$, then $s = 1$. 
\end{definition}

We now prove \Cref{fact:list_decoding_D_sigma} which we restate below for convenience. 

\begin{fact}[List-decoding algorithm]
There is an $(\alpha, \eta\sigma, \tilde O((1/\alpha)^{{2}/ \eta^2}))$-LDME
algorithm for
the inlier distribution belonging to $\cD_\sigma$ which runs in time $\tilde O(d (1/\alpha)^{{2}/ \eta^2})$ 
and succeeds with probability $1-\delta$.
Conversely, any algorithm which returns a list,
one of which makes an error of at most $O(\eta \sigma)$ in $\ell_2$ norm
to the true mean, must have a list whose size grows exponentially in $1/\eta$.

If $1-\alpha$ is a sufficiently small constant
less than half, then the list size is $1$ to 
get an error of $O(\sqrt{1-\alpha}~\sigma)$.
\end{fact}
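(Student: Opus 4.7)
The plan is to prove the fact in three parts: the upper bound (algorithm with list size $\tilde O((1/\alpha)^{2/\eta^2})$ and error $\eta\sigma$), the lower bound (list must be exponential in $1/\eta$), and the singleton case.

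\emph{Upper bound.} The key observation is that any $D \in \cD_\sigma$ automatically satisfies the standard bounded-covariance condition, since $\lambda_{\max}(\Sigma_D) \le \operatorname{tr}(\Sigma_D) = \E_D[\|x - \mu\|^2] \le \sigma^2$. Hence standard list-decodable mean estimation primitives (e.g.\ CSV-type multifilter algorithms from \cite{CSV17, diakonikolas2020list}) yield an $O(1/\alpha)$-size list of candidates, one of which is within $O(\sigma/\sqrt{\alpha})$ of $\mu$. To sharpen to error $\eta\sigma$ I would iterate: around each candidate $\hat\mu$ from the coarse call, re-center the samples, intersect with a ball of the current radius, and run the coarse list-decoder again at scale $\eta\sigma$. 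The crucial point that makes this terminate in $O(1/\eta^2)$ levels is that $\operatorname{tr}(\Sigma_D) \le \sigma^2$ implies at most $1/\eta^2$ orthogonal directions can have inlier variance exceeding $\eta^2\sigma^2$, so after $O(1/\eta^2)$ halving rounds the residual ambiguity is below $\eta\sigma$. Each level multiplies the running list by $O(1/\alpha)$, producing total list size $(1/\alpha)^{O(1/\eta^2)}$ and matching runtime. A union bound over all branches controls the failure probability.

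\emph{Lower bound.} For the exponential-in-$1/\eta$ hardness I would use a one-dimensional delta-mass construction. For a parameter $k \le 1/\alpha$, set $\mu_i = i\eta\sigma$ and $D_i = \delta_{\mu_i}$ for $i \in \{0,1,\ldots,k-1\}$; each $D_i$ has variance $0$ and hence lies in $\cD_\sigma$. For candidate $i$, take $\xi^{(i)}$ to be uniform on $\{(j-i)\eta\sigma : j \in \{0,\ldots,k-1\}\}$, so $\Pr[\xi^{(i)} = 0] = 1/k \ge \alpha$ and the observed distribution $\mu_i + \xi^{(i)}$ is uniform on $\{0,\eta\sigma,\ldots,(k-1)\eta\sigma\}$ independent of $i$. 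No algorithm can distinguish the $k$ indistinguishable true means, and since they are $\eta\sigma$-separated any valid list has $\ge k$ elements. Choosing $\alpha = 2^{-\Theta(1/\eta)}$ gives $k \ge 2^{\Omega(1/\eta)}$, proving the claimed exponential dependence.

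\emph{Singleton case.} When $1-\alpha$ is at most a sufficiently small constant less than $1/2$, a strict majority of samples are uncorrupted inliers. Any $\ell_2$ robust mean estimator (e.g.\ geometric-median-of-means) applied to $m = \operatorname{poly}(d,1/\delta)$ samples returns a single $\hat\mu$ with $\|\hat\mu - \mu\| = O(\sqrt{1-\alpha}\cdot\sigma)$, using only the trace bound $\E[\|x-\mu\|^2]\le\sigma^2$; this is exactly the promised guarantee.

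The main obstacle is the refinement step in the upper bound: controlling simultaneously the branching factor per level (which needs the CSV guarantee to give $O(1/\alpha)$ children), the depth (which needs the trace rather than the operator-norm bound on $\Sigma_D$, since we need fast decay across directions), and the accumulated failure probability across $(1/\alpha)^{O(1/\eta^2)}$ branches. The cleanest way is to invoke the coarse algorithm as a black box and absorb its failure probability into a union bound; the content is the depth argument showing that $O(1/\eta^2)$ scales of refinement suffice to drive the error from $O(\sigma/\sqrt{\alpha})$ down to $O(\eta\sigma)$.
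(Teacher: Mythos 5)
Your singleton case matches the paper's (which invokes a stability-based robust mean estimator), and reducing to bounded-covariance primitives via $\lambda_{\max}(\Sigma_D)\le \tr(\Sigma_D)\le\sigma^2$ is sound as far as it goes. However, both of your main arguments have genuine gaps.

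\emph{Upper bound.} The paper does not refine a coarse multifilter; it uses a much simpler batching argument: draw $\tilde O((1/\alpha)^{2/\eta^2})$ batches of $1/\eta^2$ samples each and output every batch mean. With probability $\alpha^{1/\eta^2}$ a batch has $\xi_i=0$ for every sample in it, so with high probability some batch consists entirely of inliers, and that batch's empirical mean is within $O(\eta\sigma)$ of $\mu$ by Chebyshev. Your recursive refinement, by contrast, does not work as described. After re-centering on a coarse candidate and intersecting with a ball of radius $R=O(\sigma/\sqrt{\alpha})$, the surviving inliers still have second moment up to $\sigma^2$ and the adversary still controls up to a $(1-\alpha)$-fraction of the surviving points, so a second black-box call to the coarse decoder again returns error $\Theta(\sigma/\sqrt{\alpha})$ --- the same scale, not $R/2$. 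Nothing in the recursion forces the ambiguity to shrink, and for bounded-covariance inliers error $o(\sigma/\sqrt{\alpha})$ is not achievable with polynomially many candidates, so no $\poly(1/\alpha)$-branching, $O(1/\eta^2)$-depth recursion of black-box coarse calls can reach error $\eta\sigma$. The depth argument you offer (``at most $1/\eta^2$ orthogonal directions have inlier variance exceeding $\eta^2\sigma^2$'') concerns the inlier covariance and is irrelevant to the outlier-induced ambiguity that the recursion would have to resolve.

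\emph{Lower bound.} Your construction shows only that the list must have size at least $k$ with $k\le 1/\alpha$; the ``exponential in $1/\eta$'' conclusion is manufactured by choosing $\alpha=2^{-\Theta(1/\eta)}$. That establishes nothing at constant $\alpha$, whereas the claim (and \Cref{corr:substitute_ldme}, which relies on it) is that the exponential dependence on $1/\eta$ is necessary \emph{on top of} the dependence on $1/\alpha$, i.e., already for constant $\alpha$. The paper's construction takes inliers $D_s$, $s\in\{\pm1\}^d$ with $d=1/\eta$, where each coordinate independently equals $s_i$ with probability $1/d$ and $0$ otherwise, and sets the oblivious noise to $-D_s$; then $\pr[\xi=0]\ge(1-1/d)^d\approx 1/e$ is a constant, the observed distribution $D_s-D_s$ is identical for all $s$, and the $2^{1/\eta}$ candidate means are pairwise $\Omega(\eta\sigma)$-separated. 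A one-dimensional point-mass construction cannot be salvaged at constant $\alpha$, since there you genuinely cannot pack more than $O(1/\alpha)$ indistinguishable means; the high-dimensional hypercube is what decouples the list size from $1/\alpha$.
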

\begin{proof}
\textbf{Algorithm:} Consider the following algorithm:
\begin{enumerate}
    \item If $\alpha < c$ and $\eta > \sqrt{\alpha}$: 
    Run any stability-based robust mean estimation algorithm from \cite{diakonikolas2020outlier} 
    and return a singleton list containing the output of the algorithm. 
    \item Otherwise,
    for integer each $i \in [1, 100 (1/\alpha)^{2/\eta^2} \log(1/\delta)^2]$ sample $1/\eta^2$ samples 
    and let their mean be $\mu_i$. 
    \item Return the list $\cL = \{ \mu_i \mid i \in [1, 100(1/\alpha)^{2/\eta^2} \log(1/\delta)^2]\}$. 
\end{enumerate}

If the algorithm returns in the first step,
then the guarantees follow from the guarantees of the 
algorithm for robust mean estimation from \cite{diakonikolas2020outlier} (Proposition 1.5 on page 4).

Otherwise, observe that the probability that every one of 
$1/\eta^2$ samples drawn is an inlier, is $\alpha^{1/\eta^2}$.

Hence, with probability $1-\delta$ we see that if we draw $1/\eta^2$
samples $O((1/\alpha)^{2/\eta^2} \log(1/\delta)^2)$ times, there are at least $O(\log(1/\delta))$ sets of samples containing only inliers.
Then, the mean of one of these concentrates to an error of $O(\eta \sigma)$ by an application of
\Cref{lem:multi_Chebyshev}.
More precisely, \Cref{lem:multi_Chebyshev} ensures that with probability $99\%$, the mean of 
a set of $1/\eta^2$ inliers concentrates
up to an error of $O(\eta \sigma)$. Repeating this $\log(1/\delta)$ times, we get our result.

\textbf{Hardness:} 
To see that the list size must be at least $\exp(1/\eta)$, 
consider the set of inlier distributions given by $\{ D_s \mid s \in \{\pm 1\}^d \}$
where each $D_s$ is defined as follows: 
$D_s$ is a distribution over $\R^d$ such that 
each coordinate independently takes the value $s_i$ with probability $1/d$, 
and $0$ otherwise. 

Each $D_s$ defined above belongs to $\cD_{\sqrt{1-1/d}}$ since $\E_{x \sim D_s} [x] = s/d$ and 
\begin{align*}
\sigma^2 := &\E_{x \sim D_s} [\norm{x - s/d}^2]
= \sum_{i=1}^d \E_{x_i \sim (D_s)_i} [(x_i - s_i/d)^2]\\
&= \sum_{i=1}^d (1-1/d) (1/d)^2 + (1/d) (1-1/d)^2 = (1-1/d). 
\end{align*}

We will set the oblivious noise distribution for each $D_s$ to be $-D_s$. 
Our objective is to demonstrate that the distribution of $D_s - D_s$ is the same for all 
$s$ and is independent of $s$.
This means that we cannot 
identify $s$ by seeing samples from $D_s - D_s$. 

Then, since the means of $D_s$ and $D_s'$ for any distinct pair $s, s' \in \{\pm 1\}^d$
differ by at least $1/d$, if we set $d = 1/\eta$ we see that there are $2^{1/\eta}$
possible different values of the original mean, each pair being at least $\eta$ far apart, 
which is larger than $\eta \sigma^2 = \eta (1-\eta)$. 

We can assume, without loss of generality, 
that $s = \vec{\mathbf{1}}$, where $\vec{\mathbf{1}}$ represents the all-ones vector. 
Each coordinate of $D_s$ can be viewed as a random coin flip, 
taking the value $0$ with probability $1-1/d$ and $1$ with probability $1/d$.

The probability of obtaining the all-zeros vector is given by $(1-1/d)^d$, 
which approaches a constant value for sufficiently large $d$, 
and so, $\pr_{x \sim D_s}[x = 0] \geq 0.001$, 
i.e., the $\alpha$ for the oblivious noise is at least a constant. 
In fact, it can be as large as $1/e > 0.35$ for large enough $d$. 

Let the oblivious noise be $-D$. 
Now, consider the distribution of $x+y$, 
where $x$ follows the distribution $D$ 
and $y$ follows the distribution $-D$. 
If we focus on the first coordinate, $x_1 + y_1$, 
we observe that it follows a symmetric distribution over $\{-1, 0, 1\}$ 
which does not depend on $s_1$. 
Also, each coordinate exhibits the same distribution, 
and they are drawn independently of one another. 
Hence, the final distribution is independent of $s$, so we get our result.  
\end{proof}

\section{Proof of \Cref{corr:substitute_ldme}}
\label{sec:corr_LDME_and_fact}

Below, we restate \Cref{corr:substitute_ldme} for convenience.

\begin{corollary}
Given access to oblivious noise oracle $\cO_{\alpha, \sigma, f}$,
a $(O(\eta \sigma), \eps)$-inexact-learner $\cA_{G}$ running in time $T_G$,
there exists an algorithm that takes 
$\poly((1/\alpha)^{1/\eta^2}, (O(1)/\eta)^{1/\eta}, \log(T_G/\delta\eta\alpha))$ samples,
runs in time $T_G \cdot\poly(d, (1/\alpha)^{1/\eta^2}, (O(1)/\eta)^{1/\eta},  \log(1/\eta \alpha \delta))$,
and with probability $1-\delta$ returns a list $\cL$ of size $\tilde O((1/\alpha)^{1/ \eta^2})$
such that $\min_{x\in \cL} \norm{\nabla f(x)} \le O(\eta \sigma) + \eps$.
Additionally, the exponential dependence on $1/\eta$ in the list size is necessary. 
\end{corollary}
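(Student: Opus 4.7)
The plan is to prove \Cref{corr:substitute_ldme} by a direct black-box substitution: invoke \Cref{fact:list_decoding_D_sigma} to obtain an LDME subroutine with the required parameters, then feed this subroutine into \Cref{thm:LDSO_reduce_to_LDME_informal} and track the resulting sample/time/list-size bounds. The hardness part is handled separately, by transferring the LDME lower bound through the converse reduction \Cref{thm:LDME_reduce_to_LDSO_informal}.

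More concretely, I would first instantiate \Cref{fact:list_decoding_D_sigma} with the given $\alpha$ and accuracy parameter $\eta$ (rescaled by an appropriate absolute constant so that the resulting error is the $O(\eta \sigma)$ precision required by the inexact-learner $\cA_G$). This produces an $(\alpha, O(\eta \sigma), s)$-LDME algorithm $\cA_{ME}$ for the class $\cD_\sigma$ with list size $s = \tilde O((1/\alpha)^{1/\eta^2})$, running in time $\tilde O(d\,(1/\alpha)^{1/\eta^2})$ and using $m_{\cA_{ME}} = \tilde O((1/\alpha)^{1/\eta^2})$ samples, succeeding with probability $1-\delta/2$ (we budget the failure probability so that a union bound with the location-estimation step still yields overall success probability $1-\delta$).

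Next, I would plug $\cA_{ME}$ together with the provided $(O(\eta \sigma),\eps)$-inexact-learner $\cA_G$ into \Cref{thm:LDSO_reduce_to_LDME_informal}. The conclusion of that theorem immediately delivers a list $\cL$ of size $s = \tilde O((1/\alpha)^{1/\eta^2})$ satisfying $\min_{x \in \cL}\norm{\nabla f(x)} \le O(\eta \sigma) + \eps$, together with sample complexity $m_{\cA_{ME}} + \tilde O(T_G \cdot (O(1)/\alpha)^{2/\eta}/\eta^5)$ and runtime $T_G \cdot \poly(d, 1/\eta, (O(1)/\alpha)^{1/\eta}, \log(1/\delta\eta\alpha))$. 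Substituting the values of $m_{\cA_{ME}}$ from the previous paragraph, and noting that each of the $T_G$ gradient queries invokes the location-estimation routine which must succeed simultaneously (so the per-call failure probability is set to $\delta/(2T_G)$, contributing only a $\log(T_G/\delta)$ factor), the two polynomial factors combine into the stated $\poly((1/\alpha)^{1/\eta^2}, (O(1)/\eta)^{1/\eta}, \log(T_G/\delta\eta\alpha))$ bound on samples and the stated runtime. This is essentially a bookkeeping step; the main care is in checking that $(1/\alpha)^{1/\eta^2}$ dominates $(O(1)/\alpha)^{2/\eta}$ for the relevant parameter regime and in properly accounting for the failure probabilities across the $T_G$ iterations.

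Finally, for the lower bound claim, I would invoke the converse reduction \Cref{thm:LDME_reduce_to_LDSO_informal}: any algorithm for list-decodable stochastic optimization with oblivious noise yields an $(\alpha, \eps, s)$-LDME algorithm with the same list size $s$. Hence an upper bound on $s$ would contradict the exponential lower bound in $1/\eta$ for LDME given in the second half of \Cref{fact:list_decoding_D_sigma}, establishing that an exponential dependence on $1/\eta$ in the list size is unavoidable. The main (mild) obstacle is simply the parameter bookkeeping described in the previous paragraph; no new ideas are needed since both the forward reduction and the LDME subroutine are available as black boxes.
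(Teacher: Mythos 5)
Your proposal is correct and matches the paper's proof, which is exactly the one-line substitution of \Cref{fact:list_decoding_D_sigma} into \Cref{thm:LDSO_reduce_to_LDME_informal} (with the necessity of the exponential list size following from the converse reduction and the lower bound in the fact). The extra bookkeeping you describe for failure probabilities and parameter comparison is implicit in the paper but is the right way to make the substitution rigorous.
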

\begin{proof}
This follows by substituting the guarantees of 
\Cref{fact:list_decoding_D_sigma}
for the algorithm $\cA_{ME}$ in~\Cref{thm:LDSO_reduce_to_LDME_informal}.
\end{proof}

\new{ 
\section{Varying Noise Assumptions}

In this section, we demonstrate that changing some distributional assumptions of the oblivious noise $\xi$ can yield better guarantees or simpler algorithms.

\subsection{Stronger Concentration on $\xi$ Yields Polynomial Dependence on $1/\eta$}

We show that if the oblivious noise has bounded tails 
we can avoid the exponential dependence on $1/\eta$, which we have shown in \Cref{{sec:ldme}} to be necessary under our standard assumptions.

For the sake of demonstration, we will assume the standard assumptions of heavy-tailed stochastic optimization where the mean of $\xi$ is equal to zero and the variance of $\xi$ is upper bounded by $\sigma$, as is done in~\cite{gorbunov2020stochastic}. In this specific setting, our algorithm remains the same but our analysis becomes tighter in \Cref{item:tPk_is_small} of \Cref{claim:tildeP}. Our original analysis yields an exponential dependence on $1/\eta$ due to the upper bound on $k$, yet with concentration on the noise, we can show that $k = \poly(1/\alpha, 1/\eta)$. 

Due to our specific construction of $\tilde{P}(k)$ in the proof of \Cref{claim:tildeP} of \Cref{item:tPi_bigger_than_Pi}, we have that $k\tilde{P}(k) \le O(k^{-1/3})$ and $\alpha/2 \le \tilde{P}(1) \le \sum_{j=1}^k \tilde{P}(j)$. Thus, when $k \ge O(1/(\alpha \eta)^3)$, the inequality of \Cref{item:tPk_is_small} is satisfied. The fact that $\alpha/2 \le \tilde{P}(1)$ is given, and thus, what remains is to show $\tilde{P}(k) \le k^{-4/3}$, which follows by splitting the summation in $\tilde{P}(i,z)$ over $j=1,...,k-1$ into $j \in [1, k^{2/3}]$ and $j \in (k^{2/3}, k-1]$. The sum over $[1, k^{2/3}]$ is upper bounded by $O(k^{-4/3})$ and the sum over $(k^{2/3}, k-1]$ is upper bounded by $O(A^{-2}k^{-4/3}) = O(\alpha k^{-4/3})$ using Chebyshev’s inequality. Therefore, $\tilde{P}(k) \le k^{-4/3}$ and $k = \poly(1/\alpha, 1/\eta)$ in Claim 3.3.

Ultimately, it would be an interesting direction to identify conditions necessary and sufficient for a polynomial dependence on $1/\eta$.

\subsection{Small Fraction of Adversarial Corruptions}

If, on the other hand, at each point you have access to gradients $v(x)$ such that with probability $1-\eps$ you see $\nabla f(x, \gamma)$, which concentrates around $\nabla f(x, \gamma)$ and with probability $\eps$ you see an arbitrary point, then one can use robust mean estimation instead of our location estimation algorithm and recover guarentees which are correct upto an error of $O(\sqrt{\eps})$.

In this case, there is no need for list-decoding or robust location estimation. }

\end{document}